\newtheorem{theorem}{Theorem}
\newtheorem{proposition}{Proposition}
\newtheorem{definition}{Definition}
\newtheorem{lemma}{Lemma}
\newtheorem{corollary}{Corollary}
\newenvironment{proof}[1][Proof]{\noindent \textbf{#1.} }{\qedsymbol}
\newcommand{\qedsymbol}{\hspace{\fill}\rule{1.5ex}{1.5ex}}
\let\pdfoutput=\undefined\fi
\chardef\@x10\chardef\@xv60
\def\tcitime{
\def\@time{%
  \@minute\time\@hour\@minute\divide\@hour\@xv
  \ifnum\@hour<\@x 0\fi\the\@hour:%
  \multiply\@hour\@xv\advance\@minute-\@hour
  \ifnum\@minute<\@x 0\fi\the\@minute
  }}%
\def\x@hyperref#1#2#3{%
   \catcode`\~ = 12
   \catcode`\$ = 12
   \catcode`\_ = 12
   \catcode`\# = 12
   \catcode`\& = 12
   \y@hyperref{#1}{#2}{#3}%
}
\def\y@hyperref#1#2#3#4{%
   #2\ref{#4}#3
   \catcode`\~ = 13
   \catcode`\$ = 3
   \catcode`\_ = 8
   \catcode`\# = 6
   \catcode`\& = 4
}
\def\QCTOpt[#1]#2{%
  \def\QCTOptB{#1}
  \def\QCTOptA{#2}
}
\def\QCTNOpt#1{%
  \def\QCTOptA{#1}
  \let\QCTOptB\empty
}
\def\Qct{%
  \@ifnextchar[{%
    \QCTOpt}{\QCTNOpt}
}
\def\QCBOpt[#1]#2{%
  \def\QCBOptB{#1}%
  \def\QCBOptA{#2}%
}
\def\QCBNOpt#1{%
  \def\QCBOptA{#1}%
  \let\QCBOptB\empty
}
\def\Qcb{%
  \@ifnextchar[{%
    \QCBOpt}{\QCBNOpt}%
}
\def\PrepCapArgs{%
  \ifx\QCBOptA\empty
    \ifx\QCTOptA\empty
      {}%
    \else
      \ifx\QCTOptB\empty
        {\QCTOptA}%
      \else
        [\QCTOptB]{\QCTOptA}%
      \fi
    \fi
  \else
    \ifx\QCBOptA\empty
      {}%
    \else
      \ifx\QCBOptB\empty
        {\QCBOptA}%
      \else
        [\QCBOptB]{\QCBOptA}%
      \fi
    \fi
  \fi
}
\def\GRAPHICSPS#1{%
 \ifcase\GRAPHICSTYPE
   \special{ps: #1}%
 \or
   \special{language "PS", include "#1"}%
 \fi
}%
\def\graffile#1#2#3#4{%
    \bgroup
	   \@inlabelfalse
       \leavevmode
       \@ifundefined{bbl@deactivate}{\def~{\string~}}{\activesoff}%
        \raise -#4 \BOXTHEFRAME{%
           \hbox to #2{\raise #3\hbox to #2{\null #1\hfil}}}%
    \egroup
}%
\def\draftbox#1#2#3#4{%
 \leavevmode\raise -#4 \hbox{%
  \frame{\rlap{\protect\tiny #1}\hbox to #2%
   {\vrule height#3 width\z@ depth\z@\hfil}%
  }%
 }%
}%
\let\nographics=\@msidraft
\newif\ifwasdraft
\def\GRAPHIC#1#2#3#4#5{%
   \ifnum\@msidraft=\@ne\draftbox{#2}{#3}{#4}{#5}%
   \else\graffile{#1}{#3}{#4}{#5}%
   \fi
}
\def\addtoLaTeXparams#1{%
    \edef\LaTeXparams{\LaTeXparams #1}}%
\newif\ifBoxFrame \BoxFramefalse
\newif\ifOverFrame \OverFramefalse
\newif\ifUnderFrame \UnderFramefalse
\def\BOXTHEFRAME#1{%
   \hbox{%
      \ifBoxFrame
         \frame{#1}%
      \else
         {#1}%
      \fi
   }%
}
\def\doFRAMEparams#1{\BoxFramefalse\OverFramefalse\UnderFramefalse\readFRAMEparams#1\end}%
\def\readFRAMEparams#1{%
 \ifx#1\end%
  \let\next=\relax
  \else
  \ifx#1i\dispkind=\z@\fi
  \ifx#1d\dispkind=\@ne\fi
  \ifx#1f\dispkind=\tw@\fi
  \ifx#1t\addtoLaTeXparams{t}\fi
  \ifx#1b\addtoLaTeXparams{b}\fi
  \ifx#1p\addtoLaTeXparams{p}\fi
  \ifx#1h\addtoLaTeXparams{h}\fi
  \ifx#1X\BoxFrametrue\fi
  \ifx#1O\OverFrametrue\fi
  \ifx#1U\UnderFrametrue\fi
  \ifx#1w
    \ifnum\@msidraft=1\wasdrafttrue\else\wasdraftfalse\fi
    \@msidraft=\@ne
  \fi
  \let\next=\readFRAMEparams
  \fi
 \next
 }%
\def\IFRAME#1#2#3#4#5#6{%
      \bgroup
      \let\QCTOptA\empty
      \let\QCTOptB\empty
      \let\QCBOptA\empty
      \let\QCBOptB\empty
      #6%
      \parindent=0pt
      \leftskip=0pt
      \rightskip=0pt
      \setbox0=\hbox{\QCBOptA}%
      \@tempdima=#1\relax
      \ifOverFrame
          \typeout{This is not implemented yet}%
          \show\HELP
      \else
         \ifdim\wd0>\@tempdima
            \advance\@tempdima by \@tempdima
            \ifdim\wd0 >\@tempdima
               \setbox1 =\vbox{%
                  \unskip\hbox to \@tempdima{\hfill\GRAPHIC{#5}{#4}{#1}{#2}{#3}\hfill}%
                  \unskip\hbox to \@tempdima{\parbox[b]{\@tempdima}{\QCBOptA}}%
               }%
               \wd1=\@tempdima
            \else
               \textwidth=\wd0
               \setbox1 =\vbox{%
                 \noindent\hbox to \wd0{\hfill\GRAPHIC{#5}{#4}{#1}{#2}{#3}\hfill}\\%
                 \noindent\hbox{\QCBOptA}%
               }%
               \wd1=\wd0
            \fi
         \else
            \ifdim\wd0>0pt
              \hsize=\@tempdima
              \setbox1=\vbox{%
                \unskip\GRAPHIC{#5}{#4}{#1}{#2}{0pt}%
                \break
                \unskip\hbox to \@tempdima{\hfill \QCBOptA\hfill}%
              }%
              \wd1=\@tempdima
           \else
              \hsize=\@tempdima
              \setbox1=\vbox{%
                \unskip\GRAPHIC{#5}{#4}{#1}{#2}{0pt}%
              }%
              \wd1=\@tempdima
           \fi
         \fi
         \@tempdimb=\ht1
         \advance\@tempdimb by -#2
         \advance\@tempdimb by #3
         \leavevmode
         \raise -\@tempdimb \hbox{\box1}%
      \fi
      \egroup%
}%
\def\DFRAME#1#2#3#4#5{%
  \vspace\topsep
  \hfil\break
  \bgroup
     \leftskip\@flushglue
	 \rightskip\@flushglue
	 \parindent\z@
	 \parfillskip\z@skip
     \let\QCTOptA\empty
     \let\QCTOptB\empty
     \let\QCBOptA\empty
     \let\QCBOptB\empty
	 \vbox\bgroup
        \ifOverFrame 
           #5\QCTOptA\par
        \fi
        \GRAPHIC{#4}{#3}{#1}{#2}{\z@}%
        \ifUnderFrame 
           \break#5\QCBOptA
        \fi
	 \egroup
  \egroup
  \vspace\topsep
  \break
}%
\def\FFRAME#1#2#3#4#5#6#7{%
  \@ifundefined{floatstyle}
    {
     \begin{figure}[#1]%
    }
    {
	 \ifx#1h
      \begin{figure}[H]%
	 \else
      \begin{figure}[#1]%
	 \fi
	}
  \let\QCTOptA\empty
  \let\QCTOptB\empty
  \let\QCBOptA\empty
  \let\QCBOptB\empty
  \ifOverFrame
    #4
    \ifx\QCTOptA\empty
    \else
      \ifx\QCTOptB\empty
        \caption{\QCTOptA}%
      \else
        \caption[\QCTOptB]{\QCTOptA}%
      \fi
    \fi
    \ifUnderFrame\else
      \label{#5}%
    \fi
  \else
    \UnderFrametrue%
  \fi
  \begin{center}\GRAPHIC{#7}{#6}{#2}{#3}{\z@}\end{center}%
  \ifUnderFrame
    #4
    \ifx\QCBOptA\empty
      \caption{}%
    \else
      \ifx\QCBOptB\empty
        \caption{\QCBOptA}%
      \else
        \caption[\QCBOptB]{\QCBOptA}%
      \fi
    \fi
    \label{#5}%
  \fi
  \end{figure}%
 }%
\def\makeactives{
  \catcode`\"=\active
  \catcode`\;=\active
  \catcode`\:=\active
  \catcode`\'=\active
  \catcode`\~=\active
}
   \gdef\activesoff{%
      \def"{\string"}%
      \def;{\string;}%
      \def:{\string:}%
      \def'{\string'}%
      \def~{\string~}%
    }
\def\FRAME#1#2#3#4#5#6#7#8{%
 \bgroup
 \ifnum\@msidraft=\@ne
   \wasdrafttrue
 \else
   \wasdraftfalse%
 \fi
 \def\LaTeXparams{}%
 \dispkind=\z@
 \def\LaTeXparams{}%
 \doFRAMEparams{#1}%
 \ifnum\dispkind=\z@\IFRAME{#2}{#3}{#4}{#7}{#8}{#5}\else
  \ifnum\dispkind=\@ne\DFRAME{#2}{#3}{#7}{#8}{#5}\else
   \ifnum\dispkind=\tw@
    \edef\@tempa{\noexpand\FFRAME{\LaTeXparams}}%
    \@tempa{#2}{#3}{#5}{#6}{#7}{#8}%
    \fi
   \fi
  \fi
  \ifwasdraft\@msidraft=1\else\@msidraft=0\fi{}%
  \egroup
 }%
\def\TEXUX#1{"texux"}
\def\limfunc#1{\mathop{\rm #1}}%
\long\def\QQQ#1#2{%
     \long\expandafter\def\csname#1\endcsname{#2}}%
\long\def\QQA#1#2{}%
\def\QTR#1#2{{\csname#1\endcsname {#2}}}%
\def\EXPAND#1[#2]#3{}%
\def\NOEXPAND#1[#2]#3{}%
\def\LaTeXparent#1{}%
\def\ChildStyles#1{}%
\def\ChildDefaults#1{}%
\def\QTagDef#1#2#3{}%
  \providecommand{\UNICODE}[2][]{\protect\rule{.1in}{.1in}}
  \providecommand{\U}[1]{\protect\rule{.1in}{.1in}}
\def\QQfnmark#1{\footnotemark}
 \def\abstract{%
  \if@twocolumn
   \section*{Abstract (Not appropriate in this style!)}%
   \else \small 
   \begin{center}{\bf Abstract\vspace{-.5em}\vspace{\z@}}\end{center}%
   \quotation 
   \fi
  }%
   \def\registered{\relax\ifmmode{}\r@gistered
                    \else$\m@th\r@gistered$\fi}%
 \def\r@gistered{^{\ooalign
  {\hfil\raise.07ex\hbox{$\scriptstyle\rm\text{R}$}\hfil\crcr
  \mathhexbox20D}}}}{}%
\newdimen\theight
\def\newfmtname{LaTeX2e}
  \DeclareOldFontCommand{\rm}{\normalfont\rmfamily}{\mathrm}
  \DeclareOldFontCommand{\sf}{\normalfont\sffamily}{\mathsf}
  \DeclareOldFontCommand{\tt}{\normalfont\ttfamily}{\mathtt}
  \DeclareOldFontCommand{\bf}{\normalfont\bfseries}{\mathbf}
  \DeclareOldFontCommand{\it}{\normalfont\itshape}{\mathit}
  \DeclareOldFontCommand{\sl}{\normalfont\slshape}{\@nomath\sl}
  \DeclareOldFontCommand{\sc}{\normalfont\scshape}{\@nomath\sc}
\def\alpha{{\Greekmath 010B}}%
\def\beta{{\Greekmath 010C}}%
\def\gamma{{\Greekmath 010D}}%
\def\delta{{\Greekmath 010E}}%
\def\epsilon{{\Greekmath 010F}}%
\def\zeta{{\Greekmath 0110}}%
\def\eta{{\Greekmath 0111}}%
\def\theta{{\Greekmath 0112}}%
\def\iota{{\Greekmath 0113}}%
\def\kappa{{\Greekmath 0114}}%
\def\lambda{{\Greekmath 0115}}%
\def\mu{{\Greekmath 0116}}%
\def\nu{{\Greekmath 0117}}%
\def\xi{{\Greekmath 0118}}%
\def\pi{{\Greekmath 0119}}%
\def\rho{{\Greekmath 011A}}%
\def\sigma{{\Greekmath 011B}}%
\def\tau{{\Greekmath 011C}}%
\def\upsilon{{\Greekmath 011D}}%
\def\phi{{\Greekmath 011E}}%
\def\chi{{\Greekmath 011F}}%
\def\psi{{\Greekmath 0120}}%
\def\omega{{\Greekmath 0121}}%
\def\varepsilon{{\Greekmath 0122}}%
\def\vartheta{{\Greekmath 0123}}%
\def\varpi{{\Greekmath 0124}}%
\def\varrho{{\Greekmath 0125}}%
\def\varsigma{{\Greekmath 0126}}%
\def\varphi{{\Greekmath 0127}}%
\def\nabla{{\Greekmath 0272}}
\def\FindBoldGroup{%
   {\setbox0=\hbox{$\mathbf{x\global\edef\theboldgroup{\the\mathgroup}}$}}%
}
\def\Greekmath#1#2#3#4{%
    \if@compatibility
        \ifnum\mathgroup=\symbold
           \mathchoice{\mbox{\boldmath$\displaystyle\mathchar"#1#2#3#4$}}%
                      {\mbox{\boldmath$\textstyle\mathchar"#1#2#3#4$}}%
                      {\mbox{\boldmath$\scriptstyle\mathchar"#1#2#3#4$}}%
                      {\mbox{\boldmath$\scriptscriptstyle\mathchar"#1#2#3#4$}}%
        \else
           \mathchar"#1#2#3#4%
        \fi 
    \else 
        \FindBoldGroup
        \ifnum\mathgroup=\theboldgroup 
           \mathchoice{\mbox{\boldmath$\displaystyle\mathchar"#1#2#3#4$}}%
                      {\mbox{\boldmath$\textstyle\mathchar"#1#2#3#4$}}%
                      {\mbox{\boldmath$\scriptstyle\mathchar"#1#2#3#4$}}%
                      {\mbox{\boldmath$\scriptscriptstyle\mathchar"#1#2#3#4$}}%
        \else
           \mathchar"#1#2#3#4%
        \fi     	    
	  \fi}
\newif\ifGreekBold  \GreekBoldfalse
\let\SAVEPBF=\pbf
\def\pbf{\GreekBoldtrue\SAVEPBF}%
  \newcounter{equationnumber}  
  \def\mathletters{%
     \addtocounter{equation}{1}
     \edef\@currentlabel{\theequation}%
     \setcounter{equationnumber}{\c@equation}
     \setcounter{equation}{0}%
     \edef\theequation{\@currentlabel\noexpand\alph{equation}}%
  }
    \def\BibTeX{{\rm B\kern-.05em{\sc i\kern-.025em b}\kern-.08em
                 T\kern-.1667em\lower.7ex\hbox{E}\kern-.125emX}}}{}%
\def\AmS{{\protect\usefont{OMS}{cmsy}{m}{n}%
                A\kern-.1667em\lower.5ex\hbox{M}\kern-.125emS}}}{}%
\def\@@eqncr{\let\@tempa\relax
    \ifcase\@eqcnt \def\@tempa{& & &}\or \def\@tempa{& &}%
      \else \def\@tempa{&}\fi
     \@tempa
     \if@eqnsw
        \iftag@
           \@taggnum
        \else
           \@eqnnum\stepcounter{equation}%
        \fi
     \fi
     \global\tag@false
     \global\@eqnswtrue
     \global\@eqcnt\z@\cr}
\def\TCItag{\@ifnextchar*{\@TCItagstar}{\@TCItag}}
\def\@TCItag#1{%
    \global\tag@true
    \global\def\@taggnum{(#1)}}
\def\@TCItagstar*#1{%
    \global\tag@true
    \global\def\@taggnum{#1}}
\def\dsum{\mathop{\displaystyle \sum }}%
\def\dbigcup{\mathop{\displaystyle \bigcup }}%
\def\ExitTCILatex{\makeatother }
\if@compatibility\message{amsmath already loaded}\fi\aftergroup\ExitTCILatex}
\if@compatibility\message{amstex already loaded}\fi\aftergroup\ExitTCILatex}
\if@compatibility\message{amsgen already loaded}\fi\aftergroup\ExitTCILatex}
\let\DOTSI\relax
\def\RIfM@{\relax\ifmmode}%
\def\FN@{\futurelet\next}%
\def\iint{\DOTSI\intno@\tw@\FN@\ints@}%
\def\iiint{\DOTSI\intno@\thr@@\FN@\ints@}%
\def\iiiint{\DOTSI\intno@4 \FN@\ints@}%
\def\idotsint{\DOTSI\intno@\z@\FN@\ints@}%
\def\ints@{\findlimits@\ints@@}%
\newif\iflimtoken@
\newif\iflimits@
\def\findlimits@{\limtoken@true\ifx\next\limits\limits@true
 \else\ifx\next\nolimits\limits@false\else
 \limtoken@false\ifx\ilimits@\nolimits\limits@false\else
 \ifinner\limits@false\else\limits@true\fi\fi\fi\fi}%
\def\multint@{\int\ifnum\intno@=\z@\intdots@                          
 \else\intkern@\fi                                                    
 \ifnum\intno@>\tw@\int\intkern@\fi                                   
 \ifnum\intno@>\thr@@\int\intkern@\fi                                 
 \int}
\def\multintlimits@{\intop\ifnum\intno@=\z@\intdots@\else\intkern@\fi
 \ifnum\intno@>\tw@\intop\intkern@\fi
 \ifnum\intno@>\thr@@\intop\intkern@\fi\intop}%
\def\intic@{%
    \mathchoice{\hskip.5em}{\hskip.4em}{\hskip.4em}{\hskip.4em}}%
\def\negintic@{\mathchoice
 {\hskip-.5em}{\hskip-.4em}{\hskip-.4em}{\hskip-.4em}}%
\def\ints@@{\iflimtoken@                                              
 \def\ints@@@{\iflimits@\negintic@
   \mathop{\intic@\multintlimits@}\limits                             
  \else\multint@\nolimits\fi                                          
  \eat@}
 \else                                                                
 \def\ints@@@{\iflimits@\negintic@
  \mathop{\intic@\multintlimits@}\limits\else
  \multint@\nolimits\fi}\fi\ints@@@}%
\def\intkern@{\mathchoice{\!\!\!}{\!\!}{\!\!}{\!\!}}%
\def\plaincdots@{\mathinner{\cdotp\cdotp\cdotp}}%
\def\intdots@{\mathchoice{\plaincdots@}%
 {{\cdotp}\mkern1.5mu{\cdotp}\mkern1.5mu{\cdotp}}%
 {{\cdotp}\mkern1mu{\cdotp}\mkern1mu{\cdotp}}%
 {{\cdotp}\mkern1mu{\cdotp}\mkern1mu{\cdotp}}}%
\def\RIfM@{\relax\protect\ifmmode}
\def\text{\RIfM@\expandafter\text@\else\expandafter\mbox\fi}
\let\nfss@text\text
\def\text@#1{\mathchoice
   {\textdef@\displaystyle\f@size{#1}}%
   {\textdef@\textstyle\tf@size{\firstchoice@false #1}}%
   {\textdef@\textstyle\sf@size{\firstchoice@false #1}}%
   {\textdef@\textstyle \ssf@size{\firstchoice@false #1}}%
   \glb@settings}
\def\textdef@#1#2#3{\hbox{{%
                    \everymath{#1}%
                    \let\f@size#2\selectfont
                    #3}}}
\newif\iffirstchoice@
\def\Let@{\relax\iffalse{\fi\let\\=\cr\iffalse}\fi}%
\def\vspace@{\def\vspace##1{\crcr\noalign{\vskip##1\relax}}}%
\def\multilimits@{\bgroup\vspace@\Let@
 \baselineskip\fontdimen10 \scriptfont\tw@
 \advance\baselineskip\fontdimen12 \scriptfont\tw@
 \lineskip\thr@@\fontdimen8 \scriptfont\thr@@
 \lineskiplimit\lineskip
 \vbox\bgroup\ialign\bgroup\hfil$\m@th\scriptstyle{##}$\hfil\crcr}%
\def\Sb{_\multilimits@}%
\def\endSb{\crcr\egroup\egroup\egroup}%
\def\Sp{^\multilimits@}%
\newdimen\ex@
\def\rightarrowfill@#1{$#1\m@th\mathord-\mkern-6mu\cleaders
 \hbox{$#1\mkern-2mu\mathord-\mkern-2mu$}\hfill
 \mkern-6mu\mathord\rightarrow$}%
\def\leftarrowfill@#1{$#1\m@th\mathord\leftarrow\mkern-6mu\cleaders
 \hbox{$#1\mkern-2mu\mathord-\mkern-2mu$}\hfill\mkern-6mu\mathord-$}%
\def\leftrightarrowfill@#1{$#1\m@th\mathord\leftarrow
\mkern-6mu\cleaders
 \hbox{$#1\mkern-2mu\mathord-\mkern-2mu$}\hfill
 \mkern-6mu\mathord\rightarrow$}%
\def\overrightarrow{\mathpalette\overrightarrow@}%
\def\overrightarrow@#1#2{\vbox{\ialign{##\crcr\rightarrowfill@#1\crcr
 \noalign{\kern-\ex@\nointerlineskip}$\m@th\hfil#1#2\hfil$\crcr}}}%
\def\overleftarrow{\mathpalette\overleftarrow@}%
\def\overleftarrow@#1#2{\vbox{\ialign{##\crcr\leftarrowfill@#1\crcr
 \noalign{\kern-\ex@\nointerlineskip}$\m@th\hfil#1#2\hfil$\crcr}}}%
\def\overleftrightarrow{\mathpalette\overleftrightarrow@}%
\def\overleftrightarrow@#1#2{\vbox{\ialign{##\crcr
   \leftrightarrowfill@#1\crcr
 \noalign{\kern-\ex@\nointerlineskip}$\m@th\hfil#1#2\hfil$\crcr}}}%
\def\underrightarrow{\mathpalette\underrightarrow@}%
\def\underrightarrow@#1#2{\vtop{\ialign{##\crcr$\m@th\hfil#1#2\hfil
  $\crcr\noalign{\nointerlineskip}\rightarrowfill@#1\crcr}}}%
\def\underleftarrow{\mathpalette\underleftarrow@}%
\def\underleftarrow@#1#2{\vtop{\ialign{##\crcr$\m@th\hfil#1#2\hfil
  $\crcr\noalign{\nointerlineskip}\leftarrowfill@#1\crcr}}}%
\def\underleftrightarrow{\mathpalette\underleftrightarrow@}%
\def\underleftrightarrow@#1#2{\vtop{\ialign{##\crcr$\m@th
  \hfil#1#2\hfil$\crcr
 \noalign{\nointerlineskip}\leftrightarrowfill@#1\crcr}}}%
\def\qopnamewl@#1{\mathop{\operator@font#1}\nlimits@}
\let\nlimits@\displaylimits
\def\setboxz@h{\setbox\z@\hbox}
\def\varlim@#1#2{\mathop{\vtop{\ialign{##\crcr
 \hfil$#1\m@th\operator@font lim$\hfil\crcr
 \noalign{\nointerlineskip}#2#1\crcr
 \noalign{\nointerlineskip\kern-\ex@}\crcr}}}}
 \def\rightarrowfill@#1{\m@th\setboxz@h{$#1-$}\ht\z@\z@
  $#1\copy\z@\mkern-6mu\cleaders
  \hbox{$#1\mkern-2mu\box\z@\mkern-2mu$}\hfill
  \mkern-6mu\mathord\rightarrow$}
\def\leftarrowfill@#1{\m@th\setboxz@h{$#1-$}\ht\z@\z@
  $#1\mathord\leftarrow\mkern-6mu\cleaders
  \hbox{$#1\mkern-2mu\copy\z@\mkern-2mu$}\hfill
  \mkern-6mu\box\z@$}
\def\projlim{\qopnamewl@{proj\,lim}}
\def\injlim{\qopnamewl@{inj\,lim}}
\def\varinjlim{\mathpalette\varlim@\rightarrowfill@}
\def\varprojlim{\mathpalette\varlim@\leftarrowfill@}
\def\varliminf{\mathpalette\varliminf@{}}
\def\varliminf@#1{\mathop{\underline{\vrule\@depth.2\ex@\@width\z@
   \hbox{$#1\m@th\operator@font lim$}}}}
\def\varlimsup{\mathpalette\varlimsup@{}}
\def\varlimsup@#1{\mathop{\overline
  {\hbox{$#1\m@th\operator@font lim$}}}}
\def\align{\@verbatim \frenchspacing\@vobeyspaces \@alignverbatim
You are using the "align" environment in a style in which it is not defined.}
\let\csname endalign*\endcsname =\endtrivlist
\def\alignat{\@verbatim \frenchspacing\@vobeyspaces \@alignatverbatim
You are using the "alignat" environment in a style in which it is not defined.}
\let\csname endalignat*\endcsname =\endtrivlist
\def\xalignat{\@verbatim \frenchspacing\@vobeyspaces \@xalignatverbatim
You are using the "xalignat" environment in a style in which it is not defined.}
\let\csname endxalignat*\endcsname =\endtrivlist
\def\gather{\@verbatim \frenchspacing\@vobeyspaces \@gatherverbatim
You are using the "gather" environment in a style in which it is not defined.}
\let\csname endgather*\endcsname =\endtrivlist
\def\multiline{\@verbatim \frenchspacing\@vobeyspaces \@multilineverbatim
You are using the "multiline" environment in a style in which it is not defined.}
\let\csname endmultiline*\endcsname =\endtrivlist
\def\arrax{\@verbatim \frenchspacing\@vobeyspaces \@arraxverbatim
You are using a type of "array" construct that is only allowed in AmS-LaTeX.}
\def\tabulax{\@verbatim \frenchspacing\@vobeyspaces \@tabulaxverbatim
You are using a type of "tabular" construct that is only allowed in AmS-LaTeX.}
\let\csname endarrax*\endcsname =\endtrivlist
\let\csname endtabulax*\endcsname =\endtrivlist
 \def\endequation{%
     \ifmmode\ifinner 
      \iftag@
        \addtocounter{equation}{-1} 
        $\hfil
           \displaywidth\linewidth\@taggnum\egroup \endtrivlist
        \global\tag@false
        \global\@ignoretrue   
      \else
        $\hfil
           \displaywidth\linewidth\@eqnnum\egroup \endtrivlist
        \global\tag@false
        \global\@ignoretrue 
      \fi
     \else   
      \iftag@
        \addtocounter{equation}{-1} 
        \eqno \hbox{\@taggnum}
        \global\tag@false%
        $$\global\@ignoretrue
      \else
        \eqno \hbox{\@eqnnum}
        $$\global\@ignoretrue
      \fi
     \fi\fi
 } 
 \newif\iftag@ \tag@false
 \def\TCItag{\@ifnextchar*{\@TCItagstar}{\@TCItag}}
 \def\@TCItag#1{%
     \global\tag@true
     \global\def\@taggnum{(#1)}}
 \def\@TCItagstar*#1{%
     \global\tag@true
     \global\def\@taggnum{#1}}
     \def\tag{\@ifnextchar*{\@tagstar}{\@tag}}
     \def\@tag#1{%
         \global\tag@true
         \global\def\@taggnum{(#1)}}
     \def\@tagstar*#1{%
         \global\tag@true
         \global\def\@taggnum{#1}}
\def\dfrac#1#2{{\displaystyle {#1 \over #2}}}%
\def\b0{\mbox{\boldmath $0$}}
\def\baselinestretch{1.28}
\begin{document}

\title{\vspace{-2.1cm} {\Huge Optimal Linear Precoding Strategies for
Wideband Non-Cooperative Systems based on Game Theory-Part I: Nash Equilibria%
}}
\author{Gesualdo Scutari$^{1}$, Daniel P. Palomar$^{2}$, and Sergio
Barbarossa$^{1}$ \\
{\small E-mail: $\{$aldo.scutari,sergio$\}$@infocom.uniroma1.it,
palomar@ust.hk}\\
$^{1\text{ }}${\small Dpt. INFOCOM, Univ. of Rome \textquotedblleft La
Sapienza\textquotedblright, Via Eudossiana 18, 00184 Rome, Italy} \\
$^{2}$ {\small Dpt. of Electronic and Computer Eng., Hong Kong Univ. of
Science and Technology, Kowloon Hong Kong.}}
\date{{\small Submitted to IEEE \textit{Transactions on Signal Processing},
September 22, 2005.}\\
{\small Revised March 14, 2007. Accepted June 5, 2007.\thanks{%
This work was supported by the SURFACE project funded by the European
Community under Contract IST-4-027187-STP-SURFACE.}}\vspace{-0.9cm}}
\maketitle
\begin{abstract}
In this two-parts paper we propose a decentralized strategy, based on a
game-theoretic formulation, to find out the optimal precoding/multiplexing
matrices for a multipoint-to-multipoint communication system composed of a
set of wideband links sharing the same physical resources, i.e., time and
bandwidth. We assume, as optimality criterion, the achievement of a Nash
equilibrium and consider two alternative optimization problems: 1) the
competitive maximization of mutual information on each link, given
constraints on the transmit power and on the spectral mask imposed by the
radio spectrum regulatory bodies; and 2) the competitive maximization of the
transmission rate, using finite order constellations, under the same
constraints as above, plus a constraint on the average error probability. In
Part I of the paper, we start by showing that the solution set of both
noncooperative games is always nonempty and contains only pure strategies.
Then, we prove that the optimal precoding/multiplexing scheme for both games
leads to a channel diagonalizing structure, so that both matrix-valued
problems can be recast in a simpler unified vector power control game, with
no performance penalty. Thus, we study this simpler game and derive
sufficient conditions ensuring the uniqueness of the Nash equilibrium.
Interestingly, although derived under stronger constraints, incorporating
for example spectral mask constraints, our uniqueness conditions have
broader validity than previously known conditions. Finally, we assess the
goodness of the proposed decentralized strategy by comparing its performance
with the performance of a Pareto-optimal centralized scheme. To reach the
Nash equilibria of the game, in Part II, we propose alternative distributed
algorithms, along with their convergence conditions.
\end{abstract}

\vspace{-.5cm}



\section{Introduction and Motivation}

In this two-parts paper, we address the problem of finding the optimal
precoding/multiplexing strategy for a multiuser system composed of a set of $%
Q$ noncooperative wideband links, sharing the same physical resources, e.g.,
time and bandwidth. No multiplexing strategy is imposed a priori so that, in
principle, each user interferes with each other. Moreover, to avoid
excessive signaling and the need of coordination among users, we
assume that encoding/decoding on each link is performed independently of the
other links. Furthermore, no interference cancellation techniques are used
and thus multiuser interference is treated as additive, albeit colored, noise. We
consider block transmissions, as a general framework encompassing most
current schemes like, e.g., CDMA or OFDM systems (it is also a
capacity-lossless strategy for sufficiently large block length \cite{Hir88,
Ral98}). Thus, each source transmits a coded vector
\begin{equation}
\mathbf{x}_{q}=\mathbf{F}_{q}\mathbf{s}_{q},  \label{x_q-block}
\end{equation}%
where $\mathbf{s}_{q}$ is the $N\times 1$ information symbol vector and $%
\mathbf{F}_{q}$ is the $N\times N$ precoding matrix. Denoting with $\mathbf{H%
}_{rq}$ the channel matrix between source $r$ and destination $q$, the
sampled baseband block received by the $q$-th destination is (dropping the
block index)\footnote{%
For brevity of notation, we denote as source (destination) $q$ the source
(destination) of link $q$.}%
\begin{equation}
\mathbf{y}_{q}=\mathbf{H}_{qq}\mathbf{x}_{q}+\sum_{r\neq q=1}^{Q}\mathbf{H}%
_{rq}\mathbf{x}_{r}+\mathbf{w}_{q},  \label{vector I/O}
\end{equation}%
where $\mathbf{w}_{q}$ is a zero-mean circularly symmetric complex Gaussian
white noise vector with covariance matrix $\sigma _{q}^{2}\mathbf{I}$.\footnote{We consider only white noise for simplicity, but the extension to colored noise is straightforward along well-known guidelines.} The
second term on the right-hand side of (\ref{vector I/O}) represents the
Multi-User Interference (MUI) received by the $q$-th destination and caused
by the other active links. Treating MUI as additive noise, the estimated
symbol vector at the $q$-th receiver is
\begin{equation}
\widehat{\mathbf{s}}_{q}=D\left[ \mathbf{G}_{q}^{H}\mathbf{y}_{q}\right] ,
\label{Rx_block}
\end{equation}%
where $\mathbf{G}_{q}^{H}$ is the $N\times N$ receive matrix (linear
equalizer) and $D[\cdot ]$ denotes the decision operator that decides which
symbol vector has been transmitted.

The above system model is sufficiently general to incorporate many cases of
practical interest, such as: i) digital subscriber lines, where the matrices
$(\mathbf{F}_{q})_{q=1}^{Q}$ incorporate DFT precoding and power allocation,
whereas the MUI is mainly caused by near-end cross talk \cite{Starr-Cioffi
Book}; ii) cellular radio, where the matrices $(\mathbf{F}_{q})_{q=1}^{Q}$
contain the user codes within a given cell, whereas the MUI is essentially
intercell interference \cite{Goldsmith-Wicker}; iii) ad hoc wireless
networks, where there is no central unit assigning the coding/multiplexing
strategy to the users \cite{Akyildiz-Wang}. The I/O model in (\ref%
{vector I/O}) is particularly appropriate for studying \textit{cognitive
radio} systems \cite{Haykin}, where each user is allowed to re-use portions
of the already assigned spectrum in an adaptive way, depending on the
interference generated by other users. Many recent works have shown that
considerable performance gain can be achieved by exploiting some kind of
information at the transmitter side, either in single-user \cite{Ral98},
\cite{Barbarossa}-\cite{Palomar-Barbarossa} or in multiple access or
broadcast scenarios (see, e.g. \cite{Goldsmith-MIMO}). Here, we extend this
idea to the system described above assuming that each destination has
perfect knowledge of the channel from its source (but not of the channels
from the interfering sources) and of the interference covariance matrix.

Within this setup, the system design consists on finding the optimal matrix
set $(\mathbf{F}_{q},\mathbf{G}_{q})_{q=1}^{Q}$ according to some
performance measure. In this paper we focus on the following two
optimization problems: P.1) the maximization of mutual information on each
link, given constraints on the transmit power and on the spectral radiation
mask; and P.2) the maximization of the transmission rate on each link, using
finite order constellations, under the same constraints as above plus a
constraint on the average (uncoded) error probability. The spectral mask
constraints are useful to impose radiation limits over licensed bands, where
it is possible to transmit but only with a spectral density below a
specified value. Problem P.2 is motivated by the practical need of using
discrete constellations, as opposed to Gaussian distributed symbols.

Both problems P.1 and P.2 are multi-objective optimization problems \cite%
{Miettinen}, as the (information/ transmission) rate achieved in each link
constitutes a different single objective. Thus, in principle, the
optimization of the transceivers 
requires a centralized computation (see, e.g., \cite{Cendrillon sub04,
Yu_DSL} for a special case of problem P.1, with diagonal transmissions and
no spectral mask constraints). This would entail a high complexity, a heavy
signaling burden, and the need for coordination among the users. Conversely,
our interest is focused on finding distributed algorithms to compute $(\mathbf{F}_{q},\mathbf{G}_{q})_{q=1}^{Q}$ with no
centralized control. To achieve this goal, we formulate the system design within
a game theory framework. More specifically, we cast both
problems P.1 and P.2 as strategic noncooperative (matrix-valued) games,
where every link is a player that competes against the others by choosing
its transceiver pair $(\mathbf{F}_{q},\mathbf{G}_{q})$ to maximize its own
objective (payoff) function. This converts the original multi-objective
optimization problem into a set of mutually coupled competitive
single-objective optimization problems (the mutual coupling is precisely
what makes the problem hard to solve). Within this perspective, we thus
adopt, as optimality criterion, the achievement of a Nash equilibrium, i.e.,
the users' strategy profile where every player is unilaterally optimum, in
the sense that no player is willing to change its own strategy as this would
cause a performance loss \cite{Osborne}-\cite{Rosen}. This criterion is
certainly useful to devise decentralized coding strategies. However, the
game theoretical formulation poses some fundamental questions: 1) Under
which conditions does a NE exist and is unique? 2) What is the
performance penalty resulting from the use of a decentralized strategy as
opposed to the Pareto-optimal centralized approach? 3) How can the Nash
equilibria be reached in a totally distributed way? 4) What can be said
about the convergence conditions of distributed algorithms? In Part I of this
two-part paper, we provide an answer to questions 1) and 2). The answer to
questions 3) and 4) is given in Part II.

Because of the inherently competitive nature of a multi-user system, it is
not surprising that game theory has been already adopted to solve many
problems in communications. Current works in the field can be divided in two
large classes, according to the kind of games dealt with: \emph{scalar} and
\emph{vector }power control games. In scalar games, each user has only one
degree of freedom to optimize, typically the transmit power or rate, and the
solution has been provided in a very elegant framework, exploiting the
theory of the so called \textit{standard} functions \cite{Yates-Jsac}-\cite%
{Saraydar-Mandayam-singlecell}. The vector games are clearly more
complicated, as each user has several degrees of freedom to optimize, like
user codes or power allocation across frequency bins, and the approach based
on the \textquotedblleft standard\textquotedblright\ formulation of \cite%
{Yates-Jsac}-\cite{Sung-Leung} is no longer valid. \ A vector power control
game was proposed in \cite{Yu} to maximize the information rates (under
constraints on the transmit power) of two users in a DSL system, modeled as
a frequency-selective Gaussian interference channel. The problem was
extended to an arbitrary number of users in
\cite{ChungISIT03}-\cite{Luo-Pang}. Vector power control problem in
flat-fading Gaussian interference channels was addressed in \cite{Tse}.

The original contributions of this paper with respect to the current
literature on vector games \cite{Yu}-\cite{Tse} are listed next. We consider
two alternative \emph{matrix-valued }games, whereas in \cite{Yu}-\cite%
{Scutari-Barbarossa-SPAWC03}, \cite{Tse} the authors studied a \emph{vector }%
power control game which can be obtained from P.1 as a special case, when the
diagonal transmission is imposed a priori and there are no spectral mask constraints. Problem P.2, at the best of the authors' knowledge,
is totally new. The matrix nature of the players' strategies and the
presence of spectral mask constraints make the analysis of both games P.1
and P.2 complicated and none of the results in \cite{Yu}-\cite{Tse} can be
successfully applied. Our first contribution is to show that the solution
set of both games is always nonempty and contains only pure (i.e.,
deterministic) strategies. More important, we prove that the diagonal
transmission from each user through the channel eigenmodes (i.e., the
frequency bins) is optimal, irrespective of the channel state, power budget,
spectral mask constraints, and interference levels. This result yields a
strong simplification of the original optimization, as it converts both
complicated \emph{matrix-valued} problems P.1 and P.2 into a simpler unified
\emph{vector} power control game, with no performance penalty.
Interestingly, such a simpler vector game contains, as a special case, the
game studied in \cite{Yu}-\cite{Scutari-Barbarossa-SPAWC03}, when the users
are assumed to transmit with the same (transmit) power and no spectral mask
constraints are imposed. The second important contribution of the paper is
to provide sufficient conditions for the uniqueness of the NE of our vector
power control game that have broader validity than those given in \cite{Yu}-%
\cite{Scutari-Barbarossa-SPAWC03}, \cite{Tse} (without mask constraints)
and, more recently, in \cite{Luo-Pang} (including mask constraints). Our
uniqueness condition, besides being valid in a broader context than those
given in \cite{Yu}-\cite{Tse}, exhibits also an interesting behavior not
deducible from the cited papers: It is satisfied as soon as the interlink
distance exceeds a critical value, almost \emph{irrespective} of the channel
frequency response. Finally, to assess the performance of the proposed
game-theoretic approach, we compare the Nash equilibria of the game with the
Pareto-optimal centralized solutions to the corresponding multi-objective
optimization. We also show how to modify the original game in order to make
the Nash equilibria of the modified game to coincide with the Pareto-optimal
solutions. Not surprisingly, the Nash equilibria of the modified game can be
reached at the price of a significant increase of signaling and coordination
among the users.

The paper is organized as follows. 
In Section \ref{System Model Section}, the optimization problems P.1 and P.2
are formulated as strategic noncooperative games. Section \ref{Unified
formulation} proves the optimality of the diagonal transmission and in
Section \ref{Sec:Existence-Uniqueness} the conditions for the existence and
uniqueness of the NE are derived. Section \ref{Sec: OFDA/CDMA interpretation}
gives a physical interpretation of the NE, with particular emphasis on the
way each user allocates power across the available subchannels. Section \ref%
{Sec:How_good_NE} assesses the goodness of the NE by comparing the
performance of the decentralized game-theoretic approach with the
centralized Pareto-optimal solution. Numerical results are given in Section %
\ref{Sec:Numerical Results}. Finally, in Section \ref{Conclusions}, the
conclusions are drawn. Part of this work already appeared in \cite%
{Scutari-Barbarossa-ICASSP, Scutari-Barbarossa-SPAWC03, Scutari_Thesis,
Scutari-Tech-Rep}. \vspace{-0.2cm} \vspace{-0.2cm}

\section{System Model and Problem Formulation\label{System Model Section}}

In this section we clarify the assumptions and constraints underlying the
model (\ref{vector I/O}) and we formulate the optimization problem addressed
in this paper explicitly. \vspace{-0.4cm}

\subsection{System model}

Given the I/O system in (\ref{vector I/O}), we make the following
assumptions:

\noindent \textbf{A.1 }Neither user coordination nor interference
cancellation is allowed; consequently encoding/decoding on each link is
performed independently of the other links. Hence, the overall system in (%
\ref{vector I/O}) is modeled as a \emph{vector} Gaussian interference
channel \cite{Cover}, where MUI is treated as additive colored noise;

\noindent \textbf{A.2} Each channel is modeled as a FIR filter of maximum
order $L_{h}$ and it is assumed to change sufficiently slowly to be
considered fixed during the whole transmission, so that the information
theoretical results are meaningful;

\noindent \textbf{A.3} In the case of frequency selective channels, with
maximum channel order $L_{h}$, a cyclic prefix of length $L\geq L_{h}$ is
incorporated on each transmitted block $\mathbf{x}_{q}$ in (\ref{x_q-block});

\noindent\textbf{A.4} A (quasi-) block synchronization among the users is
assumed, so that all streams are parsed into blocks of equal length, having
the same temporization, within an uncertainty at most equal to the cyclic
prefix length;

\noindent \textbf{A.5} The channel from each source to its own destination
is known to the intended receiver, but not to the other terminals; an error-free estimate of MUI covariance matrix is supposed to be available at each receiver.
Based on this information, each
destination computes the optimal precoding matrix for its own link and
transmits it back to its transmitter through a low (error-free) bit rate
feedback channel.\footnote{%
In practice, both estimation and feedback are inevitably affected by errors.
This scenario can be studied by extending our formulation to games with
partial information \cite{Osborne, Aubin-book}, but this goes beyond the
scope of the present paper.}

Assumption \textbf{A.1} is motivated by the need of finding solutions,
possibly sub-optimal, but that can be obtained through simple distributed
algorithms, that require no extra signaling among the
users. This assumption is well motivated in many practical scenarios, where
additional limitations such as decoder complexity, delay constraints, etc.,
may preclude the use of interference cancellation techniques. Assumption
\textbf{A.3} entails a rate loss by a factor $N/(N+L)$, but it
facilitates symbol recovery. For practical systems, $N$ is sufficiently
large with respect to $L$, so that the loss due to CP insertion is
negligible. Observe that, thanks to the CP insertion, each matrix $%
\mathbf{H}_{rq}$ in (\ref{vector I/O}) resulting after having discarded the
guard interval at the receiver, is a Toeplitz circulant matrix. Thus, $%
\mathbf{H}_{rq}$ is diagonalized as $\mathbf{H}_{rq}=\mathbf{WD}_{rq}\mathbf{%
W}^{H}$, with $\mathbf{W\in
\mathbb{C}
}^{N\times N}$ denoting the normalized IFFT matrix, i.e., $\left[ \mathbf{W}%
\right] _{ij}\triangleq e^{j2\pi (i-1)(j-1)/N}/\sqrt{N}$ for $i,j=1,\ldots
,N $ and $\mathbf{D}_{rq}$ is a $N\times N$ diagonal matrix, where $\left[
\mathbf{D}_{rq}\right] _{kk}%
\triangleq%
\bar{H}_{rq}(k)/\sqrt{d_{rq}^{\gamma }}$ is the frequency-response of the
channel between source $r$ and destination $q$, including the path-loss $%
d_{rq}^{\gamma }$\ with exponent $\gamma $ and normalized fading $\bar{H}%
_{rq}(k),$ with $d_{rq}$ denoting the distance between transmitter $r$
and receiver $q.$

The physical constraints required by the applications are:

\noindent \textbf{Co.1} Maximum transmit power for each transmitter, i.e.,
\begin{equation}
E\left\{ \left\Vert \mathbf{x}_{q}\right\Vert _{2}^{2}\right\} =\frac{1}{N}%
\mathsf{Tr}%
\left( \mathbf{F}_{q}\mathbf{F}_{q}^{H}\right) \leq P_{q},
\label{power-constraint}
\end{equation}%
where $P_{q}$ is power in units of energy per transmitted symbol, and the
symbols are assumed to be, without loss of generality (w.l.o.g.), zero-mean
unit energy uncorrelated symbols, i.e., $E\left\{ \mathbf{s}_{q}(n)\mathbf{s}%
_{q}^{H}(n)\right\} =\mathbf{I}$. Note that different symbols may be drawn
from different constellations.

\noindent \textbf{Co.2} Spectral mask constraint, i.e.,
\begin{equation}
E\left\{ \left\vert [\mathbf{W}^{H}\mathbf{F}_{q}\mathbf{s}%
_{q}]_{k}\right\vert ^{2}\right\} =\left[ \mathbf{W}^{H}\mathbf{F}_{q}%
\mathbf{F}_{q}^{H}\mathbf{W}\right] _{kk}\leq \overline{p}_{q}^{\max
}(k),\quad \forall k\in \{1,\ldots ,N\},  \label{mask_on_F_q_}
\end{equation}%
where $\overline{p}_{q}^{\max }(k)$ represents the maximum power user $q$
is allowed to allocate on the $k$-th frequency bin.
\footnote{%
Observe that if $(1/N)\dsum_{k}\overline{p}_{q}^{\max }(k)\leq P_{q},$ we
obtain the trivial solution $[\mathbf{W}^{H}\mathbf{F}_{q}\mathbf{F}_{q}^{H}%
\mathbf{W}]_{kk}=\overline{p}_{q}^{\max }(k),$ $\forall k.$} Constraints in (%
\ref{mask_on_F_q_}) are imposed by radio spectrum regulations and attempt to
limit the amounts of interference generated by each transmitter over some
specified frequency bands.

\noindent \textbf{Co.3} Maximum tolerable (uncoded) symbol error rate (SER)
on each link, i.e.,\footnote{%
Given the symbol error probability $P_{e},$ the Bit Error Rate (BER) $P_{b}$
can be approximately obtained from $P_{e}$ (using a Gray encoding to map the
bits into the constellation points) as $P_{b}=P_{e}/\log _{2}(\left\vert
\mathcal{C}\right\vert ),$ where $\log _{2}(\left\vert \mathcal{C}%
\right\vert )$ is the number of bits per symbol, and $\left\vert \mathcal{C}%
\right\vert $ is the constellation size.}
\begin{equation}
P_{e,q}(k)%
\triangleq%
\limfunc{Prob}\{\hat{s}_{q}(k)\neq s_{q}(k)\}\leq P_{e,q}^{\star },\quad
\forall k\in \{1,\ldots ,N\},  \label{Pe-constraint}
\end{equation}%
where $\hat{s}_{q}(k)$ is the $k$-th entry of $\mathbf{\hat{s}}_{q}$ given
in (\ref{Rx_block}). Another alternative approach to guarantee the required
quality of service (QoS) of the system is to impose an upper bound
constraint on the global average BER of each link, defined as $%
(1/N)\sum\nolimits_{k=1}^{N}P_{e,q}(k)$. Interestingly, in \cite%
{Dani-Ottersten} it was proved that equal BER constraints on each subchannel
as given in (\ref{Pe-constraint}), provide essentially the same performance
of those obtained imposing a global average BER constraint, as the average
BER is strongly dominated by the minimum of the BERs on the individual
subchannels. Thus, for the rest of the paper we consider BER constraints as
in (\ref{Pe-constraint}).\vspace{-0.3cm}

\subsection{Problem Formulation: Optimal Transceivers Design based on Game
Theory \label{Sec:GT formulation}}

In this section we formulate the design of the transceiver pairs $(\mathbf{F}_{q},%
\mathbf{G}_{q})_{q=1}^{Q}$ of system (\ref{vector I/O}) within the framework
of game theory, using as optimality criterion \ the concept of NE \cite%
{Osborne}-\cite{Rosen}. We consider two classes of payoff functions, as
detailed next.\vspace{-0.2cm}

\subsubsection{Competitive maximization of mutual information}

In this section we focus on the fundamental (theoretic) limits of system (%
\ref{vector I/O}), under \textbf{A}.\textbf{1}-\textbf{A}.\textbf{5, }and
consider the competitive maximization of information rate of each link,
given constraints \textbf{Co.1} and \textbf{Co.2. }Using \textbf{A}.\textbf{1},
the achievable information rate for user $q$ is computed as the maximum
mutual information between the transmitted block $\mathbf{x}_{q}$ and the
received block $\mathbf{y}_{q}$, \textit{assuming the other received signals
as additive (colored) noise}. It is straightforward to see that a (pure or
mixed strategy) NE is obtained if each user transmits using Gaussian
signaling, with a proper precoder $\mathbf{F}_{q}$. In fact, for each user,
given that all other users use Gaussian codebooks, the codebook that
maximizes mutual information is also Gaussian \cite{Cover}. Hence, given
\textbf{A}.\textbf{5}, the mutual information for the $q$-th user is \cite%
{Cover}%
\begin{equation}
{\limfunc{I}\nolimits_{q}}(\mathbf{F}_{q},\mathbf{F}_{-q})=\dfrac{1}{N}\log
\left( \left\vert \mathbf{I}+\mathbf{F}_{q}^{H}\mathbf{H}_{qq}^{H}\mathbf{R}%
_{\mathbf{-}q}^{-1}\mathbf{H}_{qq}\mathbf{F}_{q}\right\vert \right)
\label{R_qq}
\end{equation}%
where $\mathbf{R}_{-q}%
\triangleq%
\sigma _{q}^{2}\mathbf{I}+\sum\limits_{r\neq q\hfill =1\hfill }^{Q}\mathbf{H}%
_{rq}\mathbf{F}_{r}\mathbf{F}_{r}^{H}\mathbf{H}_{rq}^{H}$ is the
interference plus noise covariance matrix, observed by user $q$, and $%
\mathbf{F}_{-q}%
\triangleq%
\left( \mathbf{F}_{r}\right) _{r\neq q=1}^{Q}$ is the set of all the
precoding matrices, except the $q$-th one. Observe that, for each link, we
can always assume that the receiver is composed of an MMSE stage followed by some
other stage, since the MMSE is capacity-lossless. Thus, w.l.o.g., we assume
in the following that\footnote{%
It is straightforward to verify that the MMSE receiver in (\ref{MMSE_q}) is
capacity-lossless by checking that, for each $q,$ the mutual information
(for a given set of $(\mathbf{F}_{q})_{q=1}^{Q}$) after the equalizer $%
\mathbf{G}_{q},$ $\log (|\mathbf{I}+\mathbf{F}_{q}^{H}\mathbf{H}_{qq}^{H}%
\mathbf{G}_{q}(\mathbf{G}_{q}^{H}\mathbf{R}_{\mathbf{-}q}\mathbf{G}_{q})^{-1}%
\mathbf{G}_{q}^{H}\mathbf{H}_{qq}\mathbf{F}_{q}|)$ is equal to (\ref{R_qq}).}%
\begin{equation}
\mathbf{G}_{q}=\mathbf{R}_{-q}^{-1}\mathbf{H}_{qq}\mathbf{F}_{q}(\mathbf{I}+%
\mathbf{F}_{q}^{H}\mathbf{H}_{qq}^{H}\mathbf{R}_{-q}^{-1}\mathbf{HF}%
_{q})^{-1},\quad \forall q\in \{1,\ldots ,Q\}.  \label{MMSE_q}
\end{equation}

Hence, the strategy of each player reduces to finding the optimal precoding $%
\mathbf{F}_{q}$ that maximizes ${\limfunc{I}\nolimits_{q}}(\mathbf{F}_{q},%
\mathbf{F}_{-q})$ in (\ref{R_qq}), under constraints \textbf{Co.1} and
\textbf{Co.2}. \ Stated in mathematical terms, we have the following
strategic noncooperative game
\begin{equation}
\left(
\mathscr{G}%
_{1}\right) :\qquad \qquad \qquad
\begin{array}{ll}
\limfunc{maximize}\limits_{\mathbf{F}_{q}} & {\limfunc{I}\nolimits_{q}}(%
\mathbf{F}_{q},\mathbf{F}_{-q}) \\
\limfunc{subject}\limfunc{to} & \mathbf{F}_{q}\in {\mathscr{F}}_{q},%
\end{array}%
\qquad \forall q\in \Omega ,\qquad \qquad \qquad  \label{Rate-matrix-game}
\end{equation}%
where $\Omega
\triangleq%
\{1,\ldots ,Q\}$ is the set of players (i.e., the links), ${\limfunc{I}%
\nolimits_{q}}(\mathbf{F}_{q},\mathbf{F}_{-q})$ is the payoff function of
player $q,$ given in (\ref{R_qq}), and ${\mathscr{F}}_{q}$ is the set of
admissible strategies (the precoding matrices) of player $q$, defined as
\begin{equation}
{\mathscr{F}}_{q}%
\triangleq%
\left\{ \mathbf{F}_{q}\in \mathcal{\ \mathbb{C}}^{N\times N}:\frac{1}{N}%
\mathsf{Tr}%
\left( \mathbf{F}_{q}\mathbf{F}_{q}^{H}\right) \leq P_{q},\quad \left[
\mathbf{W}^{H}\mathbf{F}_{q}\mathbf{F}_{q}^{H}\mathbf{W}\right] _{kk}\leq
\overline{p}_{q}^{\max }(k),\quad \forall k=1,\ldots ,N\right\} .
\label{P_q_Q_q}
\end{equation}

The solutions to (\ref{Rate-matrix-game}) are the well-known Nash
equilibria, which are formally defined as follows.

\begin{definition}
\label{NE def} A (pure) strategy profile $\mathbf{F}^{\star }=\left( \mathbf{%
F}_{q}^{\ast }\right) _{q\in \Omega }\in {\mathscr{F}}_{1}\times \ldots
\times {\mathscr{F}}_{Q}$ \ is a NE of game ${%
\mathscr{G}%
}_{1}$ if
\begin{equation}
{\limfunc{I}\nolimits_{q}}(\mathbf{F}_{q}^{\star },\mathbf{F}_{-q}^{\star
})\geq {\limfunc{I}\nolimits_{q}}(\mathbf{F}_{q},\mathbf{F}_{-q}^{\star }),\
\text{\ \ }\forall \mathbf{F}_{q}\in {\mathscr{F}}_{q},\text{ }\forall q\in
\Omega .  \label{pure-NE}
\end{equation}
\end{definition}

The definition of NE as given in (\ref{pure-NE}) can be generalized to
contain \emph{mixed} strategies \cite{Osborne}, i.e., the possibility of
choosing a randomization over a set of pure strategies (the randomizations
of different players are independent). Hence, the mixed extension of the
strategic game ${%
\mathscr{G}%
}_{1}$ is given by ${%
\overline{\mathscr{G}}%
}_{1}=\left\{ \Omega ,\{{\overline{\mathscr{F}}}_{q}\}_{q\in \Omega
},\left\{ \overline{{\limfunc{I}}}_{q}\right\} _{q\in \Omega }\right\} ,$
where ${\overline{\mathscr{F}}}_{q}$ denotes the set of the probability
distributions over the set ${\mathscr{F}}_{q}$ of pure strategies. In game ${%
\overline{\mathscr{G}}%
}_{1}$, the strategy profile, for each player $q,$ is the probability
density function $f_{\mathbf{F}_{q}}(\mathbf{F}_{q})$ defined on ${%
\mathscr{F}}_{q}$ and the payoff function $\overline{{\limfunc{I}}}_{q}=%
\mathrm{E}_{f_{\mathbf{F}_{q}}}\mathrm{E}_{f_{\mathbf{F}_{-q}}}\{{\limfunc{I}%
\nolimits_{q}}\}$ is the expectation of ${\limfunc{I}\nolimits_{q}}$ defined
in (\ref{R_qq}) taken over the mixed strategies of all the players$.$ A
mixed strategy NE of a strategic game is defined as a NE of its mixed
extension \cite{Osborne}.

Observe that for the payoff functions defined in (\ref{R_qq}), we can indeed
limit ourselves to adopt pure strategies w.l.o.g., as we did in (\ref%
{Rate-matrix-game}). Too see why, consider the mixed extension ${%
\overline{\mathscr{G}}%
}_{1}$ of ${%
\mathscr{G}%
}_{1}$ in $($\ref{Rate-matrix-game}$)$. For any player $q$, we have
\begin{equation}
\mathrm{E}_{f_{\mathbf{F}_{q}}}\mathrm{E}_{f_{\mathbf{F}_{-q}}}\left\{ {%
\limfunc{I}\nolimits_{q}}\left( \mathbf{F}_{q},\mathbf{F}_{-q}\right)
\right\} \leq \mathrm{E}_{f_{\mathbf{F}_{-q}}}\left\{ {\limfunc{I}%
\nolimits_{q}}\left( \mathrm{E}_{f_{\mathbf{F}_{q}}}\left\{ \mathbf{F}%
_{q}\right\} ,\mathbf{F}_{-q}\right) \right\} ,\quad \forall \mathbf{F}%
_{-q}:f_{\mathbf{F}_{-q}}(\mathbf{F}_{-q})\in {\overline{\mathscr{F}}}_{-q},
\label{mixed_extension}
\end{equation}%
where ${\overline{\mathscr{F}}}_{-q}%
\triangleq%
{\overline{\mathscr{F}}}_{1}\times \ldots \times {\overline{\mathscr{F}}}%
_{q-1}\times {\overline{\mathscr{F}}}_{q+1}\times \ldots \times {\overline{%
\mathscr{F}}}_{Q}$. The inequality in (\ref{mixed_extension}) follows from
the concavity of the function ${\limfunc{I}\nolimits_{q}}(\mathbf{F}_{q},%
\mathbf{F}_{-q})$ in $\mathbf{F}_{q}\mathbf{F}_{q}^{H}$ \cite{Boyd} and from
Jensen's inequality \cite{Cover}. Since the equality is reached if and only
if $\mathbf{F}_{q}$ reduces to a pure strategy (because of the strict
concavity of ${\limfunc{I}\nolimits_{q}}(\mathbf{F}_{q},\mathbf{F}_{-q})$ in
$\mathbf{F}_{q}\mathbf{F}_{q}^{H}$), whatever the strategies of the other
players are, every NE of the game is achieved using pure strategies.%
\footnote{%
This result was obtained independently in \cite{Tse}-\cite{Scutari-Tech-Rep}.%
}

\subsubsection{Competitive maximization of transmission rates}

The optimality criterion chosen in the previous section requires the use of ideal Gaussian codebooks with a proper covariance matrix. In practice,
Gaussian codes are substituted with simple (suboptimal) finite order signal
constellations, such as Quadrature Amplitude Modulation (QAM) or Pulse
Amplitude Modulation (PAM), and practical (yet suboptimal) coding schemes.
Hence, in this section, we focus on the more practical case where the
information bits are mapped onto constellations of finite size (with
possibly different cardinality), and consider the optimization of the
transceivers $(\mathbf{F}_{q},\mathbf{G}_{q})_{q\in \Omega }$, in order to
maximize the transmission rate on each link, under constraints \textbf{Co.1}
$\div $ \textbf{Co.3}.

Given the signal model in (\ref{vector I/O}), where now each vector $\mathbf{%
s}_{q}%
\triangleq%
(s_{q}(k))_{k=1}^{N}$ is drawn from a set of finite-constellations $(%
\mathcal{C}_{k,q})_{k=1}^{N}$ , i.e., $s_{q}(k)\in \mathcal{C}_{k,q},$ the
transmission rate of each link is simply the number of
transmitted bits per symbol, i.e.,\vspace{-0.3cm}
\begin{equation}
r_{q}=\sum\limits_{k=1}^{N}\log _{2}(\left\vert \mathcal{C}_{k,q}\right\vert
),  \label{Tx-rate}
\end{equation}%
where $\left\vert \mathcal{C}_{k,q}\right\vert $ denotes the size of
constellation $\mathcal{C}_{k,q}.$ The (uncoded) average error probability
of the $q$-th link on the $k$-th substream, as defined in (\ref%
{Pe-constraint}), under the Gaussian assumption, can be analytically
expressed, for any given set $(\mathbf{F}_{q},\mathbf{G}_{q})_{q\in \Omega }$
and $(\mathcal{C}_{k,q})_{k=1}^{N},$ as%
\begin{equation}
P_{e,q}(k)=\alpha _{k,q}\mathcal{Q}\left( \sqrt{\beta _{k,q}\limfunc{SINR}%
\nolimits_{k,q}}\right) ,  \label{average_P_e}
\end{equation}%
where $\alpha _{k,q}$ and $\beta _{k,q}$ are constants that depend on the
signal constellation, $\mathcal{Q}\left( \cdot \right) $ is the $\mathcal{Q}$%
-function \cite{Proakis}, and $\limfunc{SINR}\nolimits_{k,q}$ is defined as%
%
\begin{equation}
\limfunc{SINR}\nolimits_{k,q}%
\triangleq%
\frac{\left\vert \left[ \mathbf{G}_{q}^{H}\mathbf{H}_{qq}\mathbf{F}_{q}%
\right] _{kk}\right\vert ^{2}}{\left[ \mathbf{G}_{q}^{H}\mathbf{R}_{_{q}}%
\mathbf{G}_{q}\right] _{kk}},  \label{SINR_qk}
\end{equation}%
with $\mathbf{R}_{_{q}}%
\triangleq%
\mathbf{H}_{qq}\mathbf{F}_{q}\mathbf{F}_{q}^{H}\mathbf{H}_{qq}^{H}-\mathbf{H}%
_{qq}\mathbf{f}_{k,q}\mathbf{f}_{k,q}^{H}\mathbf{H}_{qq}^{H}+\mathbf{R}%
_{-q}, $ where $\mathbf{f}_{k,q}$ denotes the $k$-th column of $\mathbf{F}%
_{q},$ and $\mathbf{R}_{-q}=\sigma _{q}^{2}\mathbf{I}+\sum\limits_{r\neq
q\hfill =1\hfill }^{Q}\mathbf{H}_{rq}\mathbf{F}_{r}\mathbf{F}_{r}^{H}\mathbf{%
H}_{rq}^{H}$ (see, e.g., \cite{Barbarossa, Palomar-convex}).

According to the constraints \textbf{Co.3} in (\ref{Pe-constraint}), because
of (\ref{average_P_e}), the optimal linear receiver for each user $q$ can be
computed as the matrix $\mathbf{G}_{q}$ maximizing simultaneously all the $(%
\limfunc{SINR}\nolimits_{k,q})_{k=1}^{N}$ in (\ref{SINR_qk}), while keeping the set of precoding matrices $\left( \mathbf{F}_{q}\right) _{q\in
\Omega }$ and the constellations $(\mathcal{C}_{k,q})_{k=1,q\in \Omega }^{N}$%
fixed. This leads to the well-known Wiener filter for $\mathbf{G}_{q},$ as given
in (\ref{MMSE_q}) \cite{Barbarossa, Palomar-convex, Palomar-Barbarossa}, and
the following expression for the $\limfunc{SINR}$s in (\ref{SINR_qk}):
\begin{equation}
\limfunc{SINR}\nolimits_{k,q}=\limfunc{SINR}\nolimits_{k,q}(\mathbf{F}_{q},%
\mathbf{F}_{-q})=\frac{1}{\left[ (\mathbf{I}+\mathbf{F}_{q}^{H}\mathbf{H}%
_{qq}^{H}\mathbf{R}_{-q}^{-1}\mathbf{H}_{qq}\mathbf{F}_{q})^{-1}\right] _{kk}%
}-1,\quad k\in \{1,\ldots ,N\}.  \label{SINR_kq_MSE}
\end{equation}

Under the previous setup, each player has to choose the precoder $\mathbf{F}%
_{q}$ and the constellations $(\mathcal{C}_{k,q})_{k=1}^{N}$ that maximize
the transmission rate in (\ref{Tx-rate}), under constraints \textbf{Co.1} $%
\div $ \textbf{Co.3}. Since, for any given rate, the optimal combination of
the constellations $(\mathcal{C}_{k,q})_{k=1}^{N}$ would require an
exhaustive search over all the combinations that provide the desired rate, in
the following we adopt, as in \cite{Palomar-Barbarossa}, the classical
method to choose quasi-optimal combinations, based on the gap approximation
\cite{Forney, Goldsmith-Chua}.\footnote{%
In our optimization we will use, as optimal solution, the continuous bit
distribution obtained by the gap approximation, without considering the
effect on the optimality of the granularity and the bit cap. The performance
loss induced by these sources of distortion can be quantified using the
approach given in \cite{Palomar-Barbarossa}.} As a result, the number of
bits that can be transmitted over the $N$ substreams from the $q$-th
source,\ for a given family of constellations and a given error probability $%
P_{e,q}^{\star }$, is approximatively given by%
\begin{equation}
\limfunc{r}\nolimits_{q}(\mathbf{F}_{q},\mathbf{F}_{-q})=\frac{1}{N}%
\dsum\limits_{k=1}^{N}\log _{2}\left( 1\mathcal{+}\frac{\limfunc{SINR}%
\nolimits_{k,q}(\mathbf{F}_{q},\mathbf{F}_{-q})}{\Gamma _{q}}\right)
\label{Rate-gap}
\end{equation}%
where $\limfunc{SINR}\nolimits_{k,q}(\mathbf{F}_{q},\mathbf{F}_{-q})$ is
defined in (\ref{SINR_kq_MSE}), and $\Gamma _{q}\geq 1$ is the gap which
depends only on the constellations and on $P_{e,q}^{\star }.$ For $M$-QAM
constellations, e.g., if the error probability in (\ref{average_P_e})\ is
approximated by $P_{e,q}(k)\approx 4\mathcal{Q}\left( \sqrt{(3/(M-1))%
\limfunc{SINR}\nolimits_{k,q}}\right) ,$ the resulting gap is $\Gamma _{q}=(%
\mathcal{Q}^{-1}(P_{e,q}^{\star }/4))^{2}/3$ \cite{Palomar-Barbarossa}.

In summary, the structure of the game is
\begin{equation}
\left(
\mathscr{G}%
_{2}\right) :\qquad \qquad \qquad
\begin{array}{ll}
\limfunc{maximize}\limits_{\mathbf{F}_{q}} & \limfunc{r}\nolimits_{q}(%
\mathbf{F}_{q},\mathbf{F}_{-q}) \\
\limfunc{subject}\text{ }\limfunc{to} & \mathbf{F}_{q}\in {\mathscr{F}}_{q},%
\end{array}%
\qquad \forall q\in \Omega ,\qquad \qquad \qquad  \label{Rate-Game-gap}
\end{equation}%
where ${\mathscr{F}}_{q}$ and $\limfunc{r}\nolimits_{q}(\mathbf{F}_{q},%
\mathbf{F}_{-q})$ are defined in (\ref{P_q_Q_q}) and (\ref{Rate-gap}),
respectively. A{s in (\ref{Rate-matrix-game}), in the following we focus on
pure strategies only}.

\section{Optimality of the Channel-Diagonalizing Structure\label{Unified
formulation}}

We derive now the optimal set of precoding matrices $(\mathbf{F}_{q})_{q\in
\Omega }$ for both games $%
\mathscr{G}%
_{1}$ and $%
\mathscr{G}%
_{2},$ and provide a unified reformulation of the original complicated games
in a simpler equivalent form. The main result is summarized in the following
theorem.

\begin{theorem}
\label{Theo_multiccarrier_Info_rate_} An optimal solution to the
matrix-valued games $%
\mathscr{G}%
_{1}$ and $%
\mathscr{G}%
_{2}$ is
\begin{equation}
\mathbf{F}_{q}=\mathbf{W}\sqrt{\limfunc{diag}(\mathbf{p}_{q})},\quad \forall
q\in \Omega ,  \label{Optimal_F_q}
\end{equation}%
where $\mathbf{W}$ is the IFFT matrix, and $\mathbf{p}\triangleq (\mathbf{p}%
_{q})_{q\in \Omega },$ with $\mathbf{p}_{q}\triangleq (p_{q}(k))_{k=1}^{N},$
is the solution to the vector-valued game $%
\mathscr{G}%
,$ defined as%
\begin{equation}
\left(
\mathscr{G}%
\right) :\qquad \qquad \qquad
\begin{array}{l}
\limfunc{maximize}\limits_{\mathbf{p}_{q}}\quad \ R_{q}(\mathbf{p}_{q},%
\mathbf{p}_{-q}) \\
\limfunc{subject}\text{ }\limfunc{to}\text{\ \ \ }\mathbf{p}_{q}\in {%
{\mathscr{P}}%
}_{q}%
\end{array}%
,\qquad \forall q\in \Omega ,  \label{Rate Game}
\end{equation}%
where $R_{q}(\mathbf{p}_{q},\mathbf{p}_{-q})$ and ${\mathscr{P}}_{q}$ are
the payoff function and the set of admissible strategies of user $q,$
respectively, defined as%
\begin{equation}
R_{q}(\mathbf{p}_{q},\mathbf{p}_{-q})=\dfrac{1}{N}\dsum\limits_{k=1}^{N}\log
\left( 1+\dfrac{1}{\Gamma _{q}}\text{ }%
\mathsf{sinr}%
_{q}(k)\right) ,  \label{Rate}
\end{equation}%
{and}
\begin{equation}
{\mathscr{P}}_{q}\triangleq \left\{ \mathbf{p}_{q}\in \mathcal{\ \mathbb{R}}%
^{N}:\dfrac{1}{N}\ \sum_{k=1}^{N}p_{q}(k)\leq 1,\text{ }0\leq p_{q}(k)\leq
p_{q}^{\max }(k),\text{ \ }\forall k\in \{1,\ldots ,N\}\right\} ,
\label{admissible strategy set}
\end{equation}%
with $p_{q}^{\max }(k)\triangleq \overline{p}_{q}^{\max }(k)/P_{q},$
\begin{equation}
\mathsf{sinr}%
_{q}(k)=\frac{P_{q}\left\vert \bar{H}_{qq}(k)\right\vert
^{2}p_{q}(k)/d_{qq}^{\gamma }}{\sigma _{q}^{2}+\sum_{\,r\neq
q}P_{r}\left\vert \bar{H}_{rq}(k)\right\vert ^{2}p_{r}(k)/d_{rq}^{\gamma }}%
\triangleq \frac{\left\vert H_{qq}(k)\right\vert ^{2}p_{q}(k)}{%
1+\sum_{\,r\neq q}\left\vert H_{rq}(k)\right\vert ^{2}p_{r}(k)},
\label{SINR}
\end{equation}%
where $H_{rq}(k)\triangleq \bar{H}_{rq}(k)\sqrt{P_{r}/\left( \sigma _{q}^{2}%
\text{ }d_{rq}^{\gamma }\right) },$ and $\Gamma _{q}=1$ if $\
\mathscr{G}%
_{1}$ is considered.
\end{theorem}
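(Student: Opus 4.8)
The plan is to turn the theorem into a per-user best-response statement and to exploit the fact, recalled in Assumption \textbf{A.3}, that every $\mathbf{H}_{rq}$ is circulant and hence shares the eigenbasis $\mathbf{W}$: $\mathbf{H}_{rq}=\mathbf{W}\mathbf{D}_{rq}\mathbf{W}^{H}$ with $\mathbf{D}_{rq}$ diagonal. First I would observe that the family of channel-diagonalizing profiles is closed under best responses: if every interferer $r\neq q$ uses $\mathbf{F}_{r}=\mathbf{W}\sqrt{\mathrm{diag}(\mathbf{a}_{r})}$, then $\mathbf{R}_{-q}=\sigma_{q}^{2}\mathbf{I}+\sum_{r\neq q}\mathbf{H}_{rq}\mathbf{F}_{r}\mathbf{F}_{r}^{H}\mathbf{H}_{rq}^{H}$ is diagonalized by $\mathbf{W}$, so that $\mathbf{M}_{q}\triangleq\mathbf{H}_{qq}^{H}\mathbf{R}_{-q}^{-1}\mathbf{H}_{qq}=\mathbf{W}\mathbf{\Lambda}_{q}\mathbf{W}^{H}$ with $\mathbf{\Lambda}_{q}=\mathrm{diag}(\lambda_{q}(1),\dots,\lambda_{q}(N))\succ\mathbf{0}$. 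Granting this, the theorem follows by showing that, against such opponents, a diagonalizing precoder is an optimal response for player $q$ in both games and attains exactly the payoff $R_{q}$ of the reduced game $\mathscr{G}$. A Nash equilibrium $\mathbf{p}^{\star}$ of $\mathscr{G}$ (whose existence is settled in the next section) then lifts bin-by-bin to the profile (\ref{Optimal_F_q}): facing the diagonalizing $\mathbf{F}_{-q}^{\star}$, player $q$'s best response is the diagonalizing precoder whose power profile maximizes $R_{q}(\cdot,\mathbf{p}_{-q}^{\star})$ over $\mathscr{P}_{q}$, which is $\mathbf{p}_{q}^{\star}$ by definition of the equilibrium of $\mathscr{G}$.

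The heart of the argument is therefore the single-user problem: with $\mathbf{R}_{-q}$ fixed and $\mathbf{W}$-diagonal, maximize the payoff over $\mathbf{F}_{q}\in\mathscr{F}_{q}$ in (\ref{P_q_Q_q}). I would change variables to $\tilde{\mathbf{F}}_{q}=\mathbf{W}^{H}\mathbf{F}_{q}$ and write $\mathbf{Q}_{q}=\tilde{\mathbf{F}}_{q}\tilde{\mathbf{F}}_{q}^{H}\succeq\mathbf{0}$. The key feasibility remark is that \emph{both} constraints in (\ref{P_q_Q_q}) see only the diagonal of $\mathbf{Q}_{q}$: the trace constraint reads $\tfrac1N\sum_{k}[\mathbf{Q}_{q}]_{kk}\le P_{q}$ and the mask constraint reads $[\mathbf{Q}_{q}]_{kk}\le\overline{p}_{q}^{\max}(k)$, because $\mathbf{W}^{H}\mathbf{F}_{q}\mathbf{F}_{q}^{H}\mathbf{W}=\mathbf{Q}_{q}$. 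For game $\mathscr{G}_{1}$ the payoff (\ref{R_qq}) becomes $\tfrac1N\log\left|\mathbf{I}+\mathbf{\Lambda}_{q}^{1/2}\mathbf{Q}_{q}\mathbf{\Lambda}_{q}^{1/2}\right|$, and Hadamard's inequality for positive-definite matrices gives $\left|\mathbf{I}+\mathbf{\Lambda}_{q}^{1/2}\mathbf{Q}_{q}\mathbf{\Lambda}_{q}^{1/2}\right|\le\prod_{k}\big(1+\lambda_{q}(k)[\mathbf{Q}_{q}]_{kk}\big)$, with equality iff $\mathbf{Q}_{q}$ is diagonal. Since the constraints bind only the diagonal, replacing $\mathbf{Q}_{q}$ by $\mathrm{diag}(\mathbf{Q}_{q})$ keeps feasibility and cannot decrease the objective; hence the optimum is diagonal, i.e. $\mathbf{F}_{q}=\mathbf{W}\sqrt{\mathrm{diag}(\mathbf{a}_{q})}$, and the substitution $a_{q}(k)=P_{q}\,p_{q}(k)$, $\lambda_{q}(k)a_{q}(k)=\mathsf{sinr}_{q}(k)$ turns the value into $R_{q}$ with $\Gamma_{q}=1$.

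For game $\mathscr{G}_{2}$ the objective (\ref{Rate-gap}) is $\tfrac1N\sum_{k}\log\!\big(1+\mathsf{SINR}_{k,q}/\Gamma_{q}\big)$ with, from (\ref{SINR_kq_MSE}), $1+\mathsf{SINR}_{k,q}=1/[\mathbf{B}_{q}^{-1}]_{kk}$ and $\mathbf{B}_{q}=\mathbf{I}+\tilde{\mathbf{F}}_{q}^{H}\mathbf{\Lambda}_{q}\tilde{\mathbf{F}}_{q}$, so the relevant diagonal now lives in the symbol domain and a single determinant inequality no longer suffices. I would argue in two majorization steps. Writing $e_{k}=[\mathbf{B}_{q}^{-1}]_{kk}$, the scalar map $e\mapsto\log\!\big((1-\Gamma_{q}^{-1})+\Gamma_{q}^{-1}/e\big)$ is convex, so the objective is Schur-convex in the MMSE vector $(e_{k})$; since the diagonal of $\mathbf{B}_{q}^{-1}$ is majorized by its spectrum (Schur--Horn), for a fixed spectrum the objective is largest when $\mathbf{B}_{q}^{-1}$ is diagonal, i.e. when $\tilde{\mathbf{F}}_{q}$ is aligned so that $\mathbf{B}_{q}=\mathbf{I}+\mathrm{diag}(\nu_{1},\dots,\nu_{N})$. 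The objective then equals $\tfrac1N\sum_{k}\log(1+\nu_{k}/\Gamma_{q})$, a Schur-\emph{concave} function of the eigenvalues $\nu_{k}$ of $\mathbf{\Lambda}_{q}^{1/2}\mathbf{Q}_{q}\mathbf{\Lambda}_{q}^{1/2}$; as those eigenvalues majorize the fixed diagonal $(\lambda_{q}(k)[\mathbf{Q}_{q}]_{kk})$, the maximum is attained when they coincide with it, forcing $\mathbf{Q}_{q}$ to be $\mathbf{W}$-diagonal. Both extremal conditions hold simultaneously for the aligned diagonal precoder $\mathbf{F}_{q}=\mathbf{W}\sqrt{\mathrm{diag}(\mathbf{a}_{q})}$, which makes $\mathbf{B}_{q}=\mathrm{diag}(1+\mathsf{sinr}_{q}(k))$, is feasible, and attains the value $R_{q}$ for general $\Gamma_{q}\ge1$.

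The main obstacle I anticipate is precisely this $\mathscr{G}_{2}$ reduction: unlike the mutual information, the gap-rate is neither concave in $\mathbf{F}_{q}\mathbf{F}_{q}^{H}$ nor a function of $\mathbf{Q}_{q}$ alone --- it couples the symbol-domain rotation hidden in $\tilde{\mathbf{F}}_{q}$ to the per-stream SINRs --- so the clean one-line Hadamard bound of $\mathscr{G}_{1}$ must be replaced by the two-sided majorization above, and one must verify that the Schur-convex optimization over the rotation and the Schur-concave optimization over the input-covariance spectrum are compatible (they are, since they act on independent degrees of freedom and both select the fully diagonal precoder). The remaining ingredients --- closure of the diagonalizing family under best responses and the lift of an equilibrium of $\mathscr{G}$ --- are then routine consequences of the best-response characterization.
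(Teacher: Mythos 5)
Your proposal is correct and, at the level of overall architecture, matches the paper's proof: both reduce the theorem to a single-user best response against $\mathbf{W}$-diagonal opponents (so that $\mathbf{H}_{qq}^{H}\mathbf{R}_{-q}^{-1}\mathbf{H}_{qq}=\mathbf{W}\mathbf{\Lambda}_{q}\mathbf{W}^{H}$), settle $\mathscr{G}_{1}$ by Hadamard's inequality plus the remark that both constraints in (\ref{P_q_Q_q}) depend only on $\mathbf{d}(\mathbf{Q}_{q})$, and attack $\mathscr{G}_{2}$ with a two-step majorization argument whose first step (rotate $\mathbf{W}^{H}\mathbf{F}_{q}$ on the right to diagonalize the MSE matrix, legitimate because the rotation leaves $\mathbf{Q}_{q}$, hence feasibility, untouched, and because the rate is Schur-convex in the MMSE vector while $\mathbf{d}(\mathbf{A})\prec\boldsymbol{\lambda}(\mathbf{A})$) is exactly the paper's. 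The genuine difference is in the second step of the $\mathscr{G}_{2}$ argument. The paper fixes the diagonal $\mathbf{d}_{q}^{\star}=\mathbf{d}(\mathbf{P}_{q}^{\star}\mathbf{P}_{q}^{\star H})$ of an (oracle) optimal solution, passes to its equivalent problem P2, parameterizes $\mathbf{P}_{q}=\mathbf{\Lambda}_{q}^{-1/2}\mathbf{U}_{q}\mathbf{\Sigma}_{q}^{1/2}$, and invokes the \emph{existence} half of the Schur--Horn machinery (a unitary achieving a prescribed diagonal exists iff the majorization (\ref{majorization_constraint}) holds) followed by a Schur-convex minimization over that majorization constraint. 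You avoid all of this: you bound the post-rotation payoff $\frac{1}{N}\sum_{k}\log\left(1+\nu_{k}/\Gamma_{q}\right)$, with $\boldsymbol{\nu}=\boldsymbol{\lambda}(\mathbf{\Lambda}_{q}^{1/2}\mathbf{Q}_{q}\mathbf{\Lambda}_{q}^{1/2})$, by its value at the diagonal of $\mathbf{\Lambda}_{q}^{1/2}\mathbf{Q}_{q}\mathbf{\Lambda}_{q}^{1/2}$, using Schur-concavity and the ``easy'' direction $\mathbf{d}\prec\boldsymbol{\lambda}$ a second time; since this bound depends only on $\mathbf{d}(\mathbf{Q}_{q})$, which is all the constraints see, and is attained by a diagonal $\mathbf{Q}_{q}$, optimality of the diagonal precoder follows. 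Your route is thus slightly more elementary and self-contained (no oracle value, no P1/P2 equivalence, no appeal to the converse part of Schur--Horn), at the price of being tailored to this specific payoff; the paper's route plugs directly into the general single-user majorization framework of Palomar et al., which extends verbatim to other Schur-concave/Schur-convex objectives. Both versions then conclude with the same routine lifting of a NE of $\mathscr{G}$ to the matrix-valued games.
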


\begin{proof}
See Appendix \ref{proof_Theo_multiccarrier_Info_rate}.
\end{proof}

\medskip

\noindent \textbf{Remark 1 $-$ Optimality of the diagonal transmission.}
According to Theorem \ref{Theo_multiccarrier_Info_rate_}, a NE of both games
${{\mathscr{G}}}_{1}$ and ${{\mathscr{G}}}_{2}$ is reached using, for each
user, a diagonal transmission strategy through the channel eigenmodes (i.e.,
the frequency bins), irrespective of the channel realizations, power budget,
spectral mask constraints and MUI. This result simplifies the original
matrix-valued optimization problems (\ref{Rate-matrix-game}) and (\ref%
{Rate-Game-gap}), as the number of unknowns for each user reduces from $%
N^{2} $ (the original matrix $\mathbf{F}_{q})$ to $N$ (the power allocation
vector $\mathbf{p}_{q}=(p_{q}(k))_{k=1}^{N}$, with no performance loss.

Observe that the optimality of the diagonalizing structure was well known in
the single-user case, when the optimization criterion is the maximization of
mutual information and the constraint is the average transmit power \cite%
{Barbarossa}-\cite{Palomar-Barbarossa}, \cite{Dani-Ottersten}. However,
under the additional constraint on the spectral emission masks, the
optimality of the diagonal transmission has never been proved, neither in a
single-user nor in a multi-user competitive scenario. But, most
interestingly, Theorem \ref{Theo_multiccarrier_Info_rate_} proves the
optimality of the diagonal transmission also for game ${{\mathscr{G}}}_{2},~$%
\ where each player maximizes the transmission rate, using finite order
constellations, and under constraints on the spectral emission mask,
transmit power, and average error probability. In such a case, the
optimality of the channel-diagonalizing scheme was not at all clear.
Previous works on this subject adopted the typical approach used in
single-user MIMO systems \cite{Yu}-\cite{Scutari-Barbarossa-SPAWC03}: They
first imposed the diagonal transmission and then employed the gap
approximation solution over the set of parallel subchannels. However, such a
combination of channel diagonalization and gap approximation was not proved
to be optimal. Conversely, Theorem \ref%
{Theo_multiccarrier_Info_rate_} proves the optimality of this approach and
it subsumes, as particular cases, the results of \cite{Yu}-\cite%
{Scutari-Barbarossa-SPAWC03}, corresponding to the simple case where there are no
mask constraints.

It is also worth noticing that the optimality of the diagonalizing structure is
 a consequence of the property that all channel matrices, under
assumptions \textbf{A.2} and \textbf{A.3}, are diagonalized by the \textit{%
same} matrix, i.e., the IFFT matrix $\mathbf{W}$. There is another
interesting scenario where this property holds true: The case where all the
channels are time-varying flat fading and the constraints are on the
transmit power and on the maximum power that can be emitted over some
specified time intervals (this is the dual version of the spectral mask
constraint). In such a case, all channel matrices are diagonal and then
it is trivial to see that they have a common diagonalizing matrix, i.e., the
identity matrix. Applying duality arguments to Theorem 1, the optimal
transmission strategy for each user is a sort of TDMA over a frame of $N$
time slots, where each user optimizes the power allocation across the $N$
time slots (possibly sharing time slots with the other users). Clearly, as
opposed to the case considered in Theorem 1, in the time-selective case, the
transmitter needs to have a non-causal knowledge of the channel variation.
In practice, this kind of knowledge would require some sort of channel
prediction.

\medskip

According to Theorem \ref{Theo_multiccarrier_Info_rate_}, instead of
considering the matrix-valued games ${%
{\mathscr{G}}%
}_{1}$ and ${%
{\mathscr{G}}%
}_{2},$ we may focus on the simpler vector game ${%
{\mathscr{G}}%
}$, with no performance loss. It is straightforward to see that a NE of both
matrix-valued games exists if the solution set of ${%
{\mathscr{G}}%
}$ is non empty. Moreover, the Nash equilibria of ${%
{\mathscr{G}}%
}$, if they exist, must satisfy the waterfilling solution \emph{for each}
user, i.e., the following system of \emph{nonlinear} equations:
\begin{equation}
\begin{array}{c}
\mathbf{p}_{q}^{\star }=%
\mathsf{WF}%
_{q}\left( \mathbf{p}_{1}^{\star },\ldots ,\mathbf{p}_{q-1}^{\star },\mathbf{%
p}_{q+1}^{\star },\ldots ,\mathbf{p}_{Q}^{\star }\right) =%
\mathsf{WF}%
_{q}(\mathbf{p}_{-q}^{\star })%
\end{array}%
,\quad \forall q\in \Omega ,  \label{sym_WF-sistem}
\end{equation}%
with the waterfilling operator $%
\mathsf{WF}%
_{q}\left( \mathbf{\cdot }\right) $ defined as
\begin{equation}
\left[
\mathsf{WF}%
_{q}\left( \mathbf{p}_{-q}\right) \right] _{k}%
\triangleq%
\left[ \mu _{q}-\Gamma_q\,\dfrac{1+\sum_{\,r\neq q}\left\vert
H_{rq}(k)\right\vert ^{2}p_{r}(k)}{\left\vert H_{qq}(k)\right\vert ^{2}}%
\right] _{0}^{p_{q}^{\max }(k)},\quad k\in \{1,\ldots ,N\},  \label{WF_mask}
\end{equation}%
where $\left[ x\right] _{a}^{b}$ denotes the Euclidean projection of $x$
onto the interval $[a,b]$\footnote{%
The Euclidean projection $\left[ x\right] _{a}^{b}$ \ is defined as follows:
$\left[ x\right] _{a}^{b}=a$, if $x\leq a$, $\left[ x\right] _{a}^{b}=x$, if
$a<x<b$, and $\left[ x\right] _{a}^{b}=b$, if $x\geq b$.} and the water-level $%
\mu _{q}$ is chosen to satisfy the power constraint $(1/N)%
\sum_{k=1}^{N}p_{q}^{\star }(k)=1.$

Given the nonlinear system of equations (\ref{sym_WF-sistem}), the
fundamental questions are: i)\ \emph{Does a solution exist}? ii)\emph{\ If a
solution exists, is it unique}? iii) \emph{How can such a solution be
reached in a distributed way}?

The answer to the first two questions is given in the forthcoming sections,
whereas the study of distributed algorithms is addressed in Part II of this
paper \cite{Scutari-Part II}.

\section{Existence and Uniqueness of NE\label{Sec:Existence-Uniqueness}}

Before providing the conditions for the uniqueness of the NE of game ${%
{\mathscr{G}}%
,}$ we introduce the following intermediate definitions. Given game ${%
{\mathscr{G}}%
,}$ define $\mathbf{H}(k)\in
\mathbb{R}
^{Q\times Q}$ as

\begin{equation}
\left[ \mathbf{H}(k)\right] _{qr}%
\triangleq%
\left\{
\begin{array}{ll}
\Gamma_q\,\dfrac{|\bar{H}_{rq}(k)|^{2}}{|\bar{H}_{qq}(k)|^{2}}\dfrac{%
d_{qq}^{\alpha }}{d_{rq}^{\alpha }}\dfrac{P_{r}}{P_{q}}, & \text{if }\ k\in
\mathcal{D}_{q}\cap \mathcal{D}_{r}\text{ and }r\neq q, \\
0, & \text{otherwise,}%
\end{array}%
\right.  \label{def:H_matrix}
\end{equation}%
where $\mathcal{D}_{q}$ denotes the set $\{1,\ldots ,N\}$ (possibly)
deprived of the carrier indices that user $q$ would never use as the best
response set to any strategy used by the other users, for the given set of
transmit power and propagation channels:%
\begin{equation}
\mathcal{D}_{q}%
\triangleq%
\left\{ k\in \{1,\ldots ,N\}:\exists \text{ }\mathbf{p}_{-q}\in {%
{\mathscr{P}}%
}_{-q}\text{ such that }\left[ {\mathsf{WF}}_{q}\left( \mathbf{p}%
_{-q}\right) \right] _{k}\neq 0\right\} ,  \label{D_q}
\end{equation}%
with ${\mathsf{WF}}_{q}\left( \mathbf{\cdot }\right) $ defined in (\ref%
{WF_mask}) and ${%
{\mathscr{P}}%
}_{-q}%
\triangleq%
{%
{\mathscr{P}}%
}_{1}\times \cdots \times {%
{\mathscr{P}}%
}_{q-1}\times {%
{\mathscr{P}}%
}_{q+1}\times \cdots \times {%
{\mathscr{P}}%
}_{Q}$.

The study of game ${%
{\mathscr{G}}%
}$ is addressed in the following theorem.

\begin{theorem}
\label{th:existence_uniqueness_NE}Game ${%
{\mathscr{G}}%
}${\ admits a nonempty solution set for any set of channels, spectral
mask constraints and transmit power of the users. Furthermore, the NE is
unique if}\vspace{-0.3cm}%
\begin{equation}
\rho \left( \mathbf{H}(k)\right) <1,\qquad \forall k\in \{1,\ldots ,N\},
\tag{C1}  \label{SF}
\end{equation}%
where $\mathbf{H}(k)${\ is defined in (\ref{def:H_matrix}) and }$\rho \left(
\mathbf{H}(k)\right) ${\ denotes the spectral radius\footnote{%
The spectral radius $\rho \left( \mathbf{A}\right) $ of the matrix $\mathbf{A%
}$ is defined as $\rho \left( \mathbf{A}\right)
\triangleq%
\max \{|\lambda |:\lambda \in \sigma (\mathbf{A})\},$ with $\sigma (\mathbf{A%
})$ denoting the spectrum of $\mathbf{A}$ \cite{Horn85}.} of }$\mathbf{H}%
(k). $
\end{theorem}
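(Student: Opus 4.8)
The plan is to prove the two assertions by different arguments, both anchored to the waterfilling characterization (\ref{sym_WF-sistem})--(\ref{WF_mask}) of the equilibria of $\mathscr{G}$. Existence will follow from concavity and compactness and will require no condition on the data (matching the ``for any set of channels'' claim), whereas uniqueness will follow from a contraction property that activates precisely when $\rho(\mathbf{H}(k))<1$.

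For existence I would first note that each $\mathscr{P}_q$ in (\ref{admissible strategy set}) is nonempty (it always contains $\mathbf{p}_q=\mathbf{0}$), convex, and compact, being the intersection of the power-budget half-space, the boxes $0\le p_q(k)\le p_q^{\max}(k)$, and the nonnegative orthant. For fixed $\mathbf{p}_{-q}$, the denominator of $\mathsf{sinr}_q(k)$ in (\ref{SINR}) is constant, so $\mathsf{sinr}_q(k)$ is affine in $\mathbf{p}_q$ and each term $\log(1+\mathsf{sinr}_q(k)/\Gamma_q)$ is concave; hence $R_q(\mathbf{p}_q,\mathbf{p}_{-q})$ in (\ref{Rate}) is concave in $\mathbf{p}_q$ and jointly continuous. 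These are exactly the hypotheses of the classical existence theorem for concave games (Rosen, Debreu--Fan--Glicksberg), which yields a pure-strategy NE for every choice of channels, masks, and powers.

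For uniqueness I would pass to the fixed-point form: every NE is a fixed point of the joint best response $\mathbf{T}(\mathbf{p})\triangleq(\mathsf{WF}_q(\mathbf{p}_{-q}))_{q\in\Omega}$. Two ingredients are needed. First, the single-user waterfilling $\mathsf{WF}_q(\mathbf{p}_{-q})$ is the Euclidean projection onto $\mathscr{P}_q$ of a point that depends affinely on the interference vector with entries $r_q(k)=\Gamma_q(1+\sum_{r\neq q}|H_{rq}(k)|^2 p_r(k))/|H_{qq}(k)|^2$; since projections onto convex sets are non-expansive, $\|\mathsf{WF}_q(\mathbf{p}_{-q})-\mathsf{WF}_q(\mathbf{p}'_{-q})\|_2\le\|\mathbf{r}_q-\mathbf{r}'_q\|_2$, the water-level coupling across carriers being absorbed into the projection. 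Second, on the active set $\mathcal{D}_q$ one has $r_q(k)-r'_q(k)=\sum_{r\neq q}[\mathbf{H}(k)]_{qr}(p_r(k)-p'_r(k))$, whereas for $k\notin\mathcal{D}_q$ user $q$ puts zero power at any best response by the very definition (\ref{D_q}) and contributes nothing; this is exactly why $\mathbf{H}(k)$ is supported on $\mathcal{D}_q\cap\mathcal{D}_r$.

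The crux, and the step I expect to be the main obstacle, is to weld the per-user Euclidean estimate to the per-carrier interference bound and obtain a genuine contraction, because the hypothesis is stated carrier-by-carrier through the spectral radius of $\mathbf{H}(k)$ (which is weaker than its spectral norm) rather than as a global bound; a naive triangle-inequality collapse to the single matrix $(\max_k[\mathbf{H}(k)]_{qr})_{qr}$ would only yield the more conservative requirement $\rho(\max_k\mathbf{H}(k))<1$. To recover the stated, broader condition I would invoke Perron--Frobenius \cite{Horn85}: since each $\mathbf{H}(k)\ge\mathbf{0}$ has $\rho(\mathbf{H}(k))<1$, there is a positive weight vector making $\mathbf{H}(k)$ a contraction in a weighted maximum norm with modulus arbitrarily close to $\rho(\mathbf{H}(k))$; carrying these weights into a block-weighted norm on the whole profile and pushing them through the non-expansiveness bound should render $\mathbf{T}$ a contraction, whence $\mathbf{T}$ has at most one fixed point and, every NE being such a fixed point, the equilibrium is unique. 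The delicate points are choosing the norm so that the carrier-dependence of the Perron weights and the cross-carrier coupling of the water-level are controlled simultaneously, and checking that the reduction to the sets $\mathcal{D}_q$ is tight enough to deliver the announced condition rather than a more conservative surrogate.
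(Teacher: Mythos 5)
Your existence argument is the paper's own: each $\mathscr{P}_q$ is nonempty, convex and compact, $R_q(\cdot,\mathbf{p}_{-q})$ is continuous and concave (hence quasi-concave), and the classical concave-game existence theorem (Rosen; Theorem 3 in the paper's appendix) applies. That part is correct and identical in substance.

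The uniqueness argument has a genuine gap, and it sits exactly at the step you call the crux. The contraction route through the best-response map can only deliver the stronger condition $\rho(\mathbf{H}^{\max})<1$, i.e., condition (C2), not (C1). Your Perron--Frobenius fix cannot be carried out for a structural reason: the total power constraint couples all carriers of user $q$ through the common water level $\mu_q$, so $\mathsf{WF}_q$ is a single Euclidean projection of the whole $N$-vector onto $\mathscr{P}_q$; it is non-expansive only with respect to the (unweighted) Euclidean norm that defines it, and there is no carrier-by-carrier non-expansiveness into which carrier-dependent Perron weights for the different matrices $\mathbf{H}(k)$ could be inserted. Once you aggregate across carriers (Minkowski), you are forced onto $\mathbf{H}^{\max}$. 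Worse, under (C1) alone the best-response map need not be a contraction in \emph{any} norm, so no choice of weights can rescue the plan. Take $Q=2$, $N=2$ with ``triangular'' interference: user $2$ interferes with user $1$ only on carrier $1$ (gain $a$), and user $1$ interferes with user $2$ only on carrier $2$ (gain $b$). Then each $\mathbf{H}(k)$ is strictly triangular, so $\rho(\mathbf{H}(1))=\rho(\mathbf{H}(2))=0$ and (C1) holds for arbitrarily large $a,b$, while $\rho(\mathbf{H}^{\max})=\sqrt{ab}$ can be huge. Composing the two waterfilling responses, in the region where both users occupy both carriers the composite map is affine with slope $ab/4$, which exceeds $1$ when $ab>4$: the map expands distances there. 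The NE is still unique (the fixed point lies in the clipped region where one user has abandoned a carrier), but uniqueness is manifestly not a contraction phenomenon. This is also consistent with the paper's Remark 3, which attributes convergence of the iterative algorithms in Part II to (C2), not (C1).

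The paper's proof avoids the fixed-point formulation altogether. It assumes two equilibria $\mathbf{p}^{(0)}\neq\mathbf{p}^{(1)}$, writes the KKT conditions for both, cross-multiplies the two stationarity equations by $\pm(\mathbf{p}_q^{(1)}-\mathbf{p}_q^{(0)})^{T}$ and adds; linearity of the constraints plus complementary slackness makes the multiplier terms nonnegative, which forces the componentwise inequality $(\mathbf{I}-\overline{\mathbf{H}})\boldsymbol{\Delta}\leq\mathbf{0}$ on the nonnegative vector $\boldsymbol{\Delta}$ of entries $|p_q^{(1)}(k)-p_q^{(0)}(k)|$. The sign-reversal characterization of $\mathbf{P}$-matrices (Lemma 2) then yields $\boldsymbol{\Delta}=\mathbf{0}$ whenever $\mathbf{I}-\mathbf{H}$ is a $\mathbf{P}$-matrix, and $\mathbf{Z}$/$\mathbf{K}$-matrix theory (Lemma 3) converts that requirement into $\rho(\mathbf{H})<1$, which, by the block-diagonal (per-carrier) structure of $\mathbf{H}$, is exactly (C1). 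In other words, the water-level coupling that blocks your estimate is absorbed into the Lagrange-multiplier terms of the KKT system, something the projection/contraction formulation cannot replicate. If you insist on your route, the theorem you will actually prove is uniqueness under (C2), which is strictly more restrictive.
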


\begin{proof}
See Appendix \ref{proof of th Existence_uniqueness_NE}.
\end{proof}

\noindent

We provide now alternative sufficient conditions for Theorem \ref%
{th:existence_uniqueness_NE}. To this end, we first introduce the matrix $%
\mathbf{H}^{\max }\in
\mathbb{R}
^{Q\times Q}$, defined as
\begin{equation}
\left[ \mathbf{H}^{\max }\right] _{qr}%
\triangleq%
\left\{
\begin{array}{ll}
\Gamma_q\,\max\limits_{k\in \mathcal{D}_{q}\cap \mathcal{D}_{r}}\dfrac{|\bar{%
H}_{rq}(k)|^{2}}{|\bar{H}_{qq}(k)|^{2}}\dfrac{d_{qq}^{\alpha }}{%
d_{rq}^{\alpha }}\dfrac{P_{r}}{P_{q}}, & \text{if }\ r\neq q, \\
0, & \text{otherwise,}%
\end{array}%
\right.  \label{H_max}
\end{equation}%
with the convention that the maximum in (\ref{H_max}) is zero if $\ \mathcal{%
D}_{q}\cap \mathcal{D}_{r}$ is empty. Then, we have the following corollary
of Theorem \ref{th:existence_uniqueness_NE}.

\begin{corollary}
\label{Corollary:SF_Uniqueness_H_max}A sufficient condition for (\ref{SF})
is:\vspace{-0.2cm}
\begin{equation}
\rho \left( \mathbf{H}^{\max }\right) <1,  \tag{C2}  \label{SF_H_max}
\end{equation}%
\vspace{-0.3cm} where $\mathbf{H}^{\max }$ is defined in (\ref{H_max}).
\end{corollary}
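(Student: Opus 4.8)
The plan is to exploit the fact that both $\mathbf{H}(k)$ and $\mathbf{H}^{\max }$ are entrywise nonnegative matrices, that $\mathbf{H}^{\max }$ dominates $\mathbf{H}(k)$ componentwise for every $k$, and then to invoke the monotonicity of the spectral radius on the cone of nonnegative matrices.

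First I would observe that, by the definitions in (\ref{def:H_matrix}) and (\ref{H_max}), every entry of $\mathbf{H}(k)$ and of $\mathbf{H}^{\max }$ is nonnegative: the factors $\Gamma _{q}$, $|\bar{H}_{rq}(k)|^{2}/|\bar{H}_{qq}(k)|^{2}$, $d_{qq}^{\alpha }/d_{rq}^{\alpha }$, and $P_{r}/P_{q}$ are all nonnegative, while the diagonal entries and the entries outside the support $\mathcal{D}_{q}\cap \mathcal{D}_{r}$ are set to zero. Hence $\mathbf{H}(k)\geq \mathbf{0}$ and $\mathbf{H}^{\max }\geq \mathbf{0}$ entrywise.

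Next comes the entrywise comparison. Fix $k$ and a pair $(q,r)$ with $r\neq q$. If $k\in \mathcal{D}_{q}\cap \mathcal{D}_{r}$, then the single-carrier ratio appearing in $[\mathbf{H}(k)]_{qr}$ is, by definition of the maximum, bounded above by $\max_{k^{\prime }\in \mathcal{D}_{q}\cap \mathcal{D}_{r}}|\bar{H}_{rq}(k^{\prime })|^{2}/|\bar{H}_{qq}(k^{\prime })|^{2}$, and multiplying through by the common positive factor $\Gamma _{q}\,(d_{qq}^{\alpha }/d_{rq}^{\alpha })(P_{r}/P_{q})$ gives $[\mathbf{H}(k)]_{qr}\leq \lbrack \mathbf{H}^{\max }]_{qr}$. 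If instead $k\notin \mathcal{D}_{q}\cap \mathcal{D}_{r}$, then $[\mathbf{H}(k)]_{qr}=0\leq \lbrack \mathbf{H}^{\max }]_{qr}$. The diagonal entries of both matrices vanish. Combining these cases yields $\mathbf{0}\leq \mathbf{H}(k)\leq \mathbf{H}^{\max }$ componentwise, for every $k\in \{1,\ldots ,N\}$.

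Finally I would invoke the standard monotonicity property of the Perron root: if $\mathbf{0}\leq \mathbf{A}\leq \mathbf{B}$ entrywise, then $\rho (\mathbf{A})\leq \rho (\mathbf{B})$ (see \cite{Horn85}). Taking $\mathbf{A}=\mathbf{H}(k)$ and $\mathbf{B}=\mathbf{H}^{\max }$ gives $\rho (\mathbf{H}(k))\leq \rho (\mathbf{H}^{\max })$ for each $k$, so that condition (\ref{SF_H_max}), namely $\rho (\mathbf{H}^{\max })<1$, immediately forces $\rho (\mathbf{H}(k))<1$ for all $k$, which is precisely (\ref{SF}). There is no genuine obstacle here: the statement is a two-line consequence of entrywise domination and the monotonicity of the spectral radius for nonnegative matrices, the only point needing a moment's care being the verification of the componentwise inequality on the support sets $\mathcal{D}_{q}\cap \mathcal{D}_{r}$, which is immediate from the defining maximum.
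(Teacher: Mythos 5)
Your proposal is correct and follows exactly the paper's own route: the paper proves this corollary by citing the monotonicity of the spectral radius for entrywise nonnegative matrices (\cite[Corollary 8.1.19]{Horn85}), i.e.\ $\mathbf{0}\leq \mathbf{H}(k)\leq \mathbf{H}^{\max }$ implies $\rho(\mathbf{H}(k))\leq \rho(\mathbf{H}^{\max })$, which is precisely your argument. Your write-up merely makes explicit the entrywise domination check that the paper leaves implicit, so there is nothing to add or correct.
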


To give additional insight into the physical interpretation of the
conditions for the uniqueness of the NE, we introduce the following
corollary.

\begin{corollary}
\label{Corollary:SF_Uniqueness_DD}A sufficient condition for (\ref{SF}) is
given by one of the two following set of conditions:%
\begin{equation}
\hspace{-0.1cm}\dfrac{\Gamma_q}{w_{q}}\text{ }\!\!\dsum\limits_{r=1,r\neq
q}w_{r}\left[ \mathbf{H}(k)\right] _{qr}<1,\text{ }\forall q\in \Omega ,%
\hspace{-0.15cm}\quad \text{and\quad }\forall k\in \{1,\ldots ,N\},  \tag{C3}
\label{SF_H_a}
\end{equation}%
\begin{equation}
\hspace{-0.1cm}\dfrac{1}{w_{r}}\!\!\text{ }\dsum\limits_{q=1,q\neq
r}\Gamma_q\,w_{q}\left[ \mathbf{H}(k)\right] _{rq}<1,\text{ }\forall r\in
\Omega ,\hspace{-0.15cm}\hspace{-0.15cm}\quad \text{and}\quad \forall k\in
\{1,\ldots ,N\},  \tag{C4}  \label{SF_H_b}
\end{equation}%
where $\mathbf{w}\triangleq \lbrack w_{1},\ldots ,w_{Q}]^{T}$ is any
positive vector and $\mathbf{H}(k)$ is defined in (\ref{def:H_matrix}).
\end{corollary}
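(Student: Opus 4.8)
The plan is to reduce both (C3) and (C4) to the single assertion that $\rho(\mathbf{H}(k)) < 1$ for every $k$, which is condition (C1), and then invoke Theorem \ref{th:existence_uniqueness_NE}. Two elementary facts do all the work: first, that $\mathbf{H}(k)$ is entrywise nonnegative with zero diagonal (every entry in \eqref{def:H_matrix} is a product of nonnegative quantities, and $[\mathbf{H}(k)]_{qq}=0$); and second, that the spectral radius is invariant under similarity and dominated by any induced matrix norm, $\rho(\mathbf{B}) \leq \norm{\mathbf{B}}$. I would state and use the latter directly rather than quote a black-box theorem, since the whole argument is one line once the right norm is chosen.

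For (C3) I would scale by the weights. Let $\mathbf{D} \triangleq \limfunc{diag}(\mathbf{w})$ with $\mathbf{w}>0$; then $\rho(\mathbf{H}(k)) = \rho(\mathbf{D}^{-1}\mathbf{H}(k)\mathbf{D})$, and the $(q,r)$ entry of $\mathbf{D}^{-1}\mathbf{H}(k)\mathbf{D}$ is $(w_r/w_q)[\mathbf{H}(k)]_{qr} \geq 0$. Bounding $\rho$ by the induced $\infty$-norm (maximum absolute row sum) gives $\rho(\mathbf{H}(k)) \leq \max_q \frac{1}{w_q}\sum_{r\neq q} w_r [\mathbf{H}(k)]_{qr}$, the diagonal terms dropping out. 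Writing each entry as $[\mathbf{H}(k)]_{qr} = \Gamma_q\,(|\bar{H}_{rq}(k)|^2/|\bar{H}_{qq}(k)|^2)\cdots$ exhibits the displayed $\Gamma_q$ as the common row-$q$ factor, so condition (C3) says exactly that every weighted row sum is strictly below $1$; hence the maximum is $<1$ and (C1) holds.

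For (C4) I would run the same argument on the columns, i.e. bound $\rho$ by the induced $1$-norm of $\mathbf{D}^{-1}\mathbf{H}(k)\mathbf{D}$ (equivalently, apply the row-sum estimate to $\mathbf{H}(k)^{T}$, using $\rho(\mathbf{H}(k)) = \rho(\mathbf{H}(k)^{T})$). This yields $\rho(\mathbf{H}(k)) \leq \max_r \frac{1}{w_r}\sum_{q\neq r} w_q [\mathbf{H}(k)]_{qr}$. Because $\mathbf{w}$ ranges over all positive vectors, I am free to use the reciprocal weights here, and that is precisely what turns the column-sum estimate into the form printed in (C4); again the $\Gamma_q$ appearing inside the sum is the row factor of the summed entries. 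Thus (C4) also forces (C1).

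I do not expect a genuine obstacle: the corollary is just a specialization of the weighted row/column-sum bounds for the Perron root, and since every $\Gamma_q \geq 1$ the exact placement of the $\Gamma$ factors can only make the stated inequalities more conservative, never insufficient. The only points requiring care are the bookkeeping that keeps (C3) on the row side and (C4) on the column side, the harmless vanishing of the off-support entries (those with $k\notin \mathcal{D}_q\cap\mathcal{D}_r$) from all the sums, and the observation that the freedom in choosing $\mathbf{w}>0$ is exactly what makes each of (C3) and (C4) a whole family of sufficient conditions, with the plain (unweighted) diagonal-dominance version recovered at $\mathbf{w}=\mathbf{1}$.
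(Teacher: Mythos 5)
Your proof is correct and is essentially the paper's own argument: the paper bounds $\rho(\mathbf{H}(k))=\rho(\mathbf{H}^{T}(k))$ by the weighted maximum norm $\Vert \cdot \Vert_{\infty}^{\mathbf{w}}$ (quoting Horn--Johnson), which is exactly your diagonal-similarity step $\rho(\mathbf{H}(k))=\rho(\mathbf{D}^{-1}\mathbf{H}(k)\mathbf{D})\leq \Vert \mathbf{D}^{-1}\mathbf{H}(k)\mathbf{D}\Vert_{\infty}$ (resp.\ the induced $1$-norm) with $\mathbf{D}=\limfunc{diag}(\mathbf{w})$. Your two supporting observations --- that the extra $\Gamma_{q}\geq 1$ factors only make (C3)/(C4) more conservative, and that the column-sum case is recovered from the row-sum case via $\rho(\mathbf{H})=\rho(\mathbf{H}^{T})$ together with the freedom to replace $\mathbf{w}$ by its reciprocal --- correspond precisely to how the paper's one-line proof applies the same norm bound to both $\mathbf{H}(k)$ and its transpose for arbitrary $\mathbf{w}>\mathbf{0}$.
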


Note that, as a by-product of the proof of Theorem \ref%
{th:existence_uniqueness_NE}, one can always choose $\mathcal{D}%
_{q}=\{1,\ldots ,N\}$ in (\ref{SF})-(\ref{SF_H_b}), i.e., without excluding
any subcarrier. However, less stringent conditions are obtained by removing the
unnecessary carriers, i.e., those carriers that, for a given power budget
and interference levels, are never going to be used.\bigskip

\noindent \textbf{Remark 2 $-$ Physical interpretation of uniqueness
conditions. }As expected, the uniqueness of NE is ensured if the links are
sufficiently far apart from each other. In fact, from (\ref{SF_H_a})-(\ref%
{SF_H_b}) for example, one infers that there exists a minimum distance
beyond which the uniqueness of NE is guaranteed, corresponding to the
maximum level of interference that may be tolerated by the users.
Specifically, condition (\ref{SF_H_a}) imposes a constraint on the maximum
amount of interference that each receiver can tolerate; whereas (\ref{SF_H_b}%
) introduces an upper bound on the maximum level of interference that each
transmitter is allowed to generate. This result agrees with the fact that,
as the MUI becomes negligible, the rates $R_{q}$ in (\ref{Rate}) become
decoupled and then the rate-maximization problem in (\ref{Rate Game}) for
each user admits a unique solution. But, the most interesting result coming
from conditions (\ref{SF})-(\ref{SF_H_b}) is that the uniqueness of the
equilibrium is robust against the worst normalized channels $%
|H_{rq}(k)|^{2}/ $ $|H_{qq}(k)|^{2};$ in fact, the subchannels corresponding
to the highest ratios $|H_{rq}(k)|^{2}/|H_{qq}(k)|^{2}$ (and, in particular,
the subchannels where $|H_{qq}(k)|^{2}$ is vanishing) do not necessarily
affect the uniqueness condition, as their carrier indices may not belong to the set $%
\mathcal{D}_{q}$.\bigskip

\noindent \textbf{Remark 3 $-$ Uniqueness condition and distributed algorithms.
}Interestingly, condition (\ref{SF_H_max}), in addition to guarantee the
uniqueness of the NE, is also responsible for the convergence of both
simultaneous and sequential iterative waterfilling algorithms, proposed in
Part II of the paper \cite{Scutari-Part II}.

\bigskip

\noindent \textbf{Remark 4 $-$ Comparison with previous conditions. }Theorem %
\ref{th:existence_uniqueness_NE} unifies and generalizes many existence and
uniqueness results obtained in the literature \cite{Yu}-\cite%
{Scutari-Barbarossa-SPAWC03}, \cite{Tse} for the special cases of game ${%
{\mathscr{G}}%
}$ in (\ref{Rate Game}). Specifically, in \cite{Yu}-\cite%
{Scutari-Barbarossa-SPAWC03} a game as in (\ref{Rate Game}) is studied,
where all the players are assumed to have the same power budget and no
spectral mask constraints are considered [i.e., $p_{q}^{\max }(k)=+\infty
,\forall k,q$]. In \cite{Tse} instead, the channel is assumed to be flat
over the whole bandwidth. Interestingly, the conditions obtained in \cite{Yu}-%
\cite{Scutari-Barbarossa-SPAWC03}, \cite{Tse} are more restrictive than (\ref%
{SF})-(\ref{SF_H_b}), as shown in the following corollary of Theorem \ref%
{th:existence_uniqueness_NE}.\footnote{%
We summarize the main results of \cite{Yu}-\cite{Scutari-Barbarossa-SPAWC03}
using our notation to facilitate the comparison.}

\begin{corollary}
\label{Corollary:Conditions of the others}Sufficient conditions for (\ref%
{SF_H_a}) are \cite{Yu, ChungISIT03, Scutari-Barbarossa-SPAWC03}%
\begin{equation}
\Gamma _{q}\max\limits_{k\in \{1,\ldots ,N\}}\left\{ \dfrac{|\bar{H}%
_{rq}(k)|^{2}}{|\bar{H}_{qq}(k)|^{2}}\right\} \dfrac{d_{qq}^{\alpha }}{%
d_{rq}^{\alpha }}\dfrac{P_{r}}{P_{q}}<\dfrac{1}{Q-1},\hspace{1.5cm}\forall
\text{ }r,q\neq r\in \Omega ,  \tag{C5}  \label{Chung SF_}
\end{equation}%
or \cite{Yamashitay-Luo}%
\begin{equation}
\Gamma _{q}\max\limits_{k\in \{1,\ldots ,N\}}\left\{ \dfrac{|\bar{H}%
_{rq}(k)|^{2}}{|\bar{H}_{qq}(k)|^{2}}\right\} \dfrac{d_{qq}^{\alpha }}{%
d_{rq}^{\alpha }}\dfrac{P_{r}}{P_{q}}<\dfrac{1}{2Q-3},\hspace{1.5cm}\forall
\text{ }r,q\neq r\in \Omega .  \tag{C6}  \label{Luo-Yamashitay}
\end{equation}

In the case of flat-fading channels (i.e., $\bar{H}_{rq}(k)=\bar{H}_{rq}$, $%
\forall r,q)$, condition (\ref{SF_H_a}) becomes \cite{Tse}%
\begin{equation}
\Gamma _{q}\sum_{r=1,r\neq q}^{Q}\dfrac{|\bar{H}_{rq}|^{2}}{|\bar{H}%
_{qq}|^{2}}\dfrac{d_{qq}^{\alpha }}{d_{rq}^{\alpha }}\dfrac{P_{r}}{P_{q}}%
<1,\qquad \forall q\in \Omega .  \label{Tse-cond}
\end{equation}
\end{corollary}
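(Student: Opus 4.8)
The plan is to read this corollary off Corollary~\ref{Corollary:SF_Uniqueness_DD}, which already guarantees that (\ref{SF_H_a}) is sufficient for the uniqueness condition (\ref{SF}). Hence it is enough to check three implications: (\ref{Chung SF_})$\Rightarrow$(\ref{SF_H_a}), (\ref{Luo-Yamashitay})$\Rightarrow$(\ref{SF_H_a}), and that under flat fading (\ref{SF_H_a}) collapses to (\ref{Tse-cond}). The one degree of freedom I would exploit is that (\ref{SF_H_a}) only has to hold for \emph{some} positive weight vector $\mathbf{w}$; the cited conditions carry no weighting, so I would simply take $\mathbf{w}=\mathbf{1}$. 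With this choice (\ref{SF_H_a}) becomes the plain absolute row-sum bound $\sum_{r\neq q}[\mathbf{H}(k)]_{qr}<1$ for every $q$ and every $k$ (the weighted $\ell_{\infty}$ bound behind Corollary~\ref{Corollary:SF_Uniqueness_DD}, via $\rho(\mathbf{H}(k))\le\|\mathbf{H}(k)\|_{\infty}$). Any cleverer choice of $\mathbf{w}$ would only relax (\ref{SF_H_a}) further, which already signals that (\ref{SF_H_a}) is the weaker requirement.

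For (\ref{Chung SF_})$\Rightarrow$(\ref{SF_H_a}) I would fix $q$ and $k$ and observe that row $q$ of $\mathbf{H}(k)$ has at most $Q-1$ nonzero off-diagonal entries. By the definition (\ref{def:H_matrix}), each such entry is dominated by its maximum over the subcarrier index and then bounded through (\ref{Chung SF_}):
\[
\left[\mathbf{H}(k)\right]_{qr}\;\le\;\Gamma_{q}\,\max_{k'\in\{1,\ldots,N\}}\left\{\frac{|\bar{H}_{rq}(k')|^{2}}{|\bar{H}_{qq}(k')|^{2}}\right\}\frac{d_{qq}^{\alpha}}{d_{rq}^{\alpha}}\frac{P_{r}}{P_{q}}\;<\;\frac{1}{Q-1}.
\]
Summing the at most $Q-1$ terms of row $q$ then gives a row sum strictly below $(Q-1)\cdot\frac{1}{Q-1}=1$, i.e.\ exactly (\ref{SF_H_a}) with $\mathbf{w}=\mathbf{1}$.

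The remaining two claims are immediate variations. Since $\frac{1}{2Q-3}\le\frac{1}{Q-1}$ for every $Q\ge 2$, condition (\ref{Luo-Yamashitay}) is at least as restrictive as (\ref{Chung SF_}) and hence also implies (\ref{SF_H_a}) (alternatively, summing $Q-1$ terms each below $\frac{1}{2Q-3}$ gives a row sum below $\frac{Q-1}{2Q-3}\le 1$). For flat-fading channels $\bar{H}_{rq}(k)\equiv\bar{H}_{rq}$, so $\mathbf{H}(k)$ no longer depends on $k$, the maximum over $k'$ above is vacuous, and the single row-sum bound for each $q$ is verbatim (\ref{Tse-cond}); thus in that regime (\ref{SF_H_a}) is not merely implied by but equals (\ref{Tse-cond}).

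The only step that requires genuine care, and also the point that makes the corollary worth stating, is the gap between the subcarrier-restricted entries of $\mathbf{H}(k)$ and the full-band maxima in (\ref{Chung SF_})--(\ref{Luo-Yamashitay}). The nonzero entries of $\mathbf{H}(k)$ involve only carriers in $\mathcal{D}_{q}\cap\mathcal{D}_{r}$, whereas the cited conditions maximize the cross-channel ratio over all of $\{1,\ldots,N\}$; because $\mathcal{D}_{q}\subseteq\{1,\ldots,N\}$, the full-band maximum dominates and the cited bounds over-estimate the entries that actually enter $\mathbf{H}(k)$. This is precisely why (\ref{Chung SF_})--(\ref{Luo-Yamashitay}) are strictly more demanding than (\ref{SF_H_a}): a subchannel with a large ratio $|\bar{H}_{rq}(k)|^{2}/|\bar{H}_{qq}(k)|^{2}$ (in particular one where $|\bar{H}_{qq}(k)|^{2}$ nearly vanishes) can violate the cited conditions yet be harmless for (\ref{SF_H_a}) whenever its index lies outside $\mathcal{D}_{q}$. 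Beyond this observation the argument is purely a termwise estimate, and the strict inequalities survive the finite summation, so no real obstacle remains.
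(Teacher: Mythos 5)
Your proof is correct and is exactly the argument the paper leaves implicit (the corollary is stated without a dedicated proof): each nonzero entry of row $q$ of $\mathbf{H}(k)$, being restricted to carriers in $\mathcal{D}_{q}\cap\mathcal{D}_{r}$, is dominated by the full-band maximum appearing in (C5), so the at most $Q-1$ off-diagonal terms sum to less than $1$, while (C6) follows since $1/(2Q-3)\leq 1/(Q-1)$ for $Q\geq 2$, and the flat-fading identification with the condition of \cite{Tse} is immediate. One small remark: you tacitly read (C3) with $\mathbf{w}=\mathbf{1}$ as the plain row-sum bound $\sum_{r\neq q}[\mathbf{H}(k)]_{qr}<1$, absorbing the $\Gamma_{q}$ prefactor into $[\mathbf{H}(k)]_{qr}$ as in (\ref{def:H_matrix}); this is the intended reading (otherwise $\Gamma_{q}$ would be counted twice and the implication from (C5) would fail for $\Gamma_{q}>1$), and it matches the weighted-norm bound $\rho(\mathbf{H}(k))\leq\Vert\mathbf{H}(k)\Vert_{\infty}^{\mathbf{w}}$ actually derived at the end of Appendix B.
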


Recently, alternative sufficient conditions for the uniqueness of the NE of
game ${%
{\mathscr{G}}%
}$ were given in \cite{Luo-Pang}.\footnote{%
We thank Prof. Facchinei, who kindly brought to our attention reference \cite%
{Luo-Pang}, after this paper was completed.} Among all, an easy condition to
be checked is the following:
\begin{equation}
\mathbf{I+H}(k)\text{ is positive definite for all }k\in \{1,\ldots ,N\},
\tag{C7}  \label{Luo-Cond}
\end{equation}%
where $\mathbf{H}(k)$ is defined as in (\ref{def:H_matrix}), with each $%
\mathcal{D}_{q}=\{1,\ldots ,N\}.$

All the conditions above depend on the channel realizations $\left\{ \bar{H}%
_{rq}(k)\right\} $ and on the network topology through the distances $%
\left\{ d_{rq}\right\} .$ Hence, there is a nonzero probability that they
are not satisfied for a given set of channel realizations, drawn from a
given probability space. In order to compare the goodness of the above
conditions, we tested them over a set of channel impulse responses generated
as vectors composed of i.i.d. complex Gaussian random variables with zero
mean and unit variance. We plot in Figure \ref{Fig:Comparison_SF} the
probability that conditions (\ref{SF}), (\ref{Chung SF_}) and (\ref{Luo-Cond}%
) are satisfied versus the ratio $d_{rq}/d_{qq}$, i.e., the normalized
interlink distance. For the sake of simplicity, we assumed $d_{rq}=d_{qr},$ $%
P_{q}=P_{r}$ and $\Gamma _{q}=1,$ $\forall q,r\in \Omega .$ We considered $%
Q=5$ [Figure \ref{Fig:Comparison_SF}(a)] and $Q=15$ [Figure \ref%
{Fig:Comparison_SF}(b)] active links. We tested our condition considering in
(\ref{SF}) a set $\mathcal{D}_{q}$ obtained using the following worst case
scenario. For each user $q$, we build the worst possible interferer as the
virtual node (denoted by $v$) that has a power budget equal to the sum of
the transmit powers of all the other users (i.e., $\sum_{r\neq q}P_{r}$) and
channel between its own transmitter and receiver $q$ as the highest channel
among all the interference channels with respect to receiver $q,$ i.e., $%
|H_{vq}(k)|^{2}=\max_{r\neq q}|H_{rq}(k)|^{2}.$ We build a set that includes the set $\mathcal{D}_{q}$ defined in (\ref{D_q}) using the following iterative procedure: For each subcarrier
$k,$ the virtual user distributes its own power ($\sum_{r\neq q}P_{r}$)
across the whole spectrum in order to facilitate user $q$ to use the
subcarrier $k,$ as much as possible. If, even under these circumstances,
user $q$ is not going to use subcarrier $k,$ because of its own power budget
$P_{q}$ and $|H_{qq}(k)|^{2},$ then we are sure that index $k$ can not
possibly belong to $\mathcal{D}_{q}.$

\begin{figure}[tbp]
\centering\vspace{-0.7cm}
\subfigure[\,] {\includegraphics[height=6cm,
width=7.6cm]{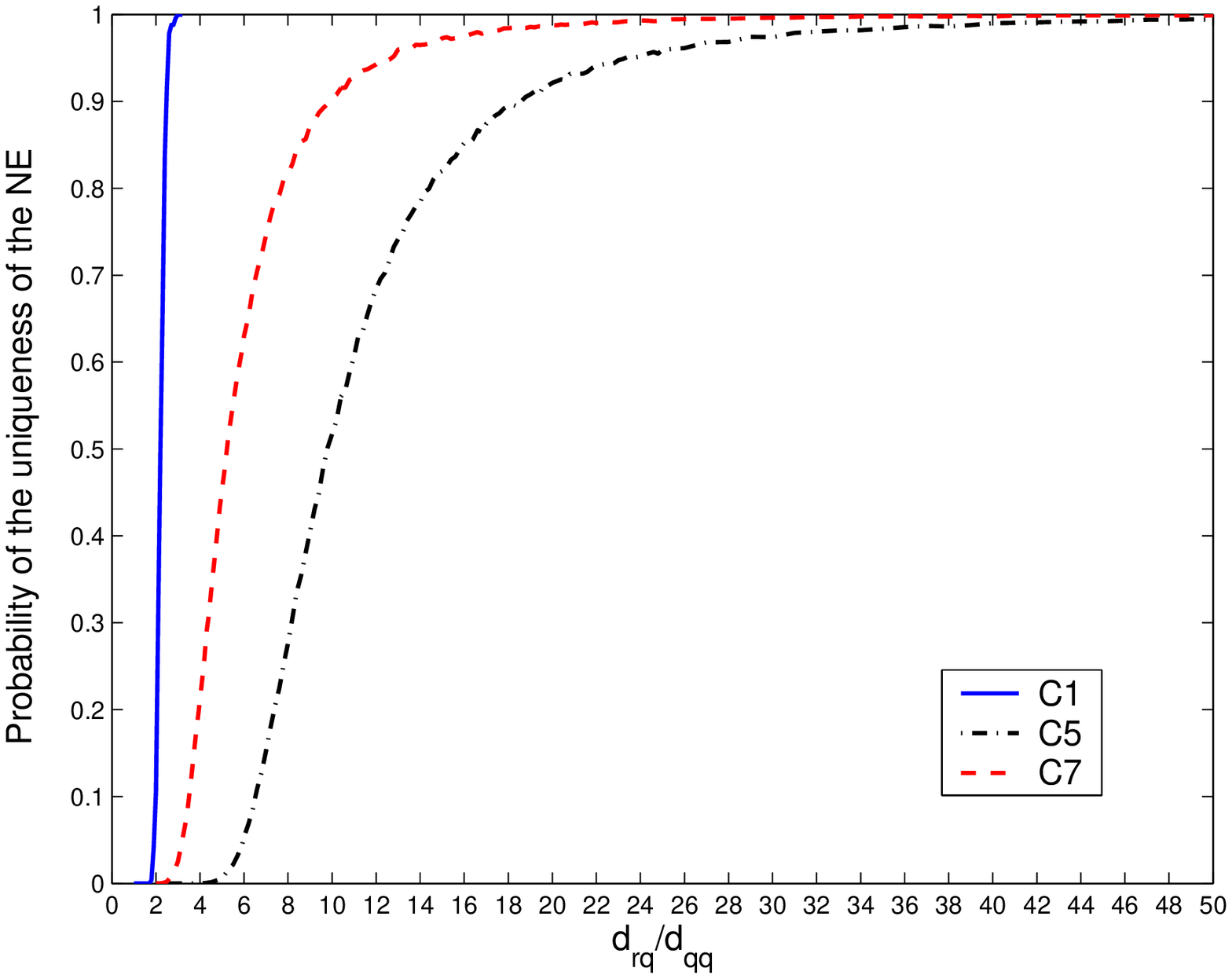}
\hfill}\hspace{0.2cm}
\subfigure[]{\includegraphics[height=6cm,
width=7.9cm]{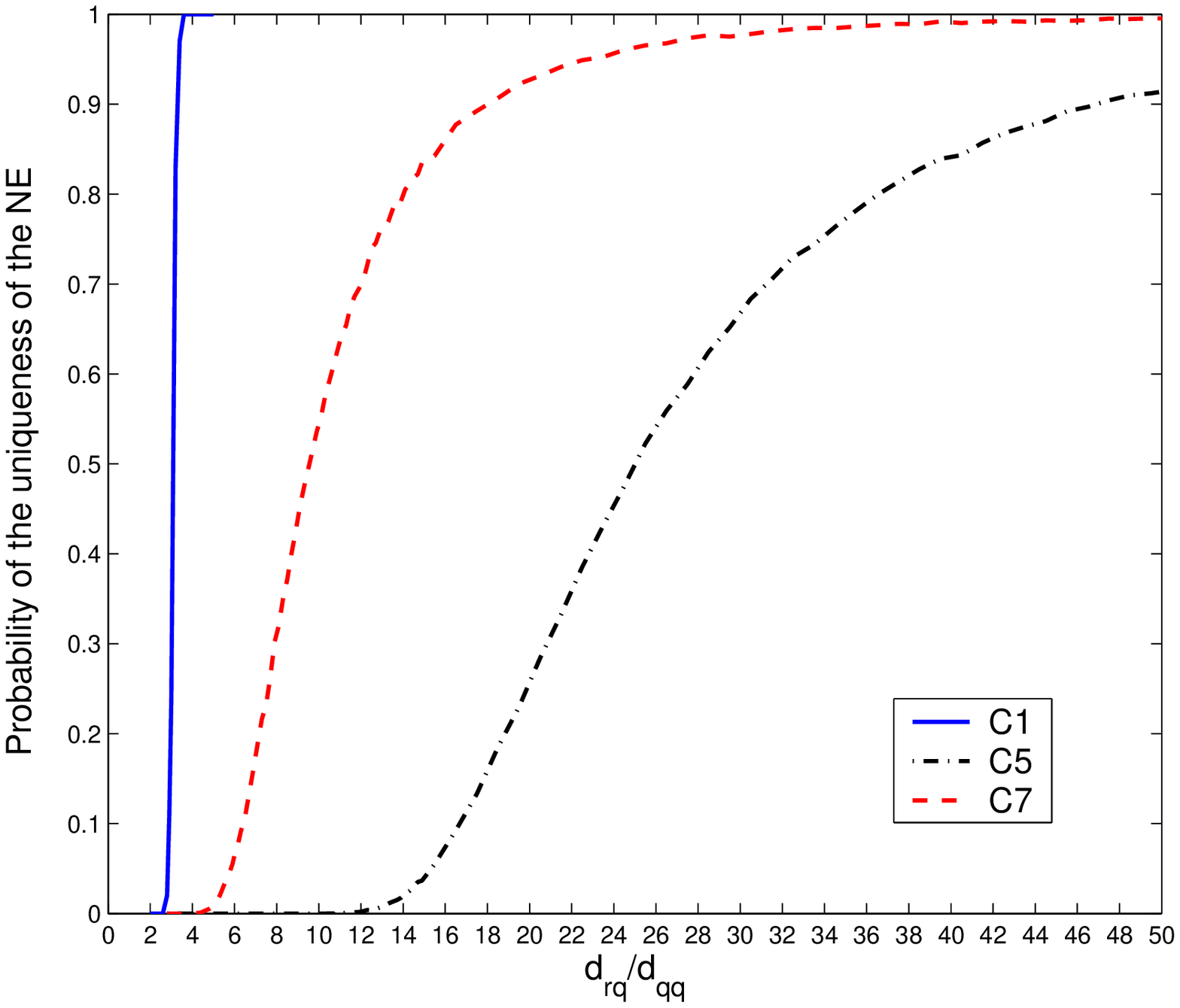}
\hfill}
\vspace{-0.2cm}
\caption{{\protect\small Probability of (C1), (C5) and (C7) versus $%
d_{rq}/d_{qq}$; $Q=5$ [subplot (a)], $Q=15$ [subplot (b)], $\protect\gamma %
=2.5,$ $\Gamma _{q}=1,$ $d_{rq}/d_{qq}=d_{qr}/d_{rr},$ $\forall r,q\in
\Omega ,$ }$N=64.${\protect\small \ }}
\label{Fig:Comparison_SF}
\end{figure}

We can see, from Figure \ref{Fig:Comparison_SF}, that the probability that
the NE is unique increases as the links become more and more separated of
each other (i.e., the ratio $d_{rq}/d_{qq}$ increases). Furthermore, we can
verify that, even having not considered the smallest possible set $\mathcal{D%
}_{q}$, as defined in (\ref{D_q}), our condition (\ref{SF}) has a much higher
probability of being satisfied than (\ref{Chung SF_}) and (\ref{Luo-Cond}).
The main difference between our condition (\ref{SF}) and (\ref{Chung SF_}), (%
\ref{Luo-Cond}) is that (\ref{SF}) exhibits a neat threshold behavior since
it transits very rapidly from the non-uniqueness guarantee to the almost
certain uniqueness, as the inter-user distance ratio $d_{rq}/d_{qq}$
increases by a small amount. This shows that the uniqueness condition (%
\ref{SF}) depends, ultimately, on the interlink distance rather than on
the channel realization. This represents the fundamental difference between
our uniqueness condition and those given in the literature. As an example,
for a system with $Q=5$ links and probability $0.99$ of guaranteeing the
uniqueness of the NE, condition (\ref{SF}) requires $d_{rq}/d_{qq}\simeq 2$
whereas conditions (\ref{Chung SF_}) and (\ref{Luo-Cond}) require $%
d_{rq}/d_{qq}>40$ and $d_{rq}/d_{qq}\simeq 24$, respectively. Furthermore,
this difference increases as the number $Q$ of links increases. \vspace{%
-0.5cm}

\section{Physical Interpretation of NE\label{Sec: OFDA/CDMA interpretation}}

\vspace{-0.2cm} In this section we provide a physical interpretation of the
optimal power allocation corresponding to the NE in the limiting cases of
low and high MUI.\footnote{%
For the sake of notation, in this section we consider only the case in which
$p_{q}^{\max }(k)=+\infty ,$ $\forall q,\forall k,$ but it is
straightforward to see that our derivations can be easily generalized to the
case of spectral mask constraints.} To quantify what low and high
interference mean, we introduce the SNR of link $q$ (denoted by $%
\mathsf{snr}%
_{q}$) and the Interference-to-Noise Ratio due to the interference received
by destination $q$ and generated by source $r,$ with $r\neq q$ (denoted by $%
\mathsf{inr}%
_{rq}$), defined as $%
\mathsf{snr}%
_{q}%
\triangleq%
P_{q}/(\sigma _{q}^{2}d_{qq}^{\alpha })$ and $%
\mathsf{inr}%
_{rq}%
\triangleq%
P_{r}/(\sigma _{q}^{2}d_{rq}^{\alpha }).$ Using $%
\mathsf{snr}%
_{q}$ and $%
\mathsf{inr}%
_{rq},$ the SINR $%
\mathsf{sinr}%
_{q}(k)$ in (\ref{SINR}) can be rewritten as
\begin{equation}
\mathsf{sinr}%
_{q}(k)=\frac{%
\mathsf{snr}%
_{q}\left\vert \bar{H}_{qq}(k)\right\vert ^{2}p_{q}(k)}{1+\sum_{\,r\neq q}%
\mathsf{inr}%
_{rq}\left\vert \bar{H}_{rq}(k)\right\vert ^{2}p_{r}(k)}.  \label{SINR_2}
\end{equation}%
\medskip

\noindent \textbf{Low interference case.}\textit{\ } Consider the low
interference case, i.e., the situation where the interference term in the
denominator in (\ref{SINR_2}) can be neglected. A sufficient condition to
satisfy this assumption is that the links are sufficiently far apart from
each other, i.e., $d_{rq}>>d_{qq}$ $\forall r\neq q$,$r,q\in \Omega $. For
sufficiently small $%
\mathsf{inr}%
_{rq}$ and sufficiently large $%
\mathsf{snr}%
_{q}$, condition (\ref{SF}) is satisfied and, hence, by Theorem 1, the NE is
unique. Also, by inspection of the waterfilling solution in (\ref%
{sym_WF-sistem}), it is clear that under those conditions, $p_{q}(k)>0$ for
all $q\in \Omega $ and $k\in \{1,\ldots ,N\}$. This means that each source
uses the whole bandwidth. Furthermore, it is well known that as the SNR
increases, the waterfilling solution tends to a flat power allocation. In
summary, we have the following result.

\begin{proposition}
\label{Proposition_CDMA Interpretation}Given game $%
{\mathscr{G}}%
,$ there exist sets of values $\{%
\mathsf{inr}%
_{rq}^{\star }\}_{r\neq q\in \Omega }$ and $\{%
\mathsf{snr}%
_{q}^{\star }\}_{q\in \Omega },$ with $%
\mathsf{inr}%
_{rq}^{\star }<<1$ and $%
\mathsf{snr}%
_{q}^{\star }>>1,$ such that, for all $%
\mathsf{inr}%
_{rq}\leq
\mathsf{inr}%
_{rq}^{\star }$ and $%
\mathsf{snr}%
_{q}\geq
\mathsf{snr}%
_{q}^{\star },$ the NE of $%
{\mathscr{G}}%
$ is \emph{unique }(cf. Theorem \ref{th:existence_uniqueness_NE}) and all
users share the \textit{whole} available bandwidth. In addition, if $%
\mathsf{snr}%
_{q}>>%
\mathsf{snr}%
_{q}^{\star },$ then the optimal power allocation of each user tends to be
flat over the whole bandwidth.
\end{proposition}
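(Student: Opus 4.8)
The plan is to read off all three assertions directly from the waterfilling characterization (\ref{sym_WF-sistem})--(\ref{WF_mask}) of the equilibria of $\mathscr{G}$, recalling that throughout this section $p_q^{\max}(k)=+\infty$. Existence is immediate: Theorem~\ref{th:existence_uniqueness_NE} already guarantees a nonempty solution set for \emph{every} configuration of channels and powers, so nothing needs to be assumed there. For uniqueness I would verify the spectral-radius condition (\ref{SF}). Using the normalization made explicit in (\ref{SINR_2}), namely $|H_{qq}(k)|^{2}=\mathsf{snr}_q\,|\bar H_{qq}(k)|^{2}$ and $|H_{rq}(k)|^{2}=\mathsf{inr}_{rq}\,|\bar H_{rq}(k)|^{2}$, the off-diagonal entries of $\mathbf{H}(k)$ in (\ref{def:H_matrix}) become $[\mathbf{H}(k)]_{qr}=\Gamma_q\,\frac{|\bar H_{rq}(k)|^{2}}{|\bar H_{qq}(k)|^{2}}\,\frac{\mathsf{inr}_{rq}}{\mathsf{snr}_q}$. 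Since $\mathbf{H}(k)$ is nonnegative with zero diagonal, $\rho(\mathbf{H}(k))$ is dominated by its largest row sum; and because the channel is fixed (with $|\bar H_{qq}(k)|^{2}>0$) there are only finitely many ratios $|\bar H_{rq}(k)|^{2}/|\bar H_{qq}(k)|^{2}$, bounded by some constant $C$. Hence every row sum is at most $C(Q-1)\max_{r}\mathsf{inr}_{rq}/\mathsf{snr}_q$, and choosing $\mathsf{inr}_{rq}^{\star}\ll 1$, $\mathsf{snr}_q^{\star}\gg 1$ with $\mathsf{inr}^{\star}/\mathsf{snr}^{\star}<1/(C(Q-1))$ forces $\rho(\mathbf{H}(k))<1$ for all $k$, so (\ref{SF}) holds and the NE is unique by Theorem~\ref{th:existence_uniqueness_NE}.

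For the full-bandwidth claim I would examine one coordinate of the best response at the (now unique) equilibrium $\mathbf{p}^{\star}$. Writing $b_q(k)\triangleq \Gamma_q\,(1+I_q(k))/(\mathsf{snr}_q\,|\bar H_{qq}(k)|^{2})$ with $I_q(k)\triangleq\sum_{r\neq q}\mathsf{inr}_{rq}\,|\bar H_{rq}(k)|^{2}\,p_r^{\star}(k)\ge 0$, the operator (\ref{WF_mask}) gives $p_q^{\star}(k)=\max\{\mu_q-b_q(k),\,0\}$. The key auxiliary estimate is a \emph{uniform} lower bound on the water level: since $R_q$ increases in each $p_q(k)$ the power constraint in (\ref{admissible strategy set}) is active, $\tfrac1N\sum_k p_q^{\star}(k)=1$, and summing $p_q^{\star}(k)=\mu_q-b_q(k)$ over the active set $A$ yields $\mu_q=(N+\sum_{k\in A}b_q(k))/|A|\ge N/|A|\ge 1$. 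Moreover $I_q(k)$ is bounded, since each $p_r^{\star}(k)\le N$ (its average over $k$ equals $1$), so $b_q(k)\to 0$ as $\mathsf{snr}_q\to\infty$ with the $\mathsf{inr}_{rq}$ small. Enlarging the thresholds of the first step if necessary, for $\mathsf{snr}_q\ge\mathsf{snr}_q^{\star}$ and $\mathsf{inr}_{rq}\le\mathsf{inr}_{rq}^{\star}$ one then has $b_q(k)<1\le\mu_q$ for every $k$, hence $p_q^{\star}(k)>0$ for all $k$: every user occupies the whole band and $A=\{1,\dots,N\}$.

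The asymptotic flatness then falls out of the same identity. With all carriers active, $\mu_q=1+\tfrac1N\sum_j b_q(j)$, so $p_q^{\star}(k)=1+\tfrac1N\sum_j b_q(j)-b_q(k)$; letting $\mathsf{snr}_q\to\infty$ drives every $b_q(j)\to 0$ and hence $p_q^{\star}(k)\to 1$ uniformly in $k$, i.e.\ the allocation tends to the flat profile. I expect the only real obstacle to be \emph{uniformity}: both $I_q(k)$ and the active set $A$ depend on the equilibrium $\mathbf{p}^{\star}$, which itself shifts as $\mathsf{snr}_q$ and $\mathsf{inr}_{rq}$ vary, so one cannot simply fix the NE and pass to a limit. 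The way around it is precisely the two parameter-free a~priori bounds used above, $\mu_q\ge 1$ (from the power constraint) and $p_r^{\star}(k)\le N$ (from feasibility), which decouple the control of $b_q(k)$ from the detailed structure of $\mathbf{p}^{\star}$ and let a single pair $(\mathsf{inr}^{\star},\mathsf{snr}^{\star})$ serve all three conclusions at once.
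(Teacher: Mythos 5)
Your proof is correct and follows essentially the same route as the paper, which justifies this proposition only through the informal discussion immediately preceding it: condition (C1) holds because the off-diagonal entries of $\mathbf{H}(k)$ scale as $\Gamma_q\,\mathsf{inr}_{rq}/\mathsf{snr}_q$ times bounded channel ratios, positivity of every $p_q^{\star}(k)$ follows by inspection of the simultaneous waterfilling solution, and flatness is the standard high-SNR limit of waterfilling. The only added value in your write-up is that the parameter-free bounds $\mu_q\geq 1$ and $p_r^{\star}(k)\leq N$ make the ``by inspection'' and ``well known'' steps rigorous uniformly over the equilibria, which shift with $\mathsf{snr}_q$ and $\mathsf{inr}_{rq}$ --- a detail the paper's sketch leaves implicit.
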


From Proposition \ref{Proposition_CDMA Interpretation}, it turns out, as it
could have been intuitively guessed, that when the interference is low, at
the (unique) NE, every user transmits over the entire available spectrum
(like a CDMA system), as in such a case nulling the interference would not
be worth of the bandwidth reduction. As a numerical example, in
Figure 2, we plot the optimal power spectral density (PSD) of a system composed of
three links, for different values of the ratio $d_{rq}/d_{qq}$. The
results shown in Figure 2 have been obtained using the distributed
algorithms described in Part II \cite{Scutari-Part II}. From Figure 2, we
can check that, as the ratio $d_{rq}/d_{qq}$ increases, the optimal PSD
tends to be flat, while satisfying the simultaneous waterfilling condition
in (\ref{sym_WF-sistem}).\bigskip

\noindent\textbf{High interference case.} When $\mathsf{inr}_{rq}>>1$ for
all $q$ and $r\neq q$, the interference is the dominant contribution in the
equivalent noise (thermal noise plus MUI) in the denominator of (\ref{SINR_2}%
). In this case, game $%
{\mathscr{G}}%
$ admits multiple Nash equilibria. An interesting class of these equilibria
includes the FDMA solutions (called orthogonal Nash equilibria), occurring
when the power spectra of different users are nonoverlapping.
The characterization of these equilibria is given in the following.

\begin{proposition}
\label{Prop_Subcarrier-allocation} Given game $%
{\mathscr{G}}%
,$ for each $q\in \Omega ,$ let $\mathcal{I}_{q}%
\triangleq%
\left\{ k\in \{1,\ldots ,N\}:p_{q}(k)>0,\text{ }p_{r}(k)=0,\text{ }\forall
r\right. $ $\left. \neq q\in \Omega \right\} $ denote the set of subcarriers
over which only user $q$ transmits. For any given $\{%
\mathsf{snr}%
_{q}\}_{q\in \Omega },$ there exists $\{%
\mathsf{inr}%
_{rq}^{\star }\}_{r\neq q\in \Omega },$ with each $%
\mathsf{inr}%
_{rq}^{\star }>>1,$ such that for all $%
\mathsf{inr}%
_{rq}\geq
\mathsf{inr}%
_{rq}^{\star}$, game $%
{\mathscr{G}}%
$ admits multiple orthogonal Nash equilibria. If, in addition, $\{%
\mathsf{snr}%
_{q}\}_{q\in \Omega }$ and $\{%
\mathsf{inr}%
_{rq}\}_{r\neq q\in \Omega }$ are such that
\begin{equation}
\Gamma _{q}\frac{\left\vert \bar{H}_{qr}(k)\right\vert ^{2}}{\left\vert \bar{%
H}_{qq}(k)\right\vert ^{2}}\frac{%
\mathsf{inr}%
_{qr}}{%
\mathsf{snr}%
_{q}}\leq 1,\quad k\in \mathcal{I}_{q},\text{ }r\neq q\in \Omega ,
\label{medium_interference}
\end{equation}%
and an orthogonal NE still exists, the subcarriers are allocated among the
users according to%
\begin{equation}
\dfrac{\left\vert \bar{H}_{rr}(k_{r})\right\vert ^{2}}{\left\vert \bar{H}%
_{qq}(k_{r})\right\vert ^{2}}\geq \dfrac{\left\vert \bar{H}%
_{rr}(k_{q})\right\vert ^{2}}{\left\vert \bar{H}_{qq}(k_{q})\right\vert ^{2}}%
,\qquad \text{ }k_{r}\in \mathcal{I}_{r}\text{ and }k_{q}\in \mathcal{I}_{q}.
\label{Subcarrier-allocation}
\end{equation}

\begin{proof}
See Appendix \ref{Proof_Prop_Subcarrier-allocation}.
\end{proof}
\end{proposition}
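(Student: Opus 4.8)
The plan is to work throughout with the waterfilling characterization (\ref{sym_WF-sistem})--(\ref{WF_mask}) of the Nash equilibria of $\mathscr{G}$, specialized to the present setting $p_{q}^{\max}(k)=+\infty$ so that the projection in (\ref{WF_mask}) collapses to $[\,\cdot\,]^{+}$, and to exploit the elementary identities $|H_{qq}(k)|^{2}=\mathsf{snr}_{q}\,|\bar{H}_{qq}(k)|^{2}$ and $|H_{rq}(k)|^{2}=\mathsf{inr}_{rq}\,|\bar{H}_{rq}(k)|^{2}$ that follow from the definitions in (\ref{SINR}). Both assertions are then proved by checking the fixed-point conditions of (\ref{sym_WF-sistem}) carrier by carrier on a candidate orthogonal profile.

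For the existence of multiple orthogonal equilibria I would argue constructively. Fix any family of pairwise disjoint index sets $(\mathcal{S}_{q})_{q\in\Omega}$ assigning carriers to users, and let each user waterfill against thermal noise alone on its own block, $p_{q}(k)=[\mu_{q}-\Gamma_{q}/|H_{qq}(k)|^{2}]^{+}$ for $k\in\mathcal{S}_{q}$ and $p_{q}(k)=0$ otherwise, with $\mu_{q}$ pinned by $\frac{1}{N}\sum_{k}p_{q}(k)=1$. Since the supports are disjoint, a user sees no interference on the carriers it actually activates, so this allocation is automatically its best response on $\mathcal{S}_{q}$, and the levels $\mu_{q}$ are finite numbers depending only on $\{\mathsf{snr}_{q}\}$. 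It then remains to certify that $q$ does not wish to move power onto an occupied foreign carrier $k\in\mathcal{S}_{r}$: by (\ref{WF_mask}) this holds precisely when $\mu_{q}\le\Gamma_{q}(1+|H_{rq}(k)|^{2}p_{r}(k))/|H_{qq}(k)|^{2}$, and since $|H_{rq}(k)|^{2}=\mathsf{inr}_{rq}\,|\bar{H}_{rq}(k)|^{2}\to\infty$ as $\mathsf{inr}_{rq}\to\infty$ while $\mu_{q}$ stays bounded, there is a finite threshold $\mathsf{inr}_{rq}^{\star}\gg1$ beyond which every such inequality is met. The profile is then a simultaneous fixed point of (\ref{sym_WF-sistem}), hence an orthogonal NE; distinct assignments (e.g.\ permuting two carriers that two users would both activate in isolation) yield distinct equilibria, giving multiplicity. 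One must only check that no carrier is profitably left idle, but for an idle carrier the interference vanishes and attractiveness reduces to a condition on the $\{\mathsf{snr}_{q}\}$ alone, which the chosen assignment is taken to respect.

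For the allocation rule (\ref{Subcarrier-allocation}) I would fix an orthogonal NE and a pair $q\neq r$ with $k_{q}\in\mathcal{I}_{q}$, $k_{r}\in\mathcal{I}_{r}$, and read off four waterfilling conditions: $q$ transmits on $k_{q}$, so $p_{q}(k_{q})=\mu_{q}-\Gamma_{q}/|H_{qq}(k_{q})|^{2}>0$ (no interference on $k_{q}$), while $q$ is silent on $k_{r}$, so $\mu_{q}\le\Gamma_{q}(1+|H_{rq}(k_{r})|^{2}p_{r}(k_{r}))/|H_{qq}(k_{r})|^{2}$; symmetrically for $r$. The role of hypothesis (\ref{medium_interference}), which is exactly $\Gamma_{q}|H_{qr}(k)|^{2}\le|H_{qq}(k)|^{2}$ for $k\in\mathcal{I}_{q}$, is to control the interference a user radiates from its own carrier: substituting the $r$-version $|H_{rq}(k_{r})|^{2}\le|H_{rr}(k_{r})|^{2}/\Gamma_{r}$ together with the ``uses'' identity $p_{r}(k_{r})=\mu_{r}-\Gamma_{r}/|H_{rr}(k_{r})|^{2}$ gives the clean bound $1+|H_{rq}(k_{r})|^{2}p_{r}(k_{r})\le|H_{rr}(k_{r})|^{2}\mu_{r}/\Gamma_{r}$. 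Feeding this into the ``silent'' condition for $q$ yields $|H_{qq}(k_{r})|^{2}\mu_{q}/\Gamma_{q}\le|H_{rr}(k_{r})|^{2}\mu_{r}/\Gamma_{r}$, and the mirror-image manipulation for $r$ on $k_{q}$ gives $|H_{rr}(k_{q})|^{2}\mu_{r}/\Gamma_{r}\le|H_{qq}(k_{q})|^{2}\mu_{q}/\Gamma_{q}$. Multiplying these two inequalities cancels $\mu_{q}\mu_{r}/(\Gamma_{q}\Gamma_{r})$, and replacing $|H_{qq}(k)|^{2}=\mathsf{snr}_{q}|\bar{H}_{qq}(k)|^{2}$, $|H_{rr}(k)|^{2}=\mathsf{snr}_{r}|\bar{H}_{rr}(k)|^{2}$ cancels $\mathsf{snr}_{q}\mathsf{snr}_{r}$, leaving $|\bar{H}_{qq}(k_{r})|^{2}|\bar{H}_{rr}(k_{q})|^{2}\le|\bar{H}_{rr}(k_{r})|^{2}|\bar{H}_{qq}(k_{q})|^{2}$, which rearranges to (\ref{Subcarrier-allocation}).

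I expect the only delicate point to be this interference bookkeeping. If one uses the two ``silent'' conditions naively by multiplying them, the interference factors appear as $(1+I_{rq})(1+I_{qr})\ge1$ on the wrong side and no ordering emerges; the essential trick is that (\ref{medium_interference}) lets one \emph{eliminate} the interference term, rewriting each ``silent'' inequality purely in terms of water levels and direct gains \emph{before} multiplying, at which point the $\mu$'s and $\mathsf{snr}$'s cancel cleanly. In the existence part the only subtlety is uniformity: the threshold $\mathsf{inr}_{rq}^{\star}$ must dominate the largest own-carrier power $p_{r}(k)$, but since those powers are bounded by the finite water levels fixed by $\{\mathsf{snr}_{q}\}$, a single finite threshold suffices.
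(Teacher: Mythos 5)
Your proposal is correct and follows essentially the same route as the paper's Appendix C: the existence part is the paper's own construction (disjoint-support single-user waterfilling profiles, with deviations onto occupied foreign carriers killed by letting $\mathsf{inr}_{rq}$ grow while the water levels remain bounded by the power budgets), and the allocation rule uses the same four waterfilling/KKT conditions at an orthogonal NE with (\ref{medium_interference}) applied twice to absorb the interference factors. Your bookkeeping---eliminating the interference term in each ``silent'' condition and then multiplying so that $\mu_q\mu_r$ and the SNRs cancel---is just a rearrangement of the paper's sandwich $\left\vert H_{rr}(k_q)\right\vert^2/\left\vert H_{qq}(k_q)\right\vert^2\leq \mu_r/\mu_q\leq \left\vert H_{rr}(k_r)\right\vert^2/\left\vert H_{qq}(k_r)\right\vert^2$, so the two arguments coincide in substance.
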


The above proposition has an intuitive interpretation: When the interference
is very high, the users self-organize themselves in order to remove the
interference totally, i.e., using nonoverlapping bands. In this case, game $%
{\mathscr{G}}%
$ may have multiple orthogonal Nash equilibria. For example, in the simple
case of $Q=N,$ there are $Q!$ different orthogonal Nash equilibria,
corresponding to all the permutations where each transmitter uses only one
carrier. As the interference level decreases (i.e., the $%
\mathsf{inr}%
_{rq}$'s), the NE becomes unique and in such a case, if an orthogonal
equilibrium still exists, then the distribution of the subcarriers among the
users must satisfy the rule given by (\ref{Subcarrier-allocation}). This
strategy is similar, in principle to FDMA, but differently from standard
FDMA, here each user is getting the \textquotedblleft best" portion of the
spectrum for itself. Interestingly, (\ref{Subcarrier-allocation}) is the
generalization of the condition satisfied by the subcarrier allocation in
the multiple access frequency-selective channel, where the optimization
problem is the sum-rate maximization under a transmit power constraint \cite%
{Cheng-Verdu'}.
In Figure \ref{OFDMA-PSD}, we show a numerical example of the optimal power
allocation at NE, for a system with two active links, in the case of high interference.
\begin{figure}[tbp]
\vspace{-0.9cm} \centering
\subfigure[Optimal PSDs at NE in the low-interference
case.] {\includegraphics[height=6.3cm, width=7.8cm]{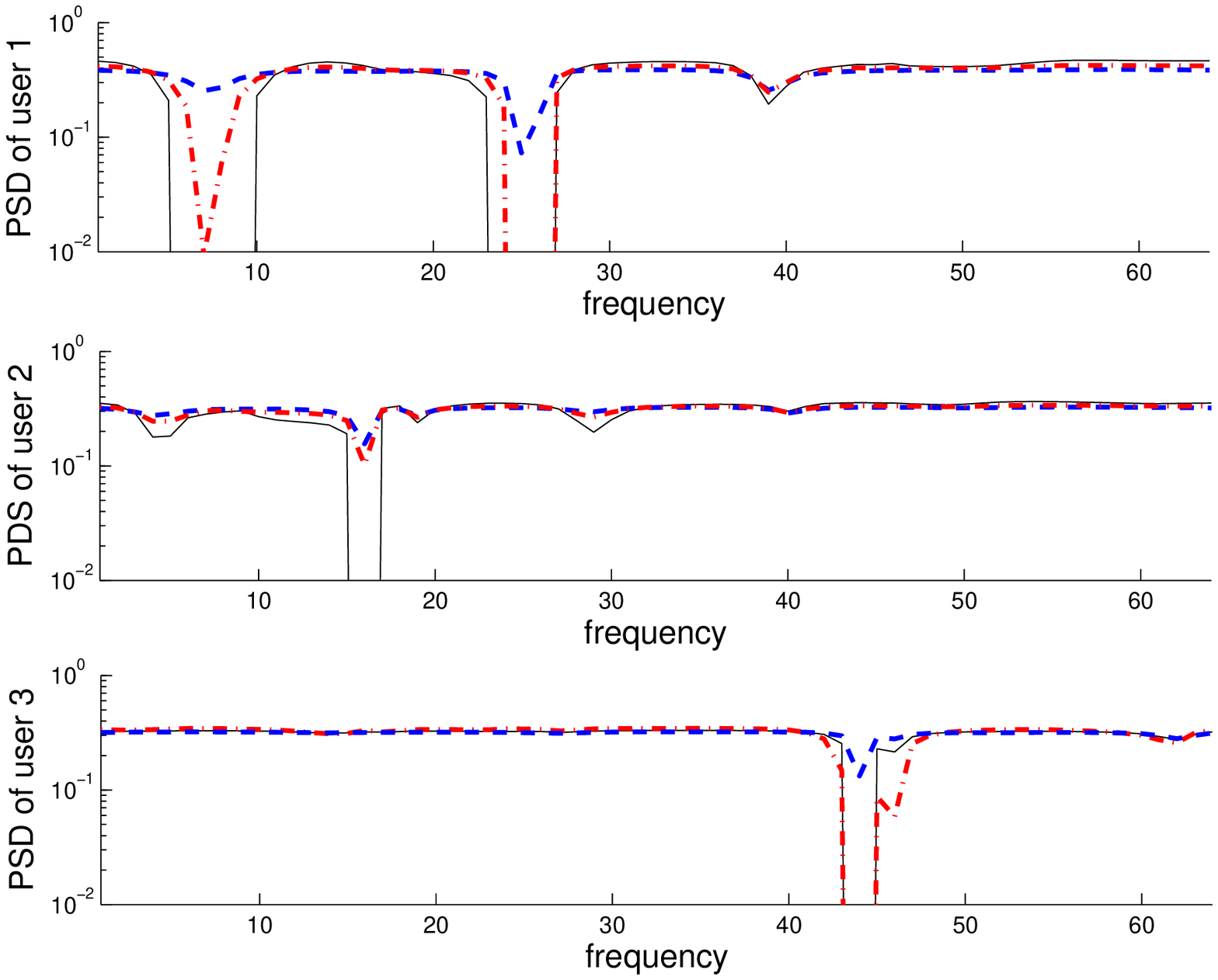}
\label{low-interference}\hspace{0.3cm}}
\subfigure[Optimal PSDs at NE in the
high-interference case.]{\includegraphics[height=6.3cm,
width=7.6cm]{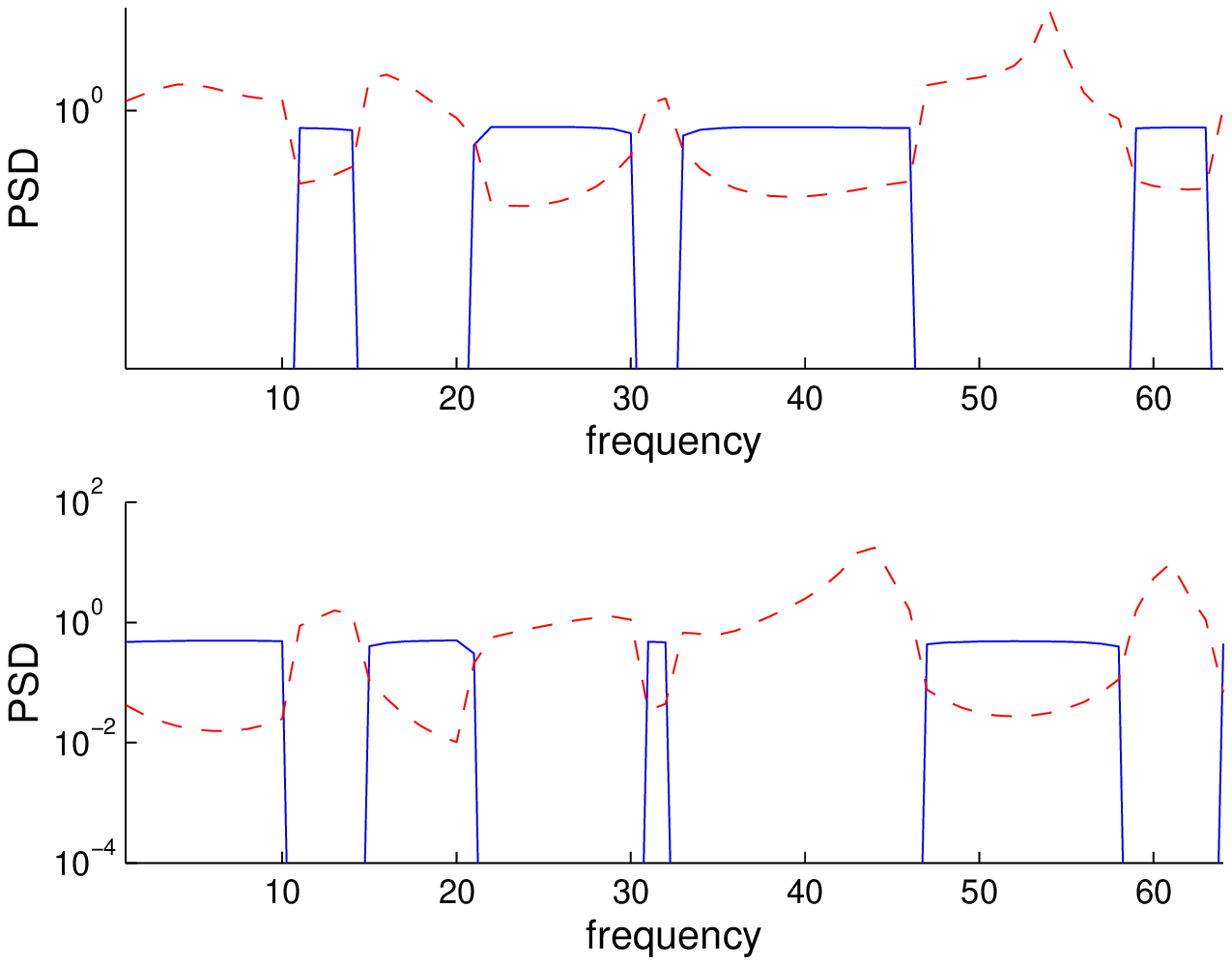} \label{OFDMA-PSD}\hfill} \vspace{-0.1cm}
\caption{{\protect\footnotesize Optimal PSDs at NE in the low-interference
and high-interference cases: a) Solid, dashed, and dashed-dot line curves
refer on the PSD obtained for $d_{rq}/d_{qq}=5,8$,$12$, respectively, and $%
\mathsf{snr}_{q}=15$ dB; b) Solid and dashed lines refer to PSD of each link
and PSD of the MUI plus thermal noise, normalized by the channel transfer
function square modulus of the link, respectively; $%
d_{12}/d_{22}=d_{21}/d_{11}=1,\Gamma _{1}=\Gamma _{2}=1,$ $\mathsf{snr}_{1}=%
\mathsf{snr}_{2}=5$ dB.}}
\label{PSD-OFDMA-shared}
\end{figure}
From Figure \ref{OFDMA-PSD} we observe that,
as predicted by Proposition \ref{Prop_Subcarrier-allocation}, different
users tend to transmit over non-overlapping bands.

In general, with intermediate interference levels, the optimal solution
consists in allowing partial superposition of the PSDs of the users. An
example of optimal PSD distribution for an intermediate level of
interference is plotted in Figure \ref{low-interference} (see the curves
obtained with $d_{rq}/d_{qq}=5$). 

\vspace{-0.5cm}

\section{How good is a Nash Equilibrium ?}

\label{Sec:How_good_NE} The optimality criterion used in this paper, based
on the achievement of a NE, is useful to derive decentralized strategies.
However, as the Nash equilibria need not to be Pareto-efficient \cite{Dubey}%
, this criterion, in general, does not provide any a priori guarantee about
the goodness of the equilibrium. Even when the equilibrium is unique, it
is important to know how far is the performance from the optimal
solution provided by a centralized strategy \cite{Cendrillon sub04,
Scutari-Barbarossa-ICASSP}. The scope of this section is thus to quantify
the performance loss resulting from the use of a decentralized approach,
with respect to the optimal centralized solution. To this end, we compare
the rates of the users corresponding to the Nash equilibria of game ${%
{\mathscr{G}}%
}$ in (\ref{Rate Game}) with the rates lying on the boundary of the largest
region of the achievable rates obtained as the Pareto-optimal trade-off surface
solving the multi-objective optimization based on (\ref{Rate}).

The rate region is defined as the set $\mathcal{R}$ of all feasible data
rate combinations $\mathbf{r}%
\triangleq%
[r_{1},\ldots ,r_{Q}]^{T}\in \mathcal{%
\mathbb{R}
}_{+}^{Q}$ among the active links. A given vector of rates $\mathbf{r}\in
\mathcal{%
\mathbb{R}
}_{+}^{Q}$ is said to be \textit{feasible} if it is possible to transfer
information in the network at these rates, \textit{reliably}, for \emph{some}
power allocation $\mathbf{p}%
\triangleq%
(\mathbf{p}_{q})_{q\in \Omega }$ satisfying the power constraints\ $\mathbf{p%
}_{q}\in
{\mathscr{P}}%
_{q}$, $\forall q\in \Omega .$ The rate region can be numerically computed
by considering all possible power allocations $\mathbf{p}$ such that $%
\mathbf{p}_{q}\in
{\mathscr{P}}%
_{q}$, $\forall q\in \Omega $. Specifically, we have%
\begin{equation}
\mathcal{R\ }\mathcal{=}\underset{\ \mathbf{p}_{q}\in
{\mathscr{P}}%
_{q},\text{ }q\in \Omega }{\dbigcup }\left\{ \mathbf{r}\in \mathcal{%
\mathbb{R}
}_{+}^{Q}\ |\ r_{q}\leq R_{q}(\mathbf{p}),\forall q\in \Omega \right\}
=\left\{ \mathbf{r}\in \mathcal{%
\mathbb{R}
}_{+}^{Q}\ |\ \exists \ \mathbf{p}:\mathbf{p}_{q}\in
{\mathscr{P}}%
_{q},\ r_{q}\leq R_{q}(\mathbf{p}),\ \forall q\in \Omega \right\} \vspace{%
-0.1cm}  \label{RR}
\end{equation}%
where $R_{q}(\mathbf{p})$ and $%
{\mathscr{P}}%
_{q}$ are given in (\ref{Rate}) and (\ref{admissible strategy set}),
respectively.

In general, the characterization of the rate region is a very difficult
nonconvex optimization problem. Nevertheless, in the case of low interference,
we have the following.

\begin{proposition}
\label{ConvexRateRegion} In the case of low interference $(%
\mathsf{inr}%
_{rq}<<1),$ and high SNR $(%
\mathsf{snr}%
_{q}>>1),$ the rate region $\mathcal{R}$ given in (\ref{RR}) is a convex set.
\end{proposition}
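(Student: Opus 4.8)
The plan is to exploit the fact that the region $\mathcal{R}$ in (\ref{RR}) is \emph{comprehensive} (downward closed): if $\mathbf{r}\in\mathcal{R}$ and $\mathbf{0}\le\mathbf{r}'\le\mathbf{r}$ componentwise, then $\mathbf{r}'\in\mathcal{R}$. For such a set, convexity is equivalent to showing that any convex combination of two achievable rate vectors is again achievable. So I would fix $\mathbf{r}^{(1)},\mathbf{r}^{(2)}\in\mathcal{R}$, attained by power profiles $\mathbf{p}^{(1)},\mathbf{p}^{(2)}$ with $\mathbf{p}^{(i)}_q\in\mathscr{P}_q$, and try to construct, for each $\lambda\in[0,1]$, a feasible profile whose rate dominates $\lambda\mathbf{r}^{(1)}+(1-\lambda)\mathbf{r}^{(2)}$ in every coordinate. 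If $R_q(\mathbf{p})$ in (\ref{Rate}) were jointly concave in the powers, this would be immediate from convexity of the sets $\mathscr{P}_q$; the obstruction is that the interference enters the denominator of $\mathsf{sinr}_q(k)$ in (\ref{SINR}) inside the logarithm, and $R_q$ is \emph{not} concave in $\mathbf{p}$.

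The key idea is a logarithmic change of variables together with the high-SINR approximation. In the regime $\mathsf{inr}_{rq}\ll1$, $\mathsf{snr}_q\gg1$ one has $\mathsf{sinr}_q(k)\gg1$ on every active carrier, so I would replace the payoff term $\log(1+\mathsf{sinr}_q(k)/\Gamma_q)$ by its high-SINR surrogate $\log(\mathsf{sinr}_q(k)/\Gamma_q)$. Substituting (\ref{SINR}) and writing $p_q(k)=e^{\tilde p_q(k)}$, the $k$-th summand of $N R_q$ becomes $\log\!\big(|H_{qq}(k)|^2/\Gamma_q\big)+\tilde p_q(k)-\log\!\big(1+\sum_{r\ne q}|H_{rq}(k)|^2\,e^{\tilde p_r(k)}\big)$, now regarded as a function of the log-power vector $\tilde{\mathbf{p}}=(\tilde p_q(k))_{q,k}$.

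I would then verify two facts in the log-power domain. First, each $R_q$ is concave in $\tilde{\mathbf{p}}$: the constant and the linear term $\tilde p_q(k)$ are concave, and the last term is minus a log-sum-exp of affine functions of $\tilde{\mathbf{p}}$ (the constant $1$ contributing an $e^{0}$ term), hence concave. Second, the strategy set $\mathscr{P}_q$ in (\ref{admissible strategy set}) maps to a convex set in the $\tilde p_q(k)$: the budget $\tfrac1N\sum_k e^{\tilde p_q(k)}\le1$ is convex and the mask bound $\tilde p_q(k)\le\log p_q^{\max}(k)$ is a box. Consequently $\tilde{\mathbf{p}}^{(\lambda)}=\lambda\tilde{\mathbf{p}}^{(1)}+(1-\lambda)\tilde{\mathbf{p}}^{(2)}$ is feasible, and concavity gives $R_q(\tilde{\mathbf{p}}^{(\lambda)})\ge\lambda R_q(\tilde{\mathbf{p}}^{(1)})+(1-\lambda)R_q(\tilde{\mathbf{p}}^{(2)})\ge\lambda r_q^{(1)}+(1-\lambda)r_q^{(2)}$ for every $q\in\Omega$. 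This exhibits $\lambda\mathbf{r}^{(1)}+(1-\lambda)\mathbf{r}^{(2)}$ as a point of $\mathcal{R}$, proving convexity.

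The main obstacle is pinning down precisely where both hypotheses are needed. The low-interference/high-SNR assumption is exactly what licenses dropping the $+1$ inside the outer logarithm: if it is retained, the summand is a \emph{difference} of two log-sum-exp functions and need not be concave in $\tilde{\mathbf{p}}$. It is crucial that concavity is obtained only after the logarithmic substitution, not in the original power variables. I would also handle carriers with $p_q(k)=0$ (which map to $\tilde p_q(k)=-\infty$): since in this regime every user spreads over the entire band (cf. Proposition \ref{Proposition_CDMA Interpretation}), I can restrict to strictly positive powers and recover the boundary by a limiting argument.
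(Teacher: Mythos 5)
Your proposal is correct and follows essentially the same route as the paper's proof in Appendix D: the high-SINR surrogate $\log(\mathsf{sinr}_q(k)/\Gamma_q)$, the logarithmic change of variables, concavity of each rate via log-sum-exp, and a convex combination in the log domain, which is exactly the paper's geometric-mean power profile $p_q(k)=\bar{p}_q^{\alpha}(k)\tilde{p}_q^{\beta}(k)$. The only cosmetic difference is feasibility: you observe directly that the image of ${\mathscr{P}}_q$ under the log map is convex, whereas the paper invokes the geometric--arithmetic mean inequality together with convexity of ${\mathscr{P}}_q$; the two observations are equivalent.
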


\vspace{-0.2cm}

\begin{proof}
See Appendix \ref{Proof ConvexityRR}.
\end{proof}

The best achievable trade-off among the rates is given by the Pareto optimal
points of the set\ $\mathcal{R}$ given in (\ref{RR}). In formulas, all
highest feasible rates can be found by solving the following multi-objective
optimization problem (MOP) \cite{Miettinen}\vspace{-0.2cm}:
\begin{equation}
\begin{array}{l}
\limfunc{maximize}\limits_{\mathbf{p}}\quad \text{\ }\left\{ R_{1}(\mathbf{p}%
),\ldots ,R_{Q}(\mathbf{p})\right\} \\
\text{subject to \quad\ }\mathbf{p}_{q}\in
{\mathscr{P}}%
_{q},\quad \forall q\in \Omega .\
\end{array}
\label{MOP}
\end{equation}%
Practical algorithms to solve the MOP in (\ref{MOP}) can be obtained, using
the approach proposed, e.g., in \cite{Yu_DSL, Das}. The Pareto optimal
solutions to the MOP can also be obtained by solving the following ordinary
scalar optimization
:%
\begin{equation}
\begin{array}{l}
\limfunc{maximize}\limits_{\mathbf{p}}\quad \ \ \
\sum\limits_{q=1}^{Q}\lambda _{q}R_{q}(\mathbf{p}) \\
\text{subject to}\hspace{0.6cm}\ \mathbf{p}_{q}\in
{\mathscr{P}}%
_{q},\quad \forall q\in \Omega .%
\end{array}
\label{Scalarized_problem}
\end{equation}%
where $%
\boldsymbol{\lambda}%
$ is a set of positive weights. For any given $%
\boldsymbol{\lambda}%
>\mathbf{0},$ the (globally) optimal solution to (\ref{Scalarized_problem})
is a point on the trade-off surface of MOP\ (\ref{MOP}) (cf. Appendix \ref%
{Proof Prop_NE-PO}). However, as the rate region (\ref{Rate}) is in
principle nonconvex, by varying $%
\boldsymbol{\lambda}%
>\mathbf{0,}$ only a portion of the trade-off curve of (\ref{MOP}) can be
explored. Specifically, all the Pareto optimal points lying on the nonconvex
part of the rate region cannot be obtained as solutions to (\ref%
{Scalarized_problem}). 
We will refer to (\ref{Scalarized_problem}) as the \emph{scalarized} MOP.

Comparing (\ref{Rate Game}) with (\ref{MOP}), we infer that, in general, the
Nash equilibria are not solutions to (\ref{MOP}), and thus they are not
Pareto-efficient. An interesting question is whether one can modify the
payoff function of every player so that some Nash equilibria of the modified
game coincide with the Pareto optimal solutions. The answer is given by the
following proposition.\vspace{-0.1cm}

\begin{proposition}
\label{Prop_NE-PO} Let $\widetilde{{%
\mathscr{G}%
}}\left(
\boldsymbol{\lambda}%
\right) $ be the game defined as
\begin{equation}
\widetilde{{%
\mathscr{G}%
}}\left(
\boldsymbol{\lambda}%
\right) =\left\{ \Omega ,\left\{
{\mathscr{P}}%
_{q}\right\} _{q\in \Omega },\{\widetilde{{\Phi }}_{q}\}_{q\in \Omega
}\right\} ,  \label{G_tilde}
\end{equation}%
where the payoff functions are
\begin{equation}
\widetilde{{\Phi }}_{q}(\mathbf{p})={R}_{q}(\mathbf{p})+\frac{1}{\lambda _{q}%
}\sum_{r\neq q}\lambda _{r}{R}_{r}(\mathbf{p})\vspace{-0.2cm}
\label{Rae_g_tilde}
\end{equation}%
with $R_{q}(\mathbf{p})$ defined in (\ref{Rate}) and $%
\boldsymbol{\lambda}%
\triangleq%
\mathbf{[}\lambda _{1},\ldots ,\lambda _{Q}\mathbf{]}^{T}$ is a set of
positive weights. Then, for any given $%
\boldsymbol{\lambda}%
>\mathbf{0}$, the solution set of $\widetilde{{%
\mathscr{G}%
}}(%
\boldsymbol{\lambda}%
)$ is not empty and contains the (globally) optimal solution to the
scalarized MOP (\ref{Scalarized_problem}), which is the Pareto optimal
solution to the MOP (\ref{MOP}) corresponding to the point where the
hyperplane with normal vector $%
\boldsymbol{\lambda}%
$ is tangent to the boundary of the rate region (\ref{RR}).

Moreover, if the conditions of Proposition \ref{ConvexRateRegion} are
satisfied, then:\vspace{-0.2cm}

\begin{enumerate}
\item $\widetilde{{%
\mathscr{G}%
}}(%
\boldsymbol{\lambda}%
)$ admits a unique NE, for any given $%
\boldsymbol{\lambda}%
>\mathbf{0},$ and

\item \emph{Any}\footnote{%
We do not consider the rate profiles where the rate of some user is zero,
w.l.o.g.. The corner points of the rate region can be achieved solving a
lower dimensional problem.} Pareto optimal solution to the MOP (\ref{MOP})
can be obtained as the unique NE of $\widetilde{{%
\mathscr{G}%
}}(%
\boldsymbol{\lambda}%
)$, with a proper choice of $%
\boldsymbol{\lambda}%
>\mathbf{0}.$
\end{enumerate}
\end{proposition}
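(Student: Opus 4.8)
The plan is to exploit the algebraic link between the payoff $\widetilde{\Phi}_q$ and the scalarized objective. Multiplying (\ref{Rae_g_tilde}) by the positive scalar $\lambda_q$ gives the identity $\lambda_q\,\widetilde{\Phi}_q(\mathbf{p})=\sum_{r\in\Omega}\lambda_r R_r(\mathbf{p})=:U_{\boldsymbol{\lambda}}(\mathbf{p})$, whose right-hand side is precisely the objective of the scalarized MOP (\ref{Scalarized_problem}) and does not depend on $q$. Since $\lambda_q>0$, for every fixed $\mathbf{p}_{-q}$ the best response of player $q$ in $\widetilde{\mathscr{G}}(\boldsymbol{\lambda})$ satisfies $\arg\max_{\mathbf{p}_q\in\mathscr{P}_q}\widetilde{\Phi}_q(\mathbf{p}_q,\mathbf{p}_{-q})=\arg\max_{\mathbf{p}_q\in\mathscr{P}_q}U_{\boldsymbol{\lambda}}(\mathbf{p}_q,\mathbf{p}_{-q})$. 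Thus $\widetilde{\mathscr{G}}(\boldsymbol{\lambda})$ is a (weighted) potential game with common potential $U_{\boldsymbol{\lambda}}$, and its Nash equilibria are exactly the points of $\mathscr{P}\triangleq\mathscr{P}_1\times\cdots\times\mathscr{P}_Q$ at which no single block $\mathbf{p}_q$ can increase $U_{\boldsymbol{\lambda}}$, i.e. the block-coordinatewise maxima of $U_{\boldsymbol{\lambda}}$.

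For the first assertion I would first observe that $U_{\boldsymbol{\lambda}}$ is continuous and $\mathscr{P}$ is compact (each $\mathscr{P}_q$ in (\ref{admissible strategy set}) is a closed, bounded polytope), so a global maximizer $\mathbf{p}^\star$ of $U_{\boldsymbol{\lambda}}$ over $\mathscr{P}$ exists. Such a maximizer is in particular optimal in each block with the others held fixed, hence is a block-coordinatewise maximum and therefore a NE; this shows the solution set is nonempty and contains the global optimum of (\ref{Scalarized_problem}). That this optimum is the Pareto point of (\ref{MOP}) supported by the hyperplane with normal $\boldsymbol{\lambda}$ is the standard weighted-sum scalarization fact already recorded in the text preceding the statement (cf. Appendix \ref{Proof Prop_NE-PO}).

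For the refinements under the hypotheses of Proposition \ref{ConvexRateRegion}, the engine is concavity. Using the same logarithmic change of variables $p_q(k)=e^{\tilde p_q(k)}$ employed to prove Proposition \ref{ConvexRateRegion}, each $R_q$ becomes jointly concave in $\tilde{\mathbf{p}}$ in the low-interference/high-SNR regime (the own-signal term is affine and each interference contribution is a negated log-sum-exp), so $U_{\boldsymbol{\lambda}}=\sum_r\lambda_r R_r$ is concave and the transformed feasible set stays convex. For a differentiable concave $U_{\boldsymbol{\lambda}}$ on a product of convex sets, block-coordinatewise optimality forces global optimality: summing the first-order conditions $\langle\nabla_{\mathbf{p}_q}U_{\boldsymbol{\lambda}}(\mathbf{p}^\star),\mathbf{p}_q-\mathbf{p}_q^\star\rangle\le0$ over $q$ yields $\langle\nabla U_{\boldsymbol{\lambda}}(\mathbf{p}^\star),\mathbf{p}-\mathbf{p}^\star\rangle\le0$ for all $\mathbf{p}\in\mathscr{P}$, which by concavity makes $\mathbf{p}^\star$ a global maximizer. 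Hence every NE of $\widetilde{\mathscr{G}}(\boldsymbol{\lambda})$ maximizes $U_{\boldsymbol{\lambda}}$, and uniqueness of the NE reduces to uniqueness of that maximizer, which I would obtain from the strict concavity of $U_{\boldsymbol{\lambda}}$ in the transformed variables together with the strict convexity of the active power constraint $\tfrac{1}{N}\sum_k e^{\tilde p_q(k)}\le1$. Finally, for claim 2, convexity of the rate region (\ref{RR}) lets me invoke the supporting-hyperplane theorem: every Pareto optimal rate vector with all components positive admits a supporting hyperplane with strictly positive normal $\boldsymbol{\lambda}$, and for that $\boldsymbol{\lambda}$ the unique NE of $\widetilde{\mathscr{G}}(\boldsymbol{\lambda})$ attains exactly that rate vector by the first assertion.

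The main obstacle I anticipate is the uniqueness step, since concavity alone only guarantees a convex set of maximizers. I must verify genuine strict concavity of $U_{\boldsymbol{\lambda}}$ (or argue uniqueness from the strictly convex power-constraint boundary) under the quantitative low-interference/high-SNR conditions of Proposition \ref{ConvexRateRegion}, and this is precisely where the approximations underlying that proposition have to be handled with care. By contrast, the potential-game identity and the existence part follow essentially immediately from $\lambda_q\widetilde{\Phi}_q=U_{\boldsymbol{\lambda}}$ and compactness.
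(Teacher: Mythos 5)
Your proposal is correct, and on the uniqueness part it takes a genuinely different route from the paper's. The identity $\lambda_{q}\widetilde{\Phi}_{q}(\mathbf{p})=\sum_{r\in\Omega}\lambda_{r}R_{r}(\mathbf{p})\triangleq U_{\boldsymbol{\lambda}}(\mathbf{p})$ is exactly what the paper exploits for the first assertion: in Appendix \ref{Proof Prop_NE-PO} the global maximizer $\mathbf{p}^{\star}$ of (\ref{Scalarized_problem}) is shown to be a NE by restricting the optimality inequality to unilateral deviations $(\mathbf{p}_{q},\mathbf{p}_{-q}^{\star})$ and dividing by $\lambda_{q}$ --- which is your potential-game observation written out --- and existence follows, there as for you, from compactness of ${\mathscr{P}}_{1}\times\cdots\times{\mathscr{P}}_{Q}$ and continuity (Theorem \ref{Rosen's Theorem} is indeed unavailable since $\widetilde{\Phi}_{q}$ is not quasiconcave in $\mathbf{p}_{q}$, a point the paper makes explicitly). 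Where you diverge is claim 1 of the second assertion: the paper verifies Rosen's diagonally-strict-concavity condition (\ref{DSC_condition}), derived from the two-solution argument (\ref{KKT_NE2}), via the own-variable strict concavity (\ref{strict_concavity_cond}) of the payoffs approximated by (\ref{R_low_int}); you instead identify the NE set with the blockwise maxima of the potential $U_{\boldsymbol{\lambda}}$, pass to the logarithmic variables of Appendix \ref{Proof ConvexityRR} where $U_{\boldsymbol{\lambda}}$ is concave on a convex set, and sum the blockwise first-order conditions to conclude that every NE is a \emph{global} maximizer of (\ref{Scalarized_problem}). This buys something the paper's route does not give in one stroke: the coincidence of the NE set with the solution set of the scalarized MOP, from which uniqueness and the Pareto optimality needed for claim 2 follow simultaneously.

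The one real gap is the step you flag yourself: strict concavity of $U_{\boldsymbol{\lambda}}$ in the transformed variables is asserted, not proved, and plain concavity only yields a convex set of equilibria. The gap is closable along either of the lines you sketch. (i) Per carrier $k$, up to constants, $U_{\boldsymbol{\lambda}}$ contributes $\sum_{q}\lambda_{q}\bigl[x_{q}(k)-\log\bigl(1+\sum_{r\neq q}G_{rq}(k)e^{x_{r}(k)}\bigr)\bigr]$; the map $\mathbf{y}\mapsto\log\bigl(1+\sum_{r}c_{r}e^{y_{r}}\bigr)$ with $c_{r}>0$ is \emph{strictly} convex precisely because of the $+1$ inside the logarithm (with $a_{r}=c_{r}e^{y_{r}}$ and $s=1+\sum_{r}a_{r}$ one gets $\mathbf{v}^{T}\nabla^{2}f\,\mathbf{v}\geq\bigl(\sum_{r}a_{r}v_{r}^{2}\bigr)\bigl(s-\sum_{r}a_{r}\bigr)/s^{2}>0$ for $\mathbf{v}\neq\mathbf{0}$), so each summand is strictly concave in $(x_{r}(k))_{r\neq q}$, and for $Q\geq2$ with all cross gains positive the Hessian of the sum over $q$ is negative definite: given $\mathbf{v}\neq\mathbf{0}$, pick $r$ with $v_{r}\neq0$ and any $q\neq r$. (ii) Alternatively, no strict concavity is needed: since $\partial U_{\boldsymbol{\lambda}}/\partial x_{q}(k)\approx\lambda_{q}/N>0$ at low interference, every maximizer has its power constraint active in each block (in the nontrivial case $\sum_{k}p_{q}^{\max}(k)>N$); the maximizer set is convex, while $\mathbf{x}_{q}\mapsto\sum_{k}e^{x_{q}(k)}$ is strictly convex, so two distinct maximizers would produce a midpoint maximizer with a slack power constraint, a contradiction. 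You should also state the technical point that the substitution $p_{q}(k)=e^{x_{q}(k)}$ presupposes strictly positive powers; this is consistent because under (\ref{R_low_int}) one has $R_{q}\rightarrow-\infty$ as $p_{q}(k)\rightarrow0^{+}$, so all best responses and maximizers are componentwise positive (the same assumption appears in the paper's footnote in Appendix \ref{Proof ConvexityRR}). With (i) or (ii) supplied your argument is complete; in fairness, the paper's own verification of (\ref{strict_concavity_cond}) is comparably terse, since under (\ref{R_low_int}) the terms $R_{r}$, $r\neq q$, are convex in $\mathbf{p}_{q}$ and their curvature must likewise be argued to be dominated when the interference gains are small.
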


\begin{proof}
See Appendix \ref{Proof Prop_NE-PO}.
\end{proof}

Comparing (\ref{Rate}) with (\ref{Rae_g_tilde}), an interesting
interpretation arises: The Pareto-optimal solutions to the MOP (\ref{MOP})
can be achieved as a NE of the modified game where each player incorporates,
in its utility function, a linear combination, through positive coefficient,
of the utilities of the other players.\footnote{%
An alternative approach to move toward Pareto optimality is to allow that
the game could be played more than once, i.e., to consider the so called
\emph{repeated} games \cite{Osborne}, with a proper punishment strategy \cite%
{Supergames-paper}. 
This study goes beyond the scope of this paper.} The NE of the modified game
$\widetilde{{\ \mathscr{G}}}(\boldsymbol{\lambda })$ in (\ref{G_tilde}) can
be obtained, for any given $\boldsymbol{\lambda },$ using, e.g., the
gradient projection based iterative algorithm, proposed in \cite{Rosen}.
However, this algorithm requires coordination among the players, since, at
each iteration, each user has to know the channels and the strategies
adopted by all the other users. Thus, Pareto-efficiency can be achieved only
at the price of a significant increase of signalling and coordination among
the users and this goes against our search for distributed, independent,
coding/decoding among the users. Note that the structure of (\ref%
{Rae_g_tilde}) generalizes the pricing techniques widely used in the game
theory field to obtain a Pareto improvement of the system performance with
respect to the Nash equilibria of a noncooperative game (see, e.g., \cite%
{Altman-Winter} and references therein).

Using Proposition \ref{Prop_NE-PO}, we can now characterize and quantify the
Nash equilibria of the game ${%
\mathscr{G}%
}$ given in (\ref{Rate Game}) by providing upper and lower bounds.

\begin{proposition}
\label{min_max_Proposition}All the Nash equilibria $\mathbf{p}^{\star }$ of
 game $%
{\mathscr{G}}%
$ in (\ref{Rate Game}) satisfy the following inequality%
\begin{equation}
\max\limits_{\mathbf{p}_{q}\in
{\mathscr{P}}%
_{q}}\min\limits_{\mathbf{p}_{-q}\in
{\mathscr{P}}%
_{-q}}R_{q}(\mathbf{p}_{q},\mathbf{p}_{-q})\leq R_{q}(\mathbf{p}_{q}^{\star
},\mathbf{p}_{-q}^{\star })\leq \widetilde{{\Phi }}_{q}^{\star },\quad
\forall q\in \Omega ,  \label{max_min_ineq}
\end{equation}%
where $\widetilde{{\Phi }}_{q}^{\star }$ is a NE of $\widetilde{{%
\mathscr{G}%
}}(%
\boldsymbol{\lambda}%
)$ defined in (\ref{G_tilde}) with a proper choice of $%
\boldsymbol{\lambda}%
.$
\end{proposition}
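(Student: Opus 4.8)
The plan is to prove the two inequalities in (\ref{max_min_ineq}) independently, since they rest on different facts: the left (maxmin/security-value) bound holds at any NE of $\mathscr{G}$ by a generic game-theoretic argument, whereas the right bound exploits the link between the modified game $\widetilde{\mathscr{G}}(\boldsymbol{\lambda})$ and the scalarized problem (\ref{Scalarized_problem}) recorded in Proposition \ref{Prop_NE-PO}. Both bounds will be seen to hold for every NE $\mathbf{p}^{\star}$ and every $q\in\Omega$.

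For the left inequality I would begin from the equilibrium condition: since $\mathbf{p}^{\star}$ is a NE of $\mathscr{G}$, player $q$'s component is a best response, i.e. $R_q(\mathbf{p}_q^{\star},\mathbf{p}_{-q}^{\star})=\max_{\mathbf{p}_q\in\mathscr{P}_q}R_q(\mathbf{p}_q,\mathbf{p}_{-q}^{\star})$. Let $\tilde{\mathbf{p}}_q$ attain the outer maximum defining the security value (attainment is guaranteed because each $\mathscr{P}_q$ is compact and $R_q$ is continuous). Evaluating the inner minimum at the particular opponent profile $\mathbf{p}_{-q}^{\star}$ gives $\min_{\mathbf{p}_{-q}\in\mathscr{P}_{-q}}R_q(\tilde{\mathbf{p}}_q,\mathbf{p}_{-q})\leq R_q(\tilde{\mathbf{p}}_q,\mathbf{p}_{-q}^{\star})$, and enlarging $\tilde{\mathbf{p}}_q$ to the best response gives $R_q(\tilde{\mathbf{p}}_q,\mathbf{p}_{-q}^{\star})\leq R_q(\mathbf{p}_q^{\star},\mathbf{p}_{-q}^{\star})$. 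Chaining these two steps yields $\max_{\mathbf{p}_q}\min_{\mathbf{p}_{-q}}R_q\leq R_q(\mathbf{p}^{\star})$.

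For the right inequality I would first rewrite the modified payoff as $\widetilde{\Phi}_q(\mathbf{p})=\frac{1}{\lambda_q}\sum_{r\in\Omega}\lambda_r R_r(\mathbf{p})$, so that $\lambda_q\widetilde{\Phi}_q(\mathbf{p})$ is exactly the scalarized objective of (\ref{Scalarized_problem}). By the (unconditional) first part of Proposition \ref{Prop_NE-PO}, for any $\boldsymbol{\lambda}>\mathbf{0}$ the game $\widetilde{\mathscr{G}}(\boldsymbol{\lambda})$ possesses a NE $\tilde{\mathbf{p}}^{\star}$ that globally maximizes $\sum_{r}\lambda_r R_r(\mathbf{p})$ over the joint feasible set $\mathscr{P}_1\times\cdots\times\mathscr{P}_Q$; I take this equilibrium to define $\widetilde{\Phi}_q^{\star}=\widetilde{\Phi}_q(\tilde{\mathbf{p}}^{\star})$. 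Because $\mathbf{p}^{\star}$ is itself feasible and each rate is nonnegative ($R_r\geq 0$, as $\mathsf{sinr}_r(k)\geq 0$ and $\Gamma_r\geq 1$), I obtain
\[
\lambda_q R_q(\mathbf{p}^{\star})\leq\sum_{r\in\Omega}\lambda_r R_r(\mathbf{p}^{\star})\leq\sum_{r\in\Omega}\lambda_r R_r(\tilde{\mathbf{p}}^{\star})=\lambda_q\widetilde{\Phi}_q^{\star},
\]
where the first step discards the nonnegative terms with $r\neq q$ and the second is optimality of $\tilde{\mathbf{p}}^{\star}$ for the scalarized problem. Dividing by $\lambda_q>0$ gives $R_q(\mathbf{p}^{\star})\leq\widetilde{\Phi}_q^{\star}$.

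The routine ingredients are the compactness of the $\mathscr{P}_q$ and the nonnegativity of the $R_q$; neither bound uses the convexity hypotheses of Proposition \ref{ConvexRateRegion}. The one step I would treat most carefully is the identification of $\widetilde{\Phi}_q^{\star}$ with the scalarized-MOP value: since $\widetilde{\mathscr{G}}(\boldsymbol{\lambda})$ may have several equilibria, the ``proper choice'' in the statement must be read as selecting the particular NE that coincides with the maximizer of $\sum_r\lambda_r R_r(\mathbf{p})$, which is exactly the equilibrium furnished unconditionally by Proposition \ref{Prop_NE-PO}. Getting this matching right is what makes the upper bound hold simultaneously for every NE of the original game $\mathscr{G}$ and for each $q$.
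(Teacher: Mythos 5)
Your proof is correct, and it reaches the conclusion by a somewhat different and more elementary route than the paper, particularly for the left inequality. The paper's proof starts the same way (the best-response property at the NE gives $R_q(\mathbf{p}_q^{\star},\mathbf{p}_{-q}^{\star})=\max_{\mathbf{p}_q\in{\mathscr{P}}_q}R_q(\mathbf{p}_q,\mathbf{p}_{-q}^{\star})\geq \min_{\mathbf{p}_{-q}\in{\mathscr{P}}_{-q}}\max_{\mathbf{p}_q\in{\mathscr{P}}_q}R_q(\mathbf{p}_q,\mathbf{p}_{-q})$), but then it obtains the max--min bound by \emph{interchanging} the two operators, invoking Rockafellar's minimax results and verifying that each $R_q$ is concave in $\mathbf{p}_q$ and convex in $\mathbf{p}_{-q}$ over the compact strategy sets. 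Your chaining argument (evaluate the inner minimum of the security value at $\mathbf{p}_{-q}^{\star}$, then enlarge to the best response) proves $\max_{\mathbf{p}_q}\min_{\mathbf{p}_{-q}}R_q\leq R_q(\mathbf{p}^{\star})$ directly, using only compactness and continuity; in effect you exploit the fact that only the trivial direction of the max--min inequality is needed here, so the concave--convex structure (and the minimax theorem) invoked by the paper is superfluous, and your argument would survive for payoffs lacking that structure. For the right inequality the paper says only that it ``comes directly from (\ref{Rae_g_tilde})''; your argument is the natural completion of that remark: you use the unconditional part of Proposition \ref{Prop_NE-PO} to select the NE of $\widetilde{{\mathscr{G}}}(\boldsymbol{\lambda})$ that globally solves the scalarized MOP (\ref{Scalarized_problem}), rewrite $\lambda_q\widetilde{{\Phi}}_q(\mathbf{p})=\sum_{r}\lambda_r R_r(\mathbf{p})$, and combine feasibility of $\mathbf{p}^{\star}$ with nonnegativity of the rates. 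Your explicit identification of which equilibrium of the modified game defines $\widetilde{{\Phi}}_q^{\star}$ is precisely what makes the upper bound valid for every NE of ${\mathscr{G}}$ and every $q$, and it correctly interprets the statement's phrase ``with a proper choice of $\boldsymbol{\lambda}$'' (indeed, in your version any $\boldsymbol{\lambda}>\mathbf{0}$ works once that particular equilibrium is selected).
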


\begin{proof}
See Appendix \ref{Proof_min_max_Proposition}.
\end{proof}

Thus, at any NE of game ${%
\mathscr{G}%
}$ in (\ref{Rate Game}), the rate of each user is always better than that
obtained by optimizing the worst case (which in general is too pessimistic).
However, in general, the Nash equilibria of ${%
\mathscr{G}%
}$ may be Pareto dominated. We quantify numerically this loss in the next
section.

\section{Numerical Results}

\label{Sec:Numerical Results}It is interesting to compare the Nash
equilibria of game ${%
\mathscr{G}%
}$ with the rates lying on the boundary of the rate region, in order to
quantify the loss of Nash equilibria with respect to the Pareto optimal
solutions. To this end, we consider the following two different topologies.
In the first example, we assume that the system operates in a symmetric
situation, whereas in the second example we consider an asymmetric scenario.

\noindent \textbf{Example 1}: \emph{Symmetric Case}. In this scenario, the
system is assumed to be symmetric, i.e., the transmitters have the same
power budget and the interference links are at the same distance (i.e., $%
d_{rq}=d_{qr},\,\,\forall q,r$), so that the cross channel gains are
comparable in average sense. In Figure \ref{RR_PO_vs_NE}, we plot the Pareto
optimal points of$\mathcal{\ }$(\ref{MOP}) and the Nash equilibria of game $%
{\mathscr{G}}%
$ defined in (\ref{Rate Game}), for a two-users symmetric system. The two
axes represent the rates, in bits/symbol, for the two links. The two pairs
of nodes are placed at different distances, to test situations with
different level of interference. In the picture, we plot: i) the
Pareto-optimal boundary (\ref{RR}), referred to as $\mathcal{R}_{MOP}$
(solid lines); ii) the NE points ($\ast $) of game ${%
\mathscr{G}%
,}$ given in (\ref{Rate Game}); iii) the NE points of the modified game $%
\widetilde{{%
\mathscr{G}%
}}(%
\boldsymbol{\lambda}%
),$ given in (\ref{G_tilde}), for different values of the vector $%
\boldsymbol{\lambda}%
$ (squares); and iv) the rate region (referred as $\mathcal{R}_{NE}$)
corresponding to the Nash equilibria achieved by varying the transmit power
of each link, under the constraint that the overall transmit power is fixed
(dashed lines).\footnote{%
Note that the comparison between dashed and solid lines is not totally fair
because all the rates on the boundary of $\mathcal{R}_{MOP}$ are achieved
with the same power constraint $P_{q}$ for each transmitter, whereas the NEs
reported in the dashed lines are obtained assuming only a total power
constraint.} All the Nash equilibria of game ${%
\mathscr{G}%
}$ are reached using the algorithms introduced in Part II of this paper
\cite{Scutari-Part II}, whereas the Nash equilibria of game $\widetilde{{%
\mathscr{G}%
}}(%
\boldsymbol{\lambda}%
)$\ are reached using the gradient projection algorithm, proposed in \cite%
{Rosen}. 
From Figure \ref{RR_PO_vs_NE}, we infer that the Nash equilibria approach
the optimal Pareto curve as the interference level decreases (i.e., the
ratio $d_{rq}/d_{qq}$ increases) at least in the two user case. This is not
surprising as, in the case in which the interference is sufficiently low, the
interaction (interference) among users becomes negligible and the
performance is limited by noise only, not by the interference. On the contrary,
at small interpair distances (i.e., small ratios $d_{rq}/d_{qq}$),
interference becomes the dominant performance limiting factor and the loss
resulting from using the decentralized approach becomes progressively
larger. But the most interesting result is that this loss is limited also in
the case where the links are rather close to each other (we have observed
this result for several independent symmetric channel realizations). This
suggests that, for symmetric systems, the decentralized approach, based on a
game-theoretic formulation, is indeed a viable choice, considering its
greater simplicity with respect to the centralized optimal solution. From
Figure \ref{RR_PO_vs_NE}, we also see that the solutions to the MOP can be
alternatively reached as Nash equilibria of the modified game $\widetilde{{%
\mathscr{G}%
}}$ in (\ref{G_tilde}) (Proposition \ref{Prop_NE-PO}), using the gradient
projection algorithm of \cite{Rosen}. However, this algorithm cannot be
implemented in a distributed way, as it requires the knowledge from each
user of the channels and the power allocations of all the other links.
\begin{figure}[tbph]
\vspace{-0.6cm}
\par
\begin{center}
\includegraphics[trim=0.000000in 0.000000in 0.000000in
-0.212435in,width=13cm]{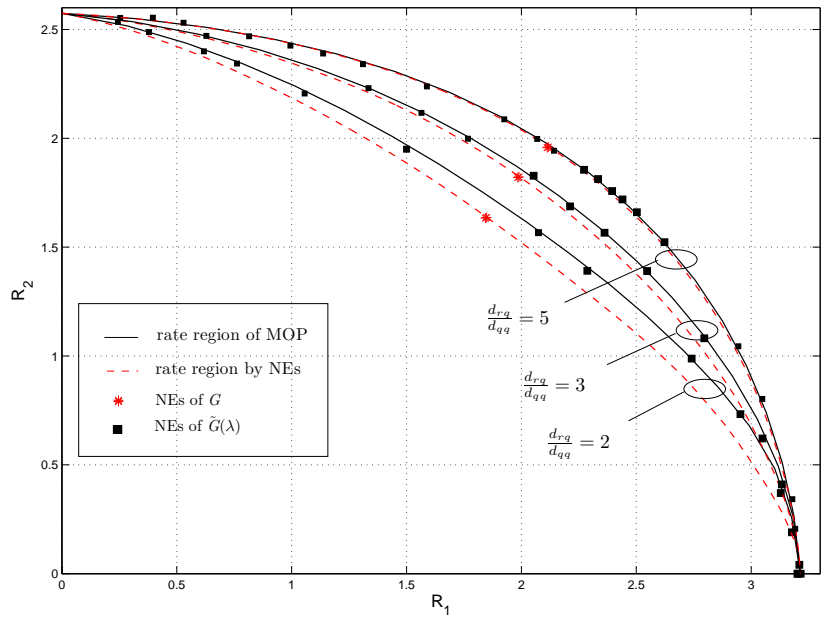}
\end{center}
\par
\vspace{-1.2cm}
\caption{{\protect\footnotesize Rate region achieved for $%
d_{12}/d_{22}=d_{12}/d_{22}=2,3,5$, $d_{11}=d_{22}=1$, $P_{1}=P_{2}$, $\text{%
snr}_{1}=\text{snr}_{2}=10$ dB. Solid line, dashed line curves and asterisks
refer on rate region given in (\protect\ref{RR}), rate region obtained by
NEs of $\widetilde{{%
\mathscr{G}%
}}(%
\boldsymbol{\lambda}%
)$ given in (\protect\ref{G_tilde}), and NE points of game $\mathscr{G}$
given in (\protect\ref{Rate Game}), respectively.}}
\label{RR_PO_vs_NE}
\end{figure}

\noindent \textbf{Example 2}: \emph{Asymmetric Case}. We consider now a
two-users system operating in an asymmetric situation, where one link
receives a large interference whereas the other does not. This asymmetry can
be due to many reasons, such as different transmission powers and/or
unbalanced cross gains among the users (i.e., $d_{12}<<d_{21}$ or $%
d_{12}>>d_{21}$). For the sake of brevity, in the following we consider only
the latter case, i.e., the situation in which both transmitters have the
same power budget but, because of the location of transmitters and
receivers, one link receives much more interference than the other.
In Figure \ref{fig:RR_asym} we plot the Pareto optimal points of$\mathcal{\ }
$(\ref{MOP}) and the Nash equilibria of game $%
{\mathscr{G}}%
$ defined in (\ref{Rate Game}), for a two-users asymmetric system, for
different values of the ratio $d_{12}/d_{21}$ and a given channel
realization. Low values of $d_{12}/d_{21}$ correspond to high asymmetric
situations. From the figure we infer that as the asymmetry of the system
increases (i.e., $d_{12}/d_{21}$ decreases) the loss of Nash equilibria with
respect to the corresponding Pareto optimal points becomes more significant.
As an example, for the setup considered in the figure, the performance loss
in terms of sum-rate can be as large as $30\%$ of the globally optimal
solution. The same qualitative behavior has been observed changing the
channel realizations and the number of users.
\begin{figure}[tbp]
\centering\vspace{-0.5cm}
\subfigure[\,] {\includegraphics[height=5.5cm,
width=7cm]{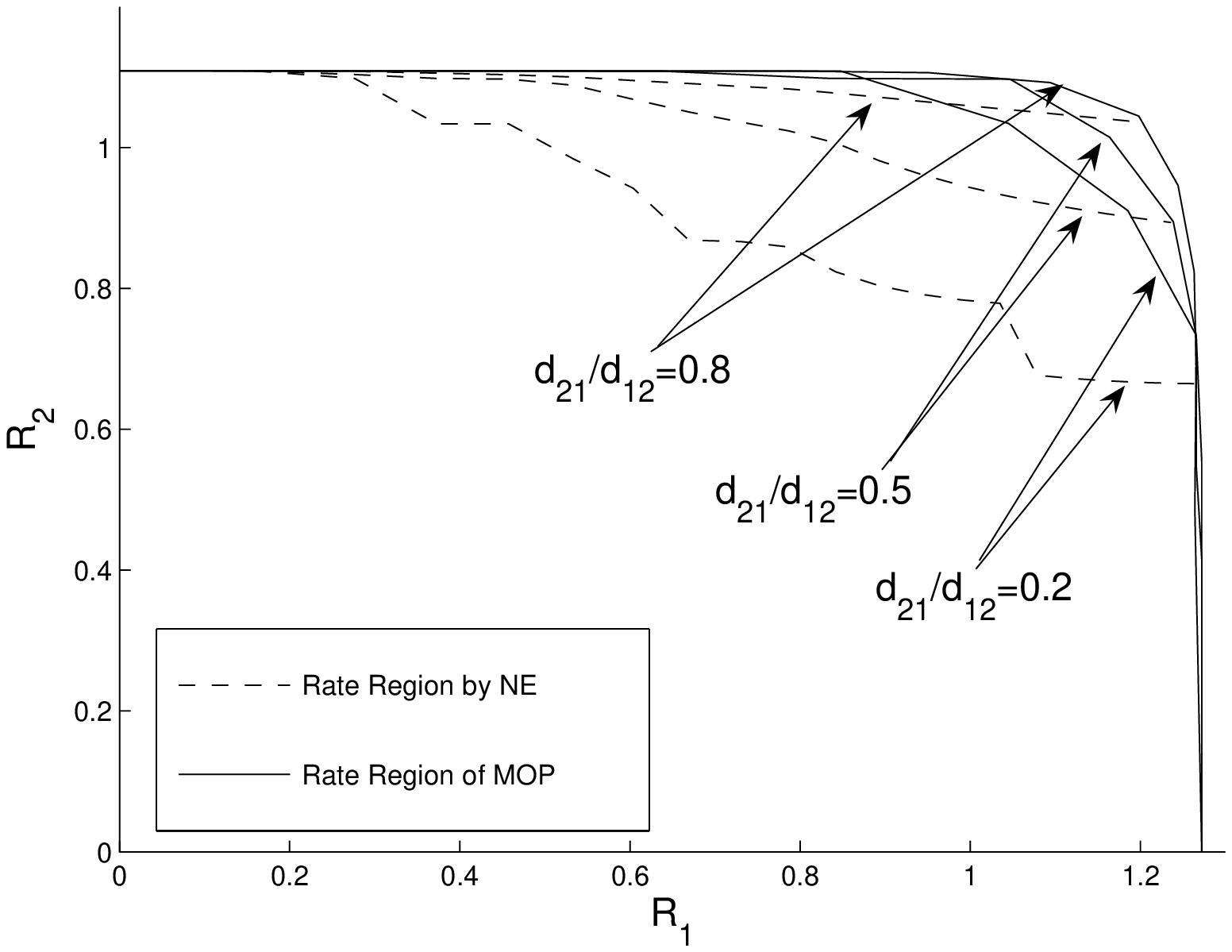}
\label{fig:RR_asym}}\hspace{0.5cm}
\subfigure[]{\includegraphics[height=5cm,
width=6.8cm]{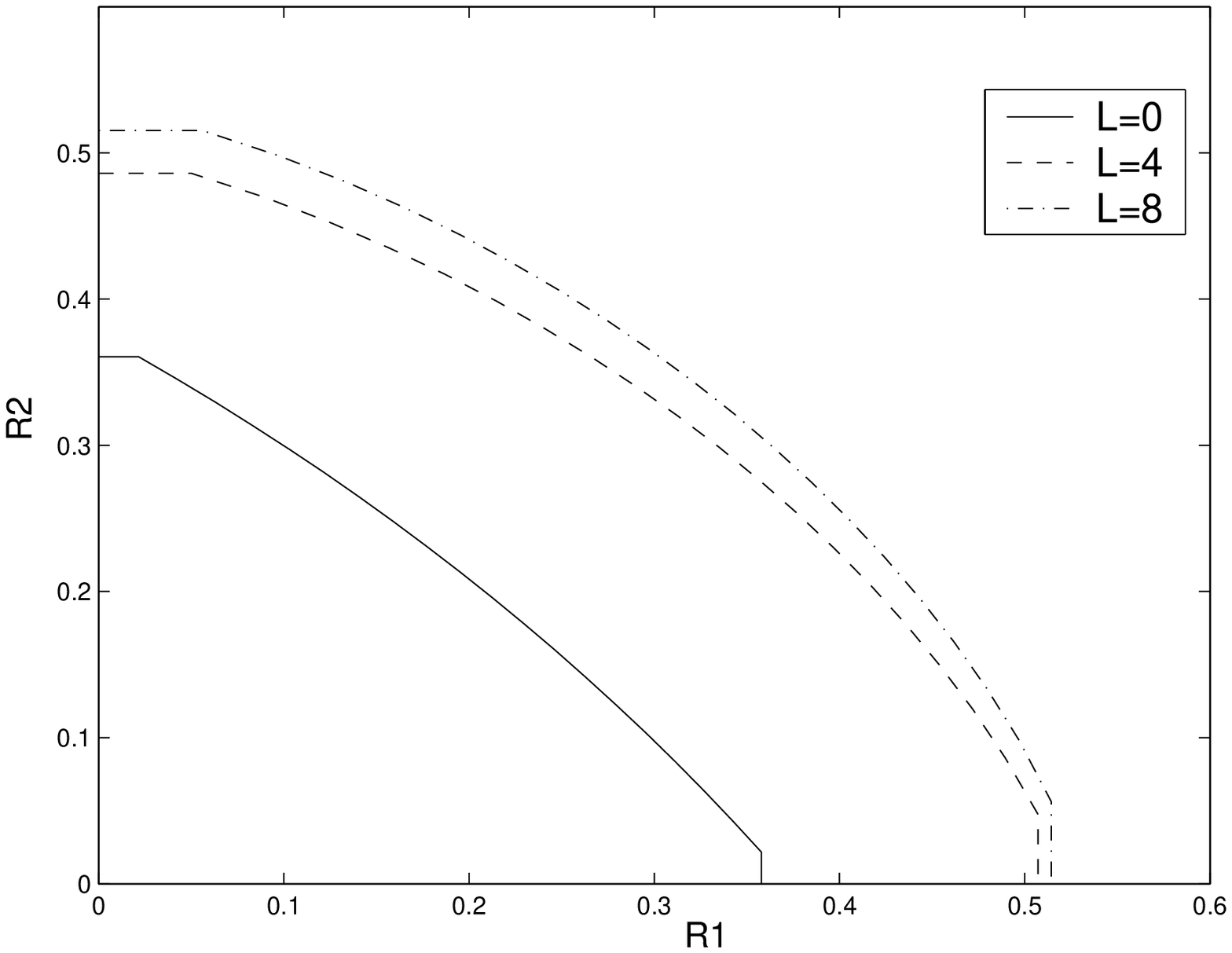}\label{Rate_vs_L}
\hfill}
\vspace{-0.5cm}
\caption{{\protect\footnotesize a) Rate region achieved in the asymmetric
case, for $d_{12}/d_{21}=0.2,0.5,0.8$, $d_{11}=d_{22}=1$, $P_{1}=P_{2}$, $%
\text{snr}_{1}=\text{snr}_{2}=5$ dB. Solid line and dashed line curves refer
on rate region given in (\protect\ref{RR}) and rate region obtained by NEs
of $\mathscr{G}$ given in (\protect\ref{Rate Game}), respectively; b) Rate
region achieved for $L=0,4,8$, $N=64$, $d_{12}/d_{22}=d_{21}/d_{11}=1$, $%
d_{11}=d_{22}=1$, $P_{1}=P_{2}$, $\text{snr}_{1}=\text{snr}_{2}=5$ dB.}}
\label{fig1_main}
\end{figure}

\noindent \textbf{Example 3}: \emph{Rate region versus channel order}.
Finally, we study the effect of channel frequency selectivity on the Nash
equilibria of game ${%
\mathscr{G}%
}$ in (\ref{Rate Game}). To this end, we plot in Figure \ref{Rate_vs_L} the
average rate region obtained by the Nash equilibria of ${%
\mathscr{G}%
}$ for different values of the channel order $L_{h}$. The channel taps are
simulated as i.i.d. Gaussian random variables with zero mean and variance $%
1/(L_{h}+1)$. From the figure, we infer that the performance of
decentralized system improves with the channel order. This is due to the
fact that larger frequency fluctuations of the channel provide more degrees
of freedom for each user to find out the best spectral partition for itself,
given its own channel and interference. 
\vspace{-0.5cm}

\section{Conclusions}

\vspace{-0.1cm} \label{Conclusions} In this paper we formulated the problem
of finding the optimal precoding/multiplexing strategy in an
infrastructureless multiuser scenario as a noncooperative game. We first
considered the theoretical problem of maximizing mutual information on each
link, given constraints on the spectral mask and transmit power. Then, to
accommodate practical implementation aspects, we focused on the competitive
maximization of the transmission rate on each link, using finite order
constellations, under the same constraints as above plus a constraint on the
average error probability. We proved that in both cases a NE always exists
and the optimal precoding/multiplexing strategy leads to a (pure strategy)
diagonal transmission for all the users. This result strongly simplifies the
optimization, as we can reduce both original complicated matrix-valued games
to a simpler unified vector power control game. Thus, we studied such a
game, and derived sufficient conditions for the uniqueness of the NE, that
were proved to have a broader validity than the conditions known in the
literature for special cases of our game. Then, we compared the totally
decentralized strategy with the Pareto-optimal centralized solution and we
observed that the decentralized strategy has rather low performance loss
with respect to the Pareto-optimal solution, especially for symmetric
systems. Larger losses were observed in the case of very asymmetric systems.
\ In the effort to approach the Pareto-optimal performance, we then showed
how to modify the payoff of each user in order to create a modified game,
whose Nash equilibria are Pareto-optimal. However, this comes at the cost
of extra signaling among the users and breaks the noncooperative feature
of the original games.

Once proved that a Nash equilibrium exists and under which conditions is
unique, the problem boils down to how to reach such an equilibrium. This
problem is addressed in Part II of this paper \cite{Scutari-Part II},
where we provide a variety of distributed algorithms along with their
convergence conditions.\vspace{-0.2cm}

\appendix

\section{Proof of Theorem \protect\ref{Theo_multiccarrier_Info_rate_} \label%
{proof_Theo_multiccarrier_Info_rate}}

\vspace{-0.3cm} We first prove Theorem \ref{Theo_multiccarrier_Info_rate_} for game $%
\mathscr{G}%
_{1}.$ Then, we show that the same result holds true also for game $%
\mathscr{G}%
_{2}.$ \vspace{-0.4cm}

\subsection{Game $%
\mathscr{G}%
_{1}$\label{proof_Theo_1_G_1}}

Given $%
\mathscr{G}%
_{1}$ in (\ref{Rate-matrix-game}), according to the definition of NE (cf.
Definition \ref{NE def}), the proof consists in showing that for each player
$q,$ given the optimal strategy profiles of the others at some NE
(corresponding to the optimal matrices $\mathbf{F}_{r}=\mathbf{W\Sigma }%
_{r}^{1/2},$ with $\mathbf{\Sigma }_{r}%
\triangleq%
\limfunc{diag}(\mathbf{p}_{r}),$ $\forall r\neq q$) i.e., $\mathbf{R}%
_{-q}=\sigma _{q}^{2}\mathbf{I}+\sum\limits_{r\neq q\hfill \hfill }\mathbf{H}%
_{rq}\mathbf{W\Sigma }_{r}\mathbf{W}^{H}\mathbf{H}_{rq}^{H}$, the maximum of
mutual information ${\limfunc{I}\nolimits_{q}}(\mathbf{F}_{q},\mathbf{F}%
_{-q})$ defined in (\ref{R_qq}), under constraints (\ref{P_q_Q_q}), is
achieved by $\mathbf{F}_{q}\mathbf{F}_{q}^{H}=\mathbf{W\Sigma }_{q}\mathbf{W}%
^{H},$ where $\mathbf{\Sigma }_{q}%
\triangleq%
\limfunc{diag}(\mathbf{p}_{q})$ and $\mathbf{p}_{q}$ is solution to (\ref%
{Rate Game}), for fixed $\mathbf{p}_{-q}.$

In the absence of spectral mask constraints, the statement of the theorem
comes directly from the well-known diagonality result of the
capacity-achieving solution to the single user vector Gaussian\ channel \cite%
{Cover}, and the fact that all channel matrices in (\ref{R_qq}) are
diagonalized by the IFFT matrix $\mathbf{W}$:
\begin{equation}
\log \left( \left\vert \mathbf{I}+\mathbf{F}_{q}^{H}\mathbf{H}_{qq}^{H}%
\mathbf{R}_{\mathbf{-}q}^{-1}\mathbf{H}_{qq}\mathbf{F}_{q}\right\vert
\right) =\log \left( \left\vert \mathbf{I}+\mathbf{\Lambda }_{q}\mathbf{Q}%
_{q}\right\vert \right) \leq \sum_{k}\log \left( 1+\left[ \mathbf{\Lambda }%
_{q}\right] _{kk}\left[ \mathbf{Q}_{q}\right] _{kk}\right) ,\vspace{-0.2cm}
\label{Hadamard_ineq}
\end{equation}%
where $\mathbf{Q}_{q}%
\triangleq%
\mathbf{W}^{H}\mathbf{F}_{q}\mathbf{F}_{q}^{H}\mathbf{W,}$ we have used the
eigen-decomposition $\mathbf{H}_{qq}^{H}\mathbf{R}_{\mathbf{-}q}^{-1}\mathbf{%
H}_{qq}=\mathbf{W\Lambda }_{q}\mathbf{W}^{H},$ with
\begin{equation}
\left[ \mathbf{\Lambda }_{q}\right] _{kk}=\frac{\left\vert
H_{qq}(k)\right\vert ^{2}}{1+\sum_{\,r\neq q}\left\vert H_{rq}(k)\right\vert
^{2}p_{r}(k)},\quad k\in \{1,\ldots ,N\},  \label{Lambda_q}
\end{equation}
and the last inequality follows from the Hadamard's inequality \cite{Cover}.
Since the equality in (\ref{Hadamard_ineq}) is reached if and only if $%
\mathbf{Q}_{q}$ is diagonal and the power constraint $\mathsf{Tr}\left(
\mathbf{F}_{q}\mathbf{F}_{q}^{H}\right) =\mathsf{Tr}\left( \mathbf{Q}%
_{q}\right) \leq P_{T}$ depends only on the diagonal elements of $\mathbf{Q}%
_{q},$ we may set $\mathbf{Q}_{q}$ diagonal w.l.o.g., i.e., $\mathbf{Q}_{q}=%
\mathbf{\Sigma }_{q};$ which leads to the desired optimal structure for $%
\mathbf{F}_{q}$.

Interestingly, in the presence of spectral mask constraints, we can still
use the previous result since the additional constraints $\left[ \mathbf{W}%
^{H}\mathbf{F}_{q}\mathbf{F}_{q}^{H}\mathbf{W}\right] _{kk}=\left[ \mathbf{Q}%
_{q}\right] _{kk}\leq \overline{p}_{q}^{\max }(k)$ depend only on the
diagonal elements of $\mathbf{Q}_{q}.$ Introducing the optimal structure $%
\mathbf{F}_{q}=\mathbf{W\Sigma }_{q}^{1/2}$ in $%
\mathscr{G}%
_{1}$, we obtain the simpler game $%
\mathscr{G}%
$ in (\ref{Rate Game}). \vspace{-0.2cm}

\subsection{Game $%
\mathscr{G}%
_{2}$\label{proof_Theo_1_G_2}}

The proof hinges on majorization theory for which the reader is referred to
\cite{Marshall-book} or \cite{Palomar-convex, Palomar_QoS}. The key
definitions and results on which the proof is based are outlined next for
convenience.

\begin{definition}[{\protect\cite[$1.A.1$]{Marshall-book}}]
\label{Def:majorization}For any two vectors $\mathbf{x},\mathbf{y\in
\mathbb{R}
}^{n}$, we say $\mathbf{x}$ is majorized by $\mathbf{y}$ or \ $\mathbf{y}$
majorizes $\mathbf{x}$ (denoted by $\mathbf{x}\prec \mathbf{y}$\ or $\mathbf{%
y\succ x}$) if \vspace{-0.4cm}
\begin{eqnarray*}
\sum_{k=1}^{i}x_{[k]} &\leq &\sum_{k=1}^{i}y_{[k]},\quad 1\leq k<n, \\
\sum_{k=1}^{n}x_{[k]} &=&\sum_{k=1}^{n}y_{[k]},
\end{eqnarray*}%
where $x_{[k]}$ and $y_{[k]}$ denote the elements of $\mathbf{x}$ and $%
\mathbf{y}$, respectively, in decreasing order.
\end{definition}

\begin{definition}
\label{Def:Schur convexity/concavity}A real valued function $\phi $ defined
on a set $\mathcal{A}\subseteq
\mathbb{R}
^{n}$ is said to be Schur-convex on $\mathcal{A}$ if%
\begin{equation}
\mathbf{x}\prec \mathbf{y\quad }\text{on }\mathcal{A}\text{ }\mathbf{\quad
\Rightarrow \quad }\phi (\mathbf{x})\leq \phi (\mathbf{y}).
\label{def:Schur-Convex}
\end{equation}%
Similarly, $\phi $ is said to be Schur-concave on $\mathcal{A}$ if%
\begin{equation}
\mathbf{x}\prec \mathbf{y\quad }\text{on }\mathcal{A}\text{ }\mathbf{\quad
\Rightarrow \quad }\phi (\mathbf{x})\geq \phi (\mathbf{y}).
\label{def:Schur-Concave}
\end{equation}
\end{definition}

\begin{lemma}[{\protect\cite[p.7]{Marshall-book}, \protect\cite[9.B.1]%
{Marshall-book}}]
\label{Lemma_majorization_ineqs}For a Hermitian matrix $\mathbf{A}$ and a
unitary matrix $\mathbf{U,}$ it follows that%
\begin{equation}
\mathbf{1}(\mathbf{A})\prec \mathbf{d}(\mathbf{U}^{H}\mathbf{AU})\prec
\boldsymbol{\lambda }(\mathbf{A})\mathbf{,}  \label{eq:first_majoriz_ineq}
\end{equation}%
where $\mathbf{d}(\mathbf{A})$ and $\boldsymbol{\lambda }(\mathbf{A})$
denote the diagonal elements and eigenvalues of $\mathbf{A},$ respectively,
and $\mathbf{1}$ denotes the vector with identical components equal to the
average of the diagonal elements of $\mathbf{A}$.
\end{lemma}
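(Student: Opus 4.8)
The plan is to reduce the two-sided chain to a statement about a single Hermitian matrix and then prove each half separately, the right-hand majorization being the substantive one. First I would set $\mathbf{B}\triangleq\mathbf{U}^{H}\mathbf{A}\mathbf{U}$ and observe that $\mathbf{B}$ is Hermitian with exactly the same spectrum as $\mathbf{A}$ (unitary similarity), so that $\boldsymbol{\lambda}(\mathbf{B})=\boldsymbol{\lambda}(\mathbf{A})$; moreover $\mathsf{Tr}(\mathbf{B})=\mathsf{Tr}(\mathbf{A})$, so the average of the diagonal entries is unchanged and $\mathbf{1}(\mathbf{B})=\mathbf{1}(\mathbf{A})$. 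Thus (\ref{eq:first_majoriz_ineq}) is equivalent to the two relations $\mathbf{1}(\mathbf{B})\prec\mathbf{d}(\mathbf{B})$ and $\mathbf{d}(\mathbf{B})\prec\boldsymbol{\lambda}(\mathbf{B})$, which I would treat in turn.

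For the right-hand relation $\mathbf{d}(\mathbf{B})\prec\boldsymbol{\lambda}(\mathbf{B})$ (Schur's theorem) I would use the spectral decomposition $\mathbf{B}=\mathbf{V}\,\limfunc{diag}(\boldsymbol{\lambda}(\mathbf{B}))\,\mathbf{V}^{H}$ with $\mathbf{V}$ unitary. Reading off the diagonal gives $[\mathbf{d}(\mathbf{B})]_{i}=\sum_{j}|v_{ij}|^{2}\lambda_{j}(\mathbf{B})$, i.e.\ $\mathbf{d}(\mathbf{B})=\mathbf{S}\,\boldsymbol{\lambda}(\mathbf{B})$ where $[\mathbf{S}]_{ij}\triangleq|v_{ij}|^{2}$. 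Because $\mathbf{V}$ is unitary, every row and every column of $\mathbf{S}$ sums to one, so $\mathbf{S}$ is doubly stochastic, and the Hardy--Littlewood--P\'olya theorem yields $\mathbf{S}\,\boldsymbol{\lambda}(\mathbf{B})\prec\boldsymbol{\lambda}(\mathbf{B})$, which is exactly the claim. Equivalently, one can bypass the doubly-stochastic representation and verify the partial-sum inequalities $\sum_{k=1}^{i}[\mathbf{d}(\mathbf{B})]_{[k]}\leq\sum_{k=1}^{i}\lambda_{k}(\mathbf{B})$ directly from Ky Fan's variational characterization of the sum of the $i$ largest eigenvalues, with equality at $i=n$ since both sides equal $\mathsf{Tr}(\mathbf{B})$.

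The left-hand relation $\mathbf{1}(\mathbf{B})\prec\mathbf{d}(\mathbf{B})$ is elementary and holds for any real vector in place of $\mathbf{d}(\mathbf{B})$: writing $\bar d$ for the common component of $\mathbf{1}(\mathbf{B})$, namely the mean of the diagonal entries, the total sums agree by construction, while for each $i<n$ the sum of the $i$ largest diagonal entries is at least $i\,\bar d$ (the average of the $i$ largest entries cannot be below the overall average), giving $\sum_{k=1}^{i}[\mathbf{1}(\mathbf{B})]_{[k]}=i\,\bar d\leq\sum_{k=1}^{i}[\mathbf{d}(\mathbf{B})]_{[k]}$. Chaining the two relations and substituting back $\boldsymbol{\lambda}(\mathbf{B})=\boldsymbol{\lambda}(\mathbf{A})$ and $\mathbf{1}(\mathbf{B})=\mathbf{1}(\mathbf{A})$ completes the argument.

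The only genuine obstacle is the right-hand inequality: it rests either on the Birkhoff/Hardy--Littlewood--P\'olya equivalence between doubly stochastic averaging and majorization, or on Ky Fan's eigenvalue-sum principle. Both are standard, and since the lemma is quoted from \cite{Marshall-book} I would simply invoke the relevant result there; the remaining bookkeeping (unitary invariance of the spectrum and of the trace, and the triviality of the mean-vector majorization) is routine.
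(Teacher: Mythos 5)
Your proof is correct. Note, however, that the paper offers no proof of this lemma at all: it is imported as a known result from Marshall and Olkin, with the reference to p.~7 covering the left-hand relation and 9.B.1 (Schur's theorem) covering the right-hand one, and it is then used as a black box in Appendix A.2. So your argument supplies a proof where the paper merely cites, and what you wrote is in fact the standard argument behind the cited results: the reduction to $\mathbf{B}=\mathbf{U}^{H}\mathbf{A}\mathbf{U}$ via unitary invariance of spectrum and trace, the representation $\mathbf{d}(\mathbf{B})=\mathbf{S}\,\boldsymbol{\lambda}(\mathbf{B})$ with $[\mathbf{S}]_{ij}=|v_{ij}|^{2}$ doubly stochastic followed by Hardy--Littlewood--P\'olya, and the elementary partial-sum bound $i\,\bar{d}\leq\sum_{k=1}^{i}[\mathbf{d}(\mathbf{B})]_{[k]}$ for the mean vector (with equal totals). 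Your Ky Fan alternative is equally valid and slightly more self-contained, since it gives the partial-sum inequalities directly without invoking the equivalence between doubly stochastic averaging and majorization. One point worth adding, because the paper's Appendix A.2 actually relies on it, is the remark the paper makes immediately after the lemma: both extremes of the chain are attained for suitable $\mathbf{U}$ (choosing $\mathbf{U}$ to diagonalize $\mathbf{A}$ gives $\mathbf{d}=\boldsymbol{\lambda}$, and choosing $\mathbf{U}$ to equalize the diagonal gives $\mathbf{d}=\mathbf{1}$); that attainability claim is not part of the statement you were asked to prove, so its absence is not a gap, but it is the feature that makes the lemma useful downstream.
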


Interestingly, matrix $\mathbf{U}$ in (\ref{eq:first_majoriz_ineq}) can
always be chosen such that the diagonal elements are equal to one extreme or
the other. To achieve $\mathbf{d}(\mathbf{U}^{H}\mathbf{AU})=\mathbf{1}(%
\mathbf{A}),$ $\mathbf{U}$ has to be chosen such that $\mathbf{U}^{H}\mathbf{%
AU}$ has equal diagonal elements; to achieve $\mathbf{d}(\mathbf{U}^{H}%
\mathbf{AU})=\boldsymbol{\lambda }(\mathbf{A}),$ $\mathbf{U}$ has to be
chosen to diagonalize matrix $\mathbf{A},$ i.e., equal to the eigenvectors
of $\mathbf{A}.$

As in Appendix \ref{proof_Theo_1_G_1}, the proof consists in showing that,
given the optimal strategy profiles of the others at some NE, the optimal
precoding matrix of user $q$, solution to (\ref{Rate-Game-gap}), is in the
form of (\ref{Optimal_F_q}). \ Thus, hereafter we can focus on a generic
user $q,$ and assume that the strategy profiles of the others are fixed and
equal to $\mathbf{F}_{r}=\mathbf{W\Sigma }_{r}^{1/2},$ with $\mathbf{\Sigma }%
_{r}%
\triangleq%
\limfunc{diag}(\mathbf{p}_{r}),$ $\forall r\neq q.$ We prove the theorem in
two steps. Given $q,$ first, we show the equivalence of the original
complicated problem (\ref{Rate-Game-gap}) and a simpler problem, and then,
we solve the simple problem, implying the optimality of $\mathbf{F}_{q}$ in
the form (\ref{Optimal_F_q}).

Defining $\mathbf{P}_{q}%
\triangleq%
\mathbf{W}^{H}\mathbf{F}_{q},$\ the MSE matrix of user $q$ can be written as%
\begin{equation}
\mathbf{E}_{q}\mathbf{(P}_{q}\mathbf{)}%
\triangleq%
(\mathbf{I+F}_{q}^{H}\mathbf{H}_{qq}^{H}\mathbf{R}_{q}^{-1}\mathbf{H}_{qq}%
\mathbf{F}_{q})^{-1}=(\mathbf{I+F}_{q}^{H}\mathbf{W\Lambda }_{q}\mathbf{W}%
^{H}\mathbf{F}_{q})^{-1}=(\mathbf{I+P}_{q}^{H}\mathbf{\Lambda }_{q}\mathbf{P}%
_{q})^{-1},  \label{def:MSE_matrix}
\end{equation}%
where in the second equality we used $\mathbf{R}_{-q}=\sigma _{q}^{2}\mathbf{%
I}+\sum\limits_{r\neq q\hfill \hfill }\mathbf{H}_{rq}\mathbf{W\Sigma }_{r}%
\mathbf{W}^{H}\mathbf{H}_{rq}^{H}$\ and the eigen-decomposition $\mathbf{H}%
_{qq}^{H}\mathbf{R}_{\mathbf{-}q}^{-1}\mathbf{H}_{qq}=\mathbf{W\Lambda }_{q}%
\mathbf{W}^{H},$ with $\mathbf{\Lambda }_{q}$ defined in (\ref{Lambda_q}).
The matrix $\mathbf{E}_{q}\mathbf{(P}_{q}\mathbf{)}$ has the following
properties: i) $\mathbf{E}_{q}\mathbf{(P}_{q}\mathbf{)}$ is a continuous
function\footnote{%
This result can be proved using \cite[Theorem 10.7.1]{Campell-book}.} of $%
\mathbf{P}_{q}\in
\mathbb{C}
^{N\times N}$; ii) For any unitary matrix $\mathbf{U},$ $\mathbf{E}_{q}%
\mathbf{(P}_{q}\mathbf{)}$ satisfies%
\begin{equation}
\mathbf{E}_{q}\mathbf{(P}_{q}\mathbf{U)=U}^{H}\mathbf{E}_{q}\mathbf{(P}_{q}%
\mathbf{)U.}  \label{E_q_property}
\end{equation}%
Using (\ref{def:MSE_matrix}), the payoff function of user $q$ (to be \emph{%
minimized}) in (\ref{Rate-Game-gap}) can be written as a function\ of $%
\mathbf{d}(\mathbf{E}_{q}\mathbf{(P}_{q}\mathbf{))}$: \vspace{-0.4cm}
\begin{equation}
f_{q}(\mathbf{d}(\mathbf{E}_{q}\mathbf{(P}_{q}\mathbf{)}))%
\triangleq%
-\limfunc{r}\nolimits_{q}(\mathbf{F}_{q})=-\frac{1}{N}\dsum\limits_{k=1}^{N}%
\log _{2}\left( 1\mathcal{+}\frac{\left( \left[ \mathbf{E}_{q}\mathbf{(P}_{q}%
\mathbf{)}\right] _{kk}\right) ^{-1}-1}{\Gamma _{q}}\right) ,\vspace{-0.3cm}
\label{-R_q}
\end{equation}%
where $\limfunc{r}\nolimits_{q}(\mathbf{F}_{q})$ is defined in (\ref%
{Rate-gap}) and the last equality follows from the relation $\limfunc{SINR}%
\nolimits_{k,q}(\mathbf{F}_{q})=\left( \left[ \mathbf{E}_{q}\mathbf{(P}_{q}%
\mathbf{)}\right] _{kk}\right) ^{-1}$ $-1,$ with $\limfunc{SINR}%
\nolimits_{k,q}(\mathbf{F}_{q})$ defined in (\ref{SINR_kq_MSE}). \ The
function $f_{q}$ has the following properties: i) $f_{q}(\mathbf{d}(\mathbf{E%
}_{q}\mathbf{(P}_{q}\mathbf{)}))$ is a continuous function of $\mathbf{P}%
_{q}\in
\mathbb{C}
^{N\times N}$ (implied from the continuity of $\mathbf{E}_{q}\mathbf{(P}_{q}%
\mathbf{),}$ as pointed out before\textbf{); }ii) $f_{q}(\mathbf{d}(\mathbf{E%
}_{q}\mathbf{(P}_{q}\mathbf{)}))$ is a Schur-concave function on $%
\mathbb{R}
_{+}^{N}$ (cf. Definition \ref{Def:Schur convexity/concavity}) \cite%
{Palomar-Barbarossa}.

Using (\ref{def:MSE_matrix}) and (\ref{-R_q}), the optimization problem of
user $q$ in (\ref{Rate-Game-gap}) becomes%
\begin{equation}
\begin{array}{ll}
\limfunc{minimize}\limits_{\mathbf{P}_{q}} & f_{q}(\mathbf{d}(\mathbf{E}_{q}%
\mathbf{(P}_{q}\mathbf{)})) \\
\limfunc{subject}\text{ }\limfunc{to} &
\begin{array}{l}
\dfrac{1}{N}%
\mathsf{Tr}%
\left( \mathbf{P}_{q}\mathbf{P}_{q}^{H}\right) \leq P_{q}, \\
\mathbf{d}\left( \mathbf{P}_{q}\mathbf{P}_{q}^{H}\right) \leq \overline{%
\mathbf{p}}_{q}^{\max },%
\end{array}%
\end{array}
\tag{P1}  \label{G_2_eq_1}
\end{equation}%
where $\overline{\mathbf{p}}_{q}^{\max }%
\triangleq%
(\overline{p}_{q}^{\max }(k))_{k=1}^{N}.$ Observe that problem P1 always
admits a solution, since the feasible set is closed and bounded (thus
compact) and the objective function is continuous in its interior (as
pointed out before). A priori the solution is unknown, but we assume it
given by an oracle and denoted by\footnote{%
In the case of multiple solutions, we may choose one of them, w.l.o.g..} $%
\mathbf{P}_{q}^{\star }$. As it will be shown next, we do not need to know
explicitly such a solution to complete the proof of the theorem. Then,
problem P1 is equivalent to%
\begin{equation}
\begin{array}{ll}
\limfunc{minimize}\limits_{\mathbf{P}_{q}} & f_{q}(\mathbf{d}(\mathbf{E}_{q}%
\mathbf{(P}_{q}\mathbf{)})) \\
\limfunc{subject}\text{ }\limfunc{to} & \mathbf{d}\left( \mathbf{P}_{q}%
\mathbf{P}_{q}^{H}\right) =\mathbf{d}_{q}^{\star },%
\end{array}
\tag{P2}  \label{G_2_eq_2}
\end{equation}%
where $\mathbf{d}_{q}^{\star }%
\triangleq%
\mathbf{d}\left( \mathbf{P}_{q}^{\star }\mathbf{P}_{q}^{\star H}\right) .$
The equivalence of both problems is in the following sense: 1) The optimal
(and thus feasible) point $\mathbf{P}_{q}^{\star }$ of P1 is feasible in P2
with the same value of the objective function; 2) \ A feasible point in P2
(not necessarily optimal) is feasible also in P1 with the same value of the
objective function.

Thus, for any given $\mathbf{d}_{q}^{\star },$ we can focus on solving
problem P2 w.l.o.g., and show that the optimal solution to P2 is diagonal;
which leads to the desired structure for the original matrix $\mathbf{F}_{q}$
in (\ref{Rate-Game-gap}). Since $f_{q}\left( \mathbf{d}\left( \mathbf{E}_{q}(%
\mathbf{P}_{q})\right) \right) $ is Schur-concave, it follows from
Definition \ref{Def:Schur convexity/concavity} and Lemma \ref%
{Lemma_majorization_ineqs} that $f_{q}\left( \mathbf{d}\left( \mathbf{E}_{q}(%
\mathbf{P}_{q})\right) \right) \geq f_{q}\left( \boldsymbol{\lambda }\left(
\mathbf{E}_{q}(\mathbf{P}_{q})\right) \right) .$ Interestingly, for any
given $\mathbf{P}_{q}$, it is always possible to achieve the lower bound $%
f_{q}\left( \boldsymbol{\lambda }\left( \mathbf{E}_{q}(\mathbf{P}%
_{q})\right) \right) $ of $f_{q}\left( \mathbf{d}\left( \mathbf{E}_{q}(%
\mathbf{P}_{q})\right) \right) $ by using instead the matrix $\widetilde{%
\mathbf{P}}_{q}=\mathbf{P}_{q}\mathbf{U}$ such that $\mathbf{E}_{q}(%
\widetilde{\mathbf{P}}_{q})=\mathbf{U}^{H}\mathbf{E}_{q}\mathbf{(P}_{q}%
\mathbf{)U}$ is diagonal (see (\ref{E_q_property})), without affecting the
constraints, since $\mathbf{d}(\widetilde{\mathbf{P}}_{q}\widetilde{\mathbf{P%
}}_{q}^{H})=\mathbf{d}\left( \mathbf{P}_{q}\mathbf{P}_{q}^{H}\right) .$ In
such a case, $f_{q}(\mathbf{d}(\mathbf{E}_{q}(\widetilde{\mathbf{P}}%
_{q})))=f_{q}(\boldsymbol{\lambda }(\mathbf{E}_{q}(\widetilde{\mathbf{P}}%
_{q})))=f_{q}\left( \mathbf{d}\left( \mathbf{E}_{q}(\mathbf{P}_{q})\right)
\right) .$ This implies that there is an optimal $\mathbf{P}_{q}$ for
problem P2 such that $\mathbf{E}_{q}(\mathbf{P}_{q})$ is diagonal or,
equivalently from (\ref{def:MSE_matrix}), such that $\mathbf{P}_{q}^{H}%
\mathbf{\Lambda }_{q}\mathbf{P}_{q}$ is diagonal.

Now, given that $\mathbf{P}_{q}^{H}\mathbf{\Lambda }_{q}\mathbf{P}_{q}$ is a
diagonal matrix, let us say $\mathbf{\Sigma }_{q},$ i.e., $\mathbf{P}_{q}^{H}%
\mathbf{\Lambda }_{q}\mathbf{P}_{q}=$ $\mathbf{\Sigma }_{q}$, we can write $%
\mathbf{P}_{q}$ as $\mathbf{P}_{q}=\mathbf{\Lambda }_{q}^{-1/2}\mathbf{U}_{q}%
\mathbf{\Sigma }_{q}^{1/2}$ w.l.o.g. (see, e.g., \cite{Palomar-convex,
Palomar_QoS}), where $\mathbf{U}_{q}$ is any unitary matrix. Using such a
structure of $\mathbf{P}_{q},$ problem P1 can be equivalently written as%
\vspace{-0.3cm}
\begin{equation}
\begin{array}{ll}
\limfunc{minimize}\limits_{\mathbf{\Sigma }_{q}\mathbf{,U_{q}}} &
-\dsum\limits_{k=1}^{N}\log _{2}\left( 1\mathcal{+}\dfrac{\left[ \mathbf{%
\Sigma }_{q}\right] _{kk}}{\Gamma _{q}}\right) \\
\limfunc{subject}\text{ }\limfunc{to} & \mathbf{d}\left( \mathbf{U}_{q}%
\mathbf{\Sigma }_{q}\mathbf{U}_{q}^{H}\right) =\overline{\mathbf{d}}%
_{q}^{\star },%
\end{array}
\label{G_2_eq_3}
\end{equation}%
where $\overline{\mathbf{d}}_{q}^{\star }%
\triangleq%
\mathbf{\Lambda }_{q}\mathbf{d}^{\star }$.

From majorization theory (see \cite[5.A.9.A and 9.B.2]{Marshall-book} or
\cite[Lemma 4 and Lemma 3]{Palomar_QoS}) we know that, for any given $%
\mathbf{\Sigma }_{q},$ we can always find a unitary matrix $\mathbf{U}_{q}$
satisfying the constraint in (\ref{G_2_eq_3}) if and only if\vspace{-0.2cm}
\begin{equation}
\boldsymbol{\lambda }(\mathbf{\Sigma }_{q})\succ \overline{\mathbf{d}}%
_{q}^{\star }.  \label{majorization_constraint}
\end{equation}%
Therefore, we can first find the optimal $\mathbf{\Sigma }_{q}$ in (\ref%
{G_2_eq_3}) replacing the original constraint with (\ref%
{majorization_constraint}) and then choose the matrix $\mathbf{U}_{q}$ to
satisfy the constraint in (\ref{G_2_eq_3}) with that optimal $\mathbf{\Sigma
}_{q}.$ Therefore, we have\vspace{-0.2cm}
\begin{equation}
\begin{array}{ll}
\limfunc{minimize}\limits_{\mathbf{\Sigma }_{q}} & -\dsum\limits_{k=1}^{N}%
\log _{2}\left( 1\mathcal{+}\dfrac{\left[ \mathbf{\Sigma }_{q}\right] _{kk}}{%
\Gamma _{q}}\right) \\
\limfunc{subject}\text{ }\limfunc{to} & \mathbf{d}\left( \mathbf{\Sigma }%
_{q}\right) \succ \overline{\mathbf{d}}_{q}^{\star }.%
\end{array}
\label{G_2_eq_4}
\end{equation}

Since the objective function in (\ref{G_2_eq_4}) is Schur-convex, it follows
from Definition \ref{Def:Schur convexity/concavity} that the optimal
solution $\mathbf{\Sigma }_{q}^{\star }$ to (\ref{G_2_eq_4}) is $\mathbf{d}%
\left( \mathbf{\Sigma }_{q}^{\star }\right) =\overline{\mathbf{d}}%
_{q}^{\star }.$ Given such a $\mathbf{\Sigma }_{q}^{\star },$ the constraint
in (\ref{G_2_eq_3}) is satisfied if the matrix $\mathbf{U}_{q}\mathbf{=I}$,
implying for $\mathbf{P}_{q}$ in problem P2 the optimal structure $\mathbf{P}%
_{q}=$ $\mathbf{\Lambda }_{q}^{-1/2}\mathbf{\Sigma }_{q}^{1/2};$ which leads
to the desired expression for $\mathbf{F}_{q}=\mathbf{WP}_{q}=\mathbf{%
W\Lambda }_{q}^{-1/2}\mathbf{\Sigma }_{q}^{1/2}%
\triangleq%
\mathbf{W}\sqrt{\limfunc{diag}(\mathbf{p}_{q})}.$\hspace{\fill}This proves
that all the solutions to problem P1 can be obtained using $\mathbf{F}_{q}=%
\mathbf{W}\sqrt{\limfunc{diag}(\mathbf{p}_{q})}.$%
\hspace{\fill}\rule{1.5ex}{1.5ex}%
\vspace{-0.4cm}

\section{Proof of Theorem\label{proof of th Existence_uniqueness_NE} \protect
\ref{th:existence_uniqueness_NE}}

\subsection{Existence of a NE}

First, we show the existence of NE for the game ${\
\mathscr{G}%
}$ in {(\ref{Rate Game})} using the following fundamental game theory result:%
\vspace{-0.2cm}

\begin{theorem}
\cite{Osborne, Rosen}\label{Rosen's Theorem} The strategic noncooperative
game ${%
\mathscr{G}%
}=\left\{ \Omega ,\{\mathcal{X}_{q}\}_{q\in \Omega },\{{\Phi }_{q}\}_{q\in
\Omega }\right\} $ admits at least one NE\ if, for all $q\in \Omega $: $1)$
The set $\mathcal{X}_{q}$ is a nonempty compact convex subset of a Euclidean
space\footnote{%
A subset $\mathcal{X}_{q}$ of a Euclidean space is compact if and only if it
is closed and bounded. The set $\mathcal{X}_{q}$ is convex if $\mathbf{x}%
=\theta \mathbf{x}^{(0)}+(1-\theta )\mathbf{x}^{(1)}\in \mathcal{X}_{q}$, $%
\forall \mathbf{x}^{(0)},\mathbf{x}^{(1)}\in \mathcal{X}_{q}$ and $\theta
\in \lbrack 0,1].$}, and $2)$ The payoff function $\Phi _{q}(\mathbf{x})$ is
continuous on $\mathcal{X}$ and quasi-concave\footnote{%
A function $f:{%
\mathscr{C}%
\subseteq \mathcal{%
\mathbb{R}
}}^{n}{\mapsto }\mathcal{%
\mathbb{R}
}$ is called quasiconvex if its domain ${%
\mathscr{C}%
}$ and all its sublevel sets ${%
\mathscr{S}%
}_{\alpha }=\left\{ \mathbf{x}\in {%
\mathscr{C}%
\ |\ }f\left( \mathbf{x}\right) \leq \alpha \right\} $, for $\alpha \in
\mathcal{%
\mathbb{R}
}$, are convex. A function $f$ is quasi-concave if $-f$ is quasi-convex.} on
$\mathcal{X}_{q}$.\vspace{-0.1cm}
\end{theorem}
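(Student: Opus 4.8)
The plan is to prove this existence result via Kakutani's fixed-point theorem applied to the best-response correspondence, which is the standard route for pure-strategy Nash equilibria in games with continuous payoffs and convex compact action sets (the Debreu--Glicksberg--Fan argument).

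First I would assemble the joint strategy space $\mathcal{X}=\prod_{q\in\Omega}\mathcal{X}_{q}$. Since each $\mathcal{X}_{q}$ is, by hypothesis $1)$, a nonempty compact convex subset of a Euclidean space, the finite product $\mathcal{X}$ is again nonempty, compact, and convex. For each player $q$ and each profile $\mathbf{x}_{-q}$ of the remaining players, I define the best-response set
\[
B_{q}(\mathbf{x}_{-q})\triangleq\arg\max_{\mathbf{x}_{q}\in\mathcal{X}_{q}}\Phi_{q}(\mathbf{x}_{q},\mathbf{x}_{-q}),
\]
and assemble the joint best-response correspondence $B:\mathcal{X}\rightrightarrows\mathcal{X}$ by $B(\mathbf{x})\triangleq\prod_{q\in\Omega}B_{q}(\mathbf{x}_{-q})$. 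The crucial observation is that a strategy profile $\mathbf{x}^{\star}$ is a NE precisely when it is a fixed point, $\mathbf{x}^{\star}\in B(\mathbf{x}^{\star})$, so it suffices to establish the existence of such a fixed point.

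Next I would verify the hypotheses of Kakutani's theorem for $B$. (i) Each $B_{q}(\mathbf{x}_{-q})$ is nonempty: $\Phi_{q}(\cdot,\mathbf{x}_{-q})$ is continuous on the compact set $\mathcal{X}_{q}$ by hypothesis $2)$, hence attains its maximum by Weierstrass. (ii) Each $B_{q}(\mathbf{x}_{-q})$ is convex: if $\mathbf{x}_{q}^{(0)}$ and $\mathbf{x}_{q}^{(1)}$ both attain the maximal value $v$, then for any $\theta\in[0,1]$ quasi-concavity of $\Phi_{q}(\cdot,\mathbf{x}_{-q})$ in the player's own variable gives $\Phi_{q}(\theta\mathbf{x}_{q}^{(0)}+(1-\theta)\mathbf{x}_{q}^{(1)},\mathbf{x}_{-q})\geq v$, so the convex combination is again a maximizer; consequently the product $B(\mathbf{x})$ has nonempty convex values. (iii) $B$ has a closed graph (equivalently, since $\mathcal{X}$ is compact, is upper hemicontinuous): this follows from the \emph{joint} continuity of each $\Phi_{q}$ on $\mathcal{X}$ via Berge's maximum theorem. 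With $\mathcal{X}$ nonempty compact convex and $B$ an upper hemicontinuous self-correspondence with nonempty convex values, Kakutani's fixed-point theorem yields $\mathbf{x}^{\star}\in B(\mathbf{x}^{\star})$, i.e., a NE.

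The main obstacle is establishing the upper hemicontinuity in step (iii): unlike nonemptiness and convexity, which follow directly from compactness and quasi-concavity in the player's own variable, this step genuinely requires the joint continuity of $\Phi_{q}$ over the whole product space and is exactly the content of Berge's maximum theorem. Concretely, one takes a sequence $(\mathbf{x}^{n},\mathbf{y}^{n})$ in the graph with $\mathbf{x}^{n}\to\mathbf{x}$ and $\mathbf{y}^{n}\to\mathbf{y}$, and shows $\mathbf{y}\in B(\mathbf{x})$ by passing to the limit in the defining inequality $\Phi_{q}(\mathbf{y}_{q}^{n},\mathbf{x}_{-q}^{n})\geq\Phi_{q}(\mathbf{z}_{q},\mathbf{x}_{-q}^{n})$ for arbitrary $\mathbf{z}_{q}\in\mathcal{X}_{q}$, using continuity on both sides. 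Once this is in hand the remaining pieces are routine, and Kakutani closes the argument.
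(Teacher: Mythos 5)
The paper never proves this statement: it is quoted as a known result, attributed to \cite{Osborne, Rosen}, and then invoked in the proof of Theorem \ref{th:existence_uniqueness_NE} to establish existence of a NE for the game $\mathscr{G}$ in (\ref{Rate Game}). There is therefore no internal proof to compare yours against; the comparison has to be with the cited literature. On that score your argument is correct, and it is essentially the classical Debreu--Glicksberg--Fan proof that those references rely on: Kakutani applied to the product best-response correspondence, with nonemptiness of values from Weierstrass, convexity of values from quasi-concavity of each $\Phi_{q}$ in the player's own variable, and the closed-graph property from joint continuity (Berge's maximum theorem, or the direct limiting argument you sketch). One remark on attribution: Rosen's own existence proof in \cite{Rosen} applies Kakutani to a single aggregate correspondence of the form $\mathbf{x}\mapsto\arg\max_{\mathbf{y}}\sum_{q}\Phi_{q}(\mathbf{y}_{q},\mathbf{x}_{-q})$ rather than to the product of individual best responses; that variant streamlines the upper-hemicontinuity verification but requires \emph{concavity} of each payoff in the player's own variable, since a sum of quasi-concave functions need not be quasi-concave. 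Your player-by-player construction is the one that works under the weaker quasi-concavity hypothesis that the theorem actually states, so your route matches the statement as given.
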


\noindent The game $\mathscr{G}$ in (\ref{Rate Game}) always admits at least
one NE, because it satisfies conditions required by Theorem \ref{Rosen's
Theorem}: $1)$ The set $%
{\mathscr{P}}%
_{q}$ of feasible strategy profiles, given by (\ref{admissible strategy set}%
), is convex and compact (because it is closed and bounded); $2)$ The payoff
function of each player $q$ in (\ref{Rate Game}) is continuous in $\mathbf{p}
$ and (strict) concave in $\mathbf{p}_{q}\in
{\mathscr{P}}%
_{q}$ for any given $\mathbf{p}_{-q}$ (this follows from the concavity of
the log function). Hence, it is also quasi-concave \cite{Boyd}.\vspace{-0.2cm%
}

\subsection{Uniqueness of the NE}

\vspace{-0.2cm} To prove the uniqueness we need the following intermediate
results.

\begin{definition}
\label{Def:K matrix}A matrix $\mathbf{A}\in
\mathbb{R}
^{n\times n}$ is said to be a $\mathbf{Z}$-matrix if its off-diagonal
entries are all non-positive. A matrix $\mathbf{A}\in
\mathbb{R}
^{n\times n}$ is said to be a $\mathbf{P}$-matrix if all its principal
minors are positive. A $\mathbf{Z}$-matrix which is also $\mathbf{P}$ is
called a $\mathbf{K}$-matrix.
\end{definition}

Many equivalent characterizations for the above classes of matrices can be
given. The interested reader is referred to \cite{CPStone92, BPlemmons79}
for more details. Here we focus on the following two properties.

\begin{lemma}[{\protect\cite[Theorem $3.3.4$]{CPStone92}}]
\label{Lemma_P_matrix}A matrix $\mathbf{A}\in
\mathbb{R}
^{n\times n}$ is a $\mathbf{P}$-matrix if and only if $\ \mathbf{A}$ does
not reverse the sign of any nonzero vector, i.e.,
\begin{equation}
x_{i}\left[ \mathbf{Ax}\right] _{i}\leq 0\text{ \ for all }i\text{ \ }%
\Rightarrow \text{ \ }\mathbf{x=0}\text{.}
\end{equation}
\end{lemma}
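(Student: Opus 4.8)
The plan is to prove the two implications separately, in each case passing to the contrapositive and localizing to a principal submatrix. The unifying device is that a signature matrix $\mathbf{D}=\operatorname{diag}(\pm 1)$ leaves every principal minor unchanged, since $\det(\mathbf{D}_{\alpha}\mathbf{A}_{\alpha\alpha}\mathbf{D}_{\alpha})=\det(\mathbf{A}_{\alpha\alpha})$, so $\mathbf{A}$ is a $\mathbf{P}$-matrix if and only if $\mathbf{D}\mathbf{A}\mathbf{D}$ is; meanwhile the substitution $\mathbf{z}=\mathbf{D}\mathbf{x}$ turns the sign-reversal inequality $x_i[\mathbf{A}\mathbf{x}]_i\le 0$ into $z_i[\mathbf{D}\mathbf{A}\mathbf{D}\,\mathbf{z}]_i\le 0$. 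Choosing $\mathbf{D}$ from the signs of $\mathbf{x}$ lets me assume the offending vector is nonnegative, and restricting to its support lets me assume it is strictly positive on a principal submatrix. Thus both directions reduce to comparing ``$\mathbf{M}$ is a $\mathbf{P}$-matrix'' with ``there is no $\mathbf{z}>\mathbf{0}$ satisfying $\mathbf{M}\mathbf{z}\le\mathbf{0}$''.

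For the easy direction (no sign reversal $\Rightarrow$ $\mathbf{P}$-matrix) I would prove the contrapositive constructively. If $\mathbf{A}$ is not a $\mathbf{P}$-matrix, then some principal submatrix $\mathbf{A}_{\alpha\alpha}$ has $\det\mathbf{A}_{\alpha\alpha}\le 0$. If this determinant is exactly $0$, any nonzero $\mathbf{u}\in\ker\mathbf{A}_{\alpha\alpha}$, extended by zeros outside $\alpha$, satisfies $x_i[\mathbf{A}\mathbf{x}]_i=0$ for every $i$ and is a nonzero sign-reversed vector. If the determinant is negative, then since the nonreal eigenvalues of $\mathbf{A}_{\alpha\alpha}$ occur in conjugate pairs, each contributing a positive factor to the determinant, the product of its real eigenvalues is negative, so there is a real eigenvalue $\lambda<0$; a real eigenvector $\mathbf{u}$ gives $u_i[\mathbf{A}_{\alpha\alpha}\mathbf{u}]_i=\lambda u_i^2\le 0$, and extending by zeros again produces a nonzero sign-reversed vector of $\mathbf{A}$. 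This settles the direction completely.

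The hard direction is $\mathbf{P}$-matrix $\Rightarrow$ no sign reversal, which after the reduction above is exactly the claim that a $\mathbf{P}$-matrix $\mathbf{M}$ admits no $\mathbf{z}>\mathbf{0}$ with $\mathbf{M}\mathbf{z}\le\mathbf{0}$. I would argue by induction on the order $n$, the case $n=1$ being immediate from $m_{11}>0$. For the step, I first dispose of vectors with a zero coordinate by restricting to a principal submatrix, which is again a $\mathbf{P}$-matrix, and invoking the inductive hypothesis; so I may assume all coordinates of $\mathbf{z}$ are nonzero, and after a signature conjugation that $\mathbf{z}>\mathbf{0}$. A principal pivot transform is the natural tool at this point, because it maps the class of $\mathbf{P}$-matrices into itself and, crucially, preserves the sign-reversal condition $z_i[\mathbf{M}\mathbf{z}]_i\le 0$ coordinatewise, since each coordinate's input/output pair is merely swapped; this keeps me inside the same situation while letting me force the strict case $z_i[\mathbf{M}\mathbf{z}]_i<0$ for all $i$.

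The main obstacle is precisely closing this strict case. Unlike $\mathbf{K}$- or $\mathbf{M}$-matrices, a $\mathbf{P}$-matrix places no sign restriction on its off-diagonal entries, so ordinary Gaussian elimination does not preserve the one-sided inequality $\mathbf{M}\mathbf{z}\le\mathbf{0}$: the pivot term $-(m_{in}/m_{nn})[\mathbf{M}\mathbf{z}]_n$ can have either sign, and the reduced vector need not remain sign-reversed. The delicate point is therefore to drive the induction through an operation that respects signs, namely the principal pivot transform together with the positivity of the diagonal of a $\mathbf{P}$-matrix, rather than through naive elimination; alternatively one may invoke a determinant or degree argument, in the spirit of the Gale--Nikaido global univalence theorem, to rule out the strict configuration. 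I expect essentially all of the genuine work of the lemma to be concentrated in this step, the two preliminary reductions and the easy direction being routine.
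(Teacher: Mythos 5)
First, a point of reference: the paper does not prove this lemma at all --- it is quoted, with citation, as Theorem 3.3.4 of Cottle--Pang--Stone (the classical Fiedler--Pt\'ak characterization of $\mathbf{P}$-matrices), so there is no in-paper argument to compare against and your proposal must stand on its own.

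Your easy direction and your two reductions are correct and complete: the signature conjugation $\mathbf{A}\mapsto\mathbf{D}\mathbf{A}\mathbf{D}$, the restriction to the support of the offending vector, the kernel vector when some principal minor vanishes, and the negative real eigenvalue when some principal minor is negative all work exactly as you describe. The genuine gap is that the hard direction ($\mathbf{P}$-matrix $\Rightarrow$ no sign reversal) is never closed. After reducing to a $\mathbf{P}$-matrix $\mathbf{M}$ and a vector $\mathbf{z}$ with every $z_i\neq 0$ and $z_i[\mathbf{M}\mathbf{z}]_i\le 0$, you propose an induction through principal pivot transforms to force the strict case $z_i[\mathbf{M}\mathbf{z}]_i<0$, and then say one ``may invoke a determinant or degree argument'' to rule that case out --- but that argument is never given, and you yourself observe that essentially all of the genuine work of the lemma sits precisely there. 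Deferring the core of the theorem to an unnamed tool is a missing idea, not a routine detail, so as written this is not a proof.

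The step you are missing can be closed in a few lines, and without any induction or pivoting, which are superfluous here. Once every $z_i\neq 0$, set $d_i\triangleq -[\mathbf{M}\mathbf{z}]_i/z_i$; the hypothesis $z_i[\mathbf{M}\mathbf{z}]_i\le 0$ gives $d_i\ge 0$ (strictness is irrelevant), and by construction $(\mathbf{M}+\mathbf{D})\mathbf{z}=\mathbf{0}$ with $\mathbf{D}=\limfunc{diag}(d_{1},\ldots ,d_{n})$ and $\mathbf{z}\neq\mathbf{0}$, so $\det (\mathbf{M}+\mathbf{D})=0$. On the other hand, multilinearity of the determinant in the columns yields
\[
\det (\mathbf{M}+\mathbf{D})=\sum_{\alpha \subseteq \{1,\ldots ,n\}}\Bigl(\prod_{i\in \alpha }d_{i}\Bigr)\det \bigl(\mathbf{M}_{\bar{\alpha}\bar{\alpha}}\bigr),
\]
where $\bar{\alpha}$ denotes the complement of $\alpha$ and $\det (\mathbf{M}_{\emptyset \emptyset })\triangleq 1$. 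Every summand is nonnegative, since the $d_{i}$ are nonnegative and all principal minors of $\mathbf{M}$ are positive, and the $\alpha =\emptyset$ term equals $\det (\mathbf{M})>0$; hence $\det (\mathbf{M}+\mathbf{D})>0$, a contradiction. This is the standard Fiedler--Pt\'ak argument (and the one behind the cited theorem in Cottle--Pang--Stone). With it inserted, your outline becomes a complete proof, and the principal-pivot digression can simply be deleted.
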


\begin{lemma}[{\protect\cite[Lemma $5.3.14$]{CPStone92}}]
\label{Lemma-comparison-matrix}Let $\mathbf{A}\in
\mathbb{R}
^{n\times n}$ be a $\mathbf{K}$-matrix and $\mathbf{B}$ a nonnegative
matrix. Then $\rho (\mathbf{A}^{-1}\mathbf{B})<1$ if and only if $\mathbf{A-B%
}$ is a $\mathbf{K}$-matrix.
\end{lemma}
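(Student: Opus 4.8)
The plan is to reduce everything to the classical M-matrix (i.e.\ $\mathbf{K}$-matrix) machinery through the elementary factorization $\mathbf{A}-\mathbf{B}=\mathbf{A}(\mathbf{I}-\mathbf{A}^{-1}\mathbf{B})$. First I would record three preliminary facts. (i) Since $\mathbf{A}$ is a $\mathbf{Z}$-matrix (off-diagonal entries $\le 0$) and $\mathbf{B}\ge\mathbf{0}$ entrywise, every off-diagonal entry of $\mathbf{A}-\mathbf{B}$ is again $\le 0$, so $\mathbf{A}-\mathbf{B}$ is itself a $\mathbf{Z}$-matrix. (ii) Because $\mathbf{A}$ is a $\mathbf{K}$-matrix, it is a nonsingular M-matrix and hence $\mathbf{A}^{-1}\ge\mathbf{0}$ entrywise (a standard characterization, cf.\ \cite{CPStone92,BPlemmons79}); consequently $\mathbf{N}\triangleq\mathbf{A}^{-1}\mathbf{B}\ge\mathbf{0}$. (iii) I will also use, again from \cite{CPStone92,BPlemmons79}, the equivalence that, for a $\mathbf{Z}$-matrix $\mathbf{M}$, $\mathbf{M}$ is a $\mathbf{K}$-matrix if and only if $\mathbf{M}$ is nonsingular with $\mathbf{M}^{-1}\ge\mathbf{0}$, if and only if there exists $\mathbf{u}>\mathbf{0}$ with $\mathbf{M}\mathbf{u}>\mathbf{0}$. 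With these in hand the two implications become short.

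For the ``if'' direction, assume $\rho(\mathbf{N})<1$. Then the Neumann series $\sum_{k\ge 0}\mathbf{N}^{k}$ converges to $(\mathbf{I}-\mathbf{N})^{-1}$, and since each $\mathbf{N}^{k}\ge\mathbf{0}$ we obtain $(\mathbf{I}-\mathbf{N})^{-1}\ge\mathbf{0}$. As $1$ is not an eigenvalue of $\mathbf{N}$, the matrix $\mathbf{A}-\mathbf{B}=\mathbf{A}(\mathbf{I}-\mathbf{N})$ is nonsingular, and $(\mathbf{A}-\mathbf{B})^{-1}=(\mathbf{I}-\mathbf{N})^{-1}\mathbf{A}^{-1}$ is a product of two entrywise-nonnegative matrices, hence $\ge\mathbf{0}$. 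A $\mathbf{Z}$-matrix with a nonnegative inverse is a $\mathbf{K}$-matrix by (iii), so $\mathbf{A}-\mathbf{B}$ is a $\mathbf{K}$-matrix.

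For the ``only if'' direction, suppose $\mathbf{A}-\mathbf{B}$ is a $\mathbf{K}$-matrix. By (iii) there is $\mathbf{u}>\mathbf{0}$ with $(\mathbf{A}-\mathbf{B})\mathbf{u}>\mathbf{0}$. Applying the nonnegative matrix $\mathbf{A}^{-1}$ and using $\mathbf{A}^{-1}(\mathbf{A}-\mathbf{B})=\mathbf{I}-\mathbf{N}$ gives $\mathbf{u}-\mathbf{N}\mathbf{u}=\mathbf{A}^{-1}\bigl[(\mathbf{A}-\mathbf{B})\mathbf{u}\bigr]$. The key step is that a nonsingular nonnegative matrix sends a strictly positive vector to a strictly positive one: each row of $\mathbf{A}^{-1}$ is nonzero (else $\mathbf{A}^{-1}$ would be singular), so each component of the image is a positive combination of the positive entries of $(\mathbf{A}-\mathbf{B})\mathbf{u}$. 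Hence $\mathbf{u}-\mathbf{N}\mathbf{u}>\mathbf{0}$, i.e.\ $\mathbf{N}\mathbf{u}<\mathbf{u}$ componentwise. The Collatz--Wielandt bound for the nonnegative matrix $\mathbf{N}$ then yields $\rho(\mathbf{N})\le\max_{i}(\mathbf{N}\mathbf{u})_{i}/u_{i}<1$, the strict inequality holding because each ratio is strictly below $1$ and the maximum is taken over finitely many indices.

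The substantive content is concentrated in two places. The first is the preliminary equivalences (ii)--(iii): these are classical among the many equivalent conditions for nonsingular M-matrices, but they must be cited accurately, as they carry most of the weight. The second, and the point I expect to be the only genuinely delicate step, is the strict-positivity argument in the ``only if'' direction, where one must argue that $\mathbf{A}^{-1}$ has no zero row and therefore preserves strict positivity; this is exactly what upgrades the easy bound $\rho(\mathbf{N})\le 1$ to the required strict inequality $\rho(\mathbf{N})<1$. Everything else reduces to the factorization $\mathbf{A}-\mathbf{B}=\mathbf{A}(\mathbf{I}-\mathbf{N})$ together with the monotonicity of left-multiplication by nonnegative matrices.
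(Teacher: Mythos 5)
Your proof is correct, but note that there is nothing in the paper to compare it against: the lemma is quoted verbatim from \cite[Lemma 5.3.14]{CPStone92} and used as a black box in the uniqueness part of the proof in Appendix B, so the paper supplies no argument of its own. Your write-up is a valid self-contained substitute, and it is essentially the classical regular-splitting argument: the factorization $\mathbf{A}-\mathbf{B}=\mathbf{A}(\mathbf{I}-\mathbf{N})$ with $\mathbf{N}\triangleq\mathbf{A}^{-1}\mathbf{B}\geq\mathbf{0}$ exhibits $\mathbf{A}-\mathbf{B}$ as having the regular splitting $(\mathbf{A},\mathbf{B})$, and your two directions are the two halves of the corresponding convergence theorem specialized to $\mathbf{Z}$-matrices. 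The ``if'' direction via the Neumann series and the characterization ``$\mathbf{Z}$-matrix with entrywise-nonnegative inverse is a $\mathbf{K}$-matrix'' is airtight. For the ``only if'' direction, the textbook proofs in \cite{CPStone92,BPlemmons79} typically argue through the identity $\rho(\mathbf{M}^{-1}\mathbf{N})=\rho(\mathbf{C}^{-1}\mathbf{N})/\bigl(1+\rho(\mathbf{C}^{-1}\mathbf{N})\bigr)$ for a regular splitting $\mathbf{C}=\mathbf{M}-\mathbf{N}$, or via a Perron--Frobenius eigenvector; your route through the semipositivity characterization (existence of $\mathbf{u}>\mathbf{0}$ with $(\mathbf{A}-\mathbf{B})\mathbf{u}>\mathbf{0}$) combined with the Collatz--Wielandt bound is more elementary and equally rigorous. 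You also correctly isolated and handled the one genuinely delicate point: upgrading $\rho(\mathbf{N})\leq 1$ to the strict inequality $\rho(\mathbf{N})<1$ requires that $\mathbf{A}^{-1}$, being nonnegative and nonsingular, has no zero row and therefore maps strictly positive vectors to strictly positive vectors, and your justification of that step is complete. The preliminary equivalences (ii)--(iii) are indeed standard characterizations of nonsingular M-matrices, and citing them as you do is the appropriate level of rigor here.
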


We provide now sufficient conditions for the uniqueness of the NE. First, we
derive a necessary condition for two admissible strategy profiles (whose
existence is guaranteed by the first part of the Theorem) to be different
Nash equilibria of the game ${%
\mathscr{G}%
}$ in (\ref{Rate Game}){. Then we obtain a simple sufficient condition that
guarantees the previous condition is not satisfied; hence guaranteeing that }%
there cannot be two different Nash equilibria.

Assume that the game ${%
\mathscr{G}%
}$ admits two \emph{distinct} NE points, denoted by $\mathbf{p}^{(0)},%
\mathbf{p}^{(1)},$ where $\mathbf{p}^{(r)}%
\triangleq%
[\mathbf{p}_{1}^{(r)T},\ldots ,\mathbf{p}_{Q}^{(r)T}]^{T}$ and $\mathbf{p}%
_{q}^{(r)}%
\triangleq%
[p_{q}^{(r)}(1),\ldots ,p_{q}^{(r)}(N)]^{T},$ with $r=0,1,$ $q\in \Omega .$
Then, they must satisfy the following (necessary and sufficient)
Karush-Kuhn-Tucker (KKT) conditions (see, e.g., \cite{Boyd}) for each $q\in
\Omega $:\vspace{-0.2cm}
\begin{equation}
\begin{array}{l}
\nabla _{\mathbf{p}_{q}}R_{q}\left( \mathbf{p}^{\left( r\right) }\right)
+\dsum\limits_{k=1}^{2N+1}\mu _{q,k}^{\left( r\right) }\nabla _{\mathbf{p}%
_{q}}f_{q,k}(\mathbf{p}_{q}^{\left( r\right) })=\mathbf{0},\medskip \\
\mu _{q,k}^{\left( r\right) }f_{q,k}(\mathbf{p}_{q}^{\left( r\right) })=0,\
\mathbf{%
\boldsymbol{\mu}%
}_{q}^{r}\geq \mathbf{0},\ f_{q,k}(\mathbf{p}_{q}^{\left( r\right) })\geq 0,%
\end{array}%
\qquad \text{ }r=0,1,\vspace{-0.2cm}  \label{KKT_NE}
\end{equation}%
where $\mathbf{%
\boldsymbol{\mu}%
}_{q}^{\left( r\right) }%
\triangleq%
[\mu _{q,1}^{\left( r\right) },\ldots ,\mu _{q,2N+1}^{\left( r\right) }]^{T}$
is the nonnegative Lagrange multiplier vector, and
\begin{equation}
\begin{array}{l}
f_{q,k}(\mathbf{p}_{q}^{\left( r\right) })=\left\{
\begin{array}{ll}
p_{q}^{\left( r\right) }(k), & \text{if }1\leq k\leq N, \\
1-\mathbf{1}^{T}\mathbf{p}_{q}^{\left( r\right) }, & \text{if \ }k=N+1, \\
p_{q}^{\max }(k-N-1)-p_{q}^{(r)}(k-N-1),\quad & \text{if }N+1<k\leq 2N+1,%
\end{array}%
\right. \hspace{0.5cm} \\
\left[ \nabla _{\mathbf{p}_{q}}R_{q}\left( \mathbf{p}\right) \right] _{k}=%
\dfrac{\log e}{\Gamma _{q}N}\dfrac{\left\vert H_{qq}(k)\right\vert ^{2}}{%
1+\sum_{r\neq q,r=1}^{Q}\left\vert H_{rq}(k)\right\vert ^{2}p_{r}(k)+\Gamma
_{q}^{-1}\left\vert H_{qq}(k)\right\vert ^{2}p_{q}(k)}\smallskip \\
\hspace{2.25cm}=\dfrac{\log e}{\Gamma _{q}N}\dfrac{|H_{qq}(k)|^{2}}{%
1+\sum_{r=1}^{Q}\Gamma _{q}^{-\delta _{rq}}|H_{rq}(k)|^{2}p_{r}(k)},\medskip
\quad%
\end{array}
\label{KKT_def}
\end{equation}%
with $\delta _{rq}$ defined as $\delta _{rq}=1$ if $r=q$ and $\delta _{rq}=0$
otherwise$.$ Note that the functions $f_{q,k}(\mathbf{p}_{q}^{\left(
r\right) })$ are (linear) concave, thus it must be that, for each $\mathbf{p}%
_{q}^{(1)}$ and $\mathbf{p}_{q}^{(0)},$ \cite{Boyd}%
\begin{equation}
f_{q,k}(\mathbf{p}_{q}^{\left( 1\right) })\leq f_{q,k}(\mathbf{p}%
_{q}^{\left( 0\right) })+\left( \mathbf{p}_{q}^{(1)}-\mathbf{p}%
_{q}^{(0)}\right) ^{T}\nabla _{\mathbf{p}_{q}}\text{ }f_{q,k}(\mathbf{p}%
_{q}^{\left( 0\right) }),\quad \forall k\in \{1,\ldots ,2N+1\},\quad \forall
q\in \Omega .  \label{first_order_condition_concavity}
\end{equation}%
Multiplying the first equation of (\ref{KKT_NE}) by $\left( \mathbf{p}%
_{q}^{(1)}-\mathbf{p}_{q}^{(0)}\right) ^{T}$ for $r=0,$ by $\left( \mathbf{p}%
_{q}^{(0)}-\mathbf{p}_{q}^{(1)}\right) ^{T}$ for $r=1,$ and summing over $r$
we obtain, for each $q\in \Omega :$\vspace{-0.2cm}
\begin{equation}
\begin{array}{l}
\left( \mathbf{p}_{q}^{(1)}-\mathbf{p}_{q}^{(0)}\right) ^{T}\nabla _{\mathbf{%
p}_{q}}R_{q}\left( \mathbf{p}^{\left( 0\right) }\right) +\left( \mathbf{p}%
_{q}^{(0)}-\mathbf{p}_{q}^{(1)}\right) ^{T}\nabla _{\mathbf{p}%
_{q}}R_{q}\left( \mathbf{p}^{\left( 1\right) }\right) + \\
\dsum\limits_{k=1}^{2N+1}\left[ \mu _{q,k}^{\left( 0\right) }\left( \mathbf{p%
}_{q}^{(1)}-\mathbf{p}_{q}^{(0)}\right) ^{T}\nabla _{\mathbf{p}_{q}}f_{q,k}(%
\mathbf{p}_{q}^{\left( 0\right) })+\mu _{q,k}^{1}\left( \mathbf{p}_{q}^{(0)}-%
\mathbf{p}_{q}^{(1)}\right) ^{T}\nabla _{\mathbf{p}_{q}}f_{q,k}(\mathbf{p}%
_{q}^{\left( 1\right) })\right] =0.%
\end{array}%
\vspace{-0.1cm}  \label{KKT_NE2}
\end{equation}%
The second term in (\ref{KKT_NE2}) is clearly nonnegative since it is lower
bounded by\vspace{-0.2cm}
\begin{align}
& \sum\limits_{k=1}^{2N+1}\left[ \mu _{q,k}^{\left( 0\right) }\left( f_{q,k}(%
\mathbf{p}_{q}^{\left( 1\right) })-f_{q,k}(\mathbf{p}_{q}^{\left( 0\right)
})\right) +\mu _{q,k}^{(1)}\left( f_{q,k}(\mathbf{p}_{q}^{\left( 0\right)
})-f_{q,k}(\mathbf{p}_{q}^{\left( 1\right) })\right) \right]  \notag \\
& =\sum\limits_{k=1}^{2N+1}\left[ \mu _{q,k}^{\left( 0\right) }f_{q,k}(%
\mathbf{p}_{q}^{\left( 1\right) })+\mu _{q,k}^{\left( 1\right) }f_{q,k}(%
\mathbf{p}_{q}^{\left( 0\right) })\right] \geq 0,  \label{positive_gamma}
\end{align}%
where we have used the (linearity) concavity of $f_{q,k}$ and the
constraints $\mu _{q,k}^{\left( r\right) }f_{q,k}(\mathbf{p}_{q}^{\left(
r\right) })=0.$

It is now evident that if, for some $q,$ the first term in (\ref{KKT_NE2})
is strictly positive, then $\mathbf{p}^{(0)}$ and $\mathbf{p}^{(1)}$ cannot
be both NE points, as (\ref{KKT_NE2}) would not hold. We will next obtain a
simple sufficient condition for this term to be indeed positive for
different points $\mathbf{p}^{(0)}$ and $\mathbf{p}^{(1)}$; in other words,
to guarantee that there cannot be two different NE solutions.

The positivity condition for the first term in (\ref{KKT_NE2}) is, for some $%
q\in \Omega ,$
\begin{align}
& \left( \mathbf{p}_{q}^{(1)}-\mathbf{p}_{q}^{(0)}\right) ^{T}\left( \nabla
_{\mathbf{p}_{q}}R_{q}\left( \mathbf{p}^{(0)}\right) -\nabla _{\mathbf{p}%
_{q}}R_{q}\left( \mathbf{p}^{(1)}\right) \right)  \notag \\
& =\sum_{k\in \mathcal{D}_{q}}\left( \alpha _{q}\left( k,\mathbf{p}^{(0)},%
\mathbf{p}^{(1)}\right) |H_{qq}(k)|^{2}\left( p_{q}^{(1)}\left( k\right)
-p_{q}^{(0)}\left( k\right) \right) \sum_{r=1}^{Q}\Gamma _{q}^{-\delta
_{rq}}|H_{rq}(k)|^{2}\left( p_{r}^{(1)}(k)-p_{r}^{(0)}(k)\right) \right) >0,
\label{DSC_1}
\end{align}%
where $\alpha _{q}\left( k,\mathbf{p}^{(0)},\mathbf{p}^{(1)}\right)
\triangleq%
\dfrac{\log e}{\Gamma _{q}N}\left( 1+\sum_{r}\Gamma _{q}^{-\delta
_{rq}}|H_{rq}(k)|^{2}p_{r}^{(0)}(k)\right) ^{-1}\left( 1+\sum_{r}\Gamma
_{q}^{-\delta _{rq}}|H_{rq}(k)|^{2}p_{r}^{(1)}(k)\right) ^{-1},$ with $%
\alpha _{q}\left( k,\mathbf{p}^{(0)},\right. $ $\left. \mathbf{p}%
^{(1)}\right) >0$, and $\mathbf{p}^{(0)}\neq \mathbf{p}^{(1)}.$ In (\ref%
{DSC_1}) we have used the fact that, outside $\mathcal{D}_{q},$ we have $%
p_{q}^{(0)}(k)=$ $p_{q}^{(1)}(k)=0$, where $\mathcal{D}_{q}$ is defined in (%
\ref{D_q}).

Define $\mathcal{K}_{q}$ as the set of carriers in $\mathcal{D}_{q}$ such
that the two solutions coincide, i.e., for user $q$:
\begin{equation}
\mathcal{K}_{q}%
\triangleq%
\left\{ k\in \mathcal{D}_{q}\text{ }|\text{ }p_{q}^{(1)}(k)=p_{q}^{(0)}(k)%
\right\} .  \label{K_q}
\end{equation}%
Observe that it cannot be that $\mathcal{K}_{q}=\mathcal{D}_{q}$ for all $q$%
, since $\mathbf{p}^{(0)}$ and $\mathbf{p}^{(1)}$ are different solutions by
assumption. From (\ref{DSC_1}) it follows that $\mathbf{p}^{(0)}$ and $%
\mathbf{p}^{(1)}$ cannot be both NE points if the following sufficient
condition is satisfied:
\begin{equation}
|H_{qq}(k)|^{2}\left( p_{q}^{(1)}\left( k\right) -p_{q}^{(0)}\left( k\right)
\right) \sum_{r=1}^{Q}\Gamma _{q}^{-\delta _{rq}}|H_{rq}(k)|^{2}\left(
p_{r}^{(1)}(k)-p_{r}^{(0)}(k)\right) >0,\quad \forall k\in \mathcal{D}%
_{q}\diagdown \mathcal{K}_{q}\text{ and some }q\in \Omega .
\label{intermediate-condition_0}
\end{equation}%
Since in (\ref{intermediate-condition_0}) $k\in \mathcal{D}_{q}\diagdown
\mathcal{K}_{q},$ it follows that $p_{q}^{(1)}\left( k\right)
-p_{q}^{(0)}\left( k\right) \neq 0.$ The sufficient condition is then, $%
\forall k\in \mathcal{D}_{q}\diagdown \mathcal{K}_{q}$ and some $q\in \Omega
,$%
\begin{equation}
\dfrac{1}{\Gamma _{q}}|H_{qq}(k)|^{2}\left\vert
p_{q}^{(1)}(k)-p_{q}^{(0)}(k)\right\vert +\limfunc{sign}\left(
p_{q}^{(1)}(k)-p_{q}^{(0)}(k)\right) \sum_{r\neq
q,r=1}^{Q}|H_{rq}(k)|^{2}\left( p_{r}^{(1)}(k)-p_{r}^{(0)}(k)\right) >0,
\label{intermediate-condition}
\end{equation}%
where $\limfunc{sign}\left( \cdot \right) $ is the sign function, defined as
$\limfunc{sign}\left( x\right) =1$ if $x>0,$ $\limfunc{sign}\left( x\right)
=0$ if $x=0,$ and $\limfunc{sign}\left( x\right) =-1$ if $x<0.$ \ Using the
fact that $p_{r}^{(1)}\left( k\right) -p_{r}^{(0)}\left( k\right) =0$
whenever $k\notin \mathcal{D}_{r},$ condition (\ref{intermediate-condition})
can be equivalently rewritten as, $\forall k\in \mathcal{D}_{q}\diagdown
\mathcal{K}_{q}$ and some $q\in \Omega ,$%
\begin{equation}
\left\vert p_{q}^{(1)}(k)-p_{q}^{(0)}(k)\right\vert +\limfunc{sign}\left(
p_{q}^{(1)}(k)-p_{q}^{(0)}(k)\right) \sum_{r\neq q,r=1}^{Q}G_{rq}(k)\left(
p_{r}^{(1)}(k)-p_{r}^{(0)}(k)\right) >0,  \label{intermediate-condition_2}
\end{equation}%
with%
\begin{equation}
G_{rq}(k)%
\triangleq%
\left\{
\begin{array}{ll}
\Gamma _{q}\dfrac{|\bar{H}_{rq}(k)|^{2}}{|\bar{H}_{qq}(k)|^{2}}\dfrac{%
d_{qq}^{\alpha }}{d_{rq}^{\alpha }}\dfrac{P_{r}}{P_{q}}, & \text{if }k\in
\mathcal{D}_{r}, \\
0, & \text{otherwise.}%
\end{array}%
\right.  \label{G_rq}
\end{equation}%
A more stringent sufficient condition than (\ref{intermediate-condition_2})
is obtained by considering the worst possible case, i.e., when the second
term in (\ref{intermediate-condition_2}) is as negative as possible:%
\begin{equation}
\Delta _{q}(k)>\sum_{r\neq q,r=1}^{Q}G_{rq}(k)\Delta _{r}(k),\qquad \forall
k\in \mathcal{D}_{q}\diagdown \mathcal{K}_{q}\text{ and some }q\in \Omega ,
\label{SF_on_subcarrier}
\end{equation}%
where $\Delta _{q}(k)$ is defined as\vspace{-0.2cm}
\begin{equation}
\Delta _{q}(k)%
\triangleq%
\left\vert p_{q}^{(1)}(k)-p_{q}^{(0)}(k)\right\vert .  \label{abs_p_0_p1}
\end{equation}%
Note that, since $\mathbf{p}^{(0)}\neq \mathbf{p}^{(1)}$ by assumption, it
must be
\begin{equation}
\Delta _{q}(k)\neq 0,\quad \forall k\in \mathcal{D}_{q}\diagdown \mathcal{K}%
_{q}\text{ \ and some }q\in \Omega .  \label{non_zero_delta}
\end{equation}

Thus far, we have that condition (\ref{SF_on_subcarrier}) can not be
satisfied by the two different NE points $\mathbf{p}^{(0)}$ and $\mathbf{p}%
^{(1)}.$ This means that the following condition needs to be satisfied by
such $\mathbf{p}^{(0)}$ and $\mathbf{p}^{(1)}$:%
\begin{equation}
\Delta _{q}(k_{q})\leq \sum_{r\neq q,r=1}^{Q}G_{rq}(k_{q})\Delta
_{r}(k_{q}),\quad \text{for some }k_{q}\in \mathcal{D}_{q}\diagdown \mathcal{%
K}_{q}\text{ and }\forall q\in \Omega ,  \label{NC_cond}
\end{equation}%
where for the sake of notation we denoted by $k_{q}$ any subcarrier index in
the set $\mathcal{D}_{q}\diagdown \mathcal{K}_{q}.$ Introducing
\begin{equation}
\widetilde{G}_{rq}(k)%
\triangleq%
\left\{
\begin{array}{ll}
G_{rq}(k), & \text{if }k\in \mathcal{D}_{q}, \\
0, & \text{otherwise,}%
\end{array}%
\right.  \label{def:G_tilede}
\end{equation}%
%
%
%
%
%
and using the fact that each $\Delta _{q}(k)=0$ whenever $k\notin \mathcal{D}%
_{q}$ and $k\in \mathcal{K}_{q},$ condition (\ref{NC_cond}) becomes
\begin{equation}
\Delta _{q}(k_{q})\leq \sum_{r\neq q,r=1}^{Q}\widetilde{G}_{rq}(k_{q})\Delta
_{r}(k_{q}),\quad \text{for some }k_{q}\in \mathcal{D}_{q}\diagdown \mathcal{%
K}_{q}\text{ and }\forall q\in \Omega .  \label{NC_cond_2}
\end{equation}%
We rewrite now condition (\ref{NC_cond_2}) in a vector form. To this end, \
we introduce the $NQ$-length vector $%
\boldsymbol{\Delta}%
\triangleq%
[%
\boldsymbol{\Delta}%
^{T}(1),\ldots ,%
\boldsymbol{\Delta}%
^{T}(N)]^{T},$ with $%
\boldsymbol{\Delta}%
(k)%
\triangleq%
[\Delta _{1}(k),\ldots ,\Delta _{Q}(k)]^{T}$and the matrices $\overline{%
\mathbf{H}}\in
\mathbb{R}
_{+}^{NQ\times NQ}$ and $\overline{\mathbf{H}}(k)\in
\mathbb{R}
_{+}^{Q\times Q}$ defined as%
\begin{equation}
\overline{\mathbf{H}}%
\triangleq%
\limfunc{diag}\left( \left\{ \overline{\mathbf{H}}(k)\right\}
_{k=1}^{N}\right) ,\quad \text{and}\quad \left[ \overline{\mathbf{H}}(k)%
\right] _{qr}%
\triangleq%
\left\{
\begin{array}{ll}
\widetilde{G}_{rq}(k), & \text{if }k=k_{q}\text{ and }r\neq q, \\
0, & \text{otherwise.}%
\end{array}%
\right.  \label{def:H_matrix_appendix}
\end{equation}%
Using (\ref{def:H_matrix_appendix}), condition (\ref{NC_cond_2}) can be
equivalently rewritten as%
\begin{equation}
\left( \mathbf{I}-\overline{\mathbf{H}}\right)
\boldsymbol{\Delta}%
\leq \mathbf{0}.  \label{NC_cond_vect}
\end{equation}

The proof will be completed by showing that condition (\ref{SF}) is enough
to guarantee that (\ref{NC_cond_vect}) cannot be satisfied by two different
solutions (implying then that (\ref{SF_on_subcarrier}) is satisfied and that
the two different solutions are not NE points). Since $%
\boldsymbol{\Delta}%
\geq \mathbf{0,}$ inequality (\ref{NC_cond_vect}) implies%
\begin{equation}
\Delta _{i}\left[ \left( \mathbf{I}-\overline{\mathbf{H}}\right)
\boldsymbol{\Delta}%
\right] _{i}\leq 0,\quad \text{ }\forall i=1,\ldots ,NQ,
\label{NC_cond_vect_2}
\end{equation}%
where $\Delta _{i}$ denotes the $i$-th component of $%
\boldsymbol{\Delta}%
.$ It follows from Lemma \ref{Lemma_P_matrix} that, if%
\begin{equation}
\mathbf{I}-\overline{\mathbf{H}}\text{ is a }\mathbf{P}\text{-matrix},
\label{P-condition}
\end{equation}%
inequality (\ref{NC_cond_vect_2}) provides 
$\Delta _{q}(k)=0,$ for all $q\in \Omega $ and $k\in \{1,\ldots ,N\}$; which
contradicts the initial assumption that $\mathbf{p}^{(0)}$ and $\mathbf{p}%
^{(1)}$ are two different points (see (\ref{non_zero_delta})). Therefore,
condition (\ref{P-condition}) is sufficient to guarantee the uniqueness of
the NE.

We complete the proof showing that condition (\ref{SF}) implies (\ref%
{P-condition}). We first introduce the matrices $\mathbf{H}\in
\mathbb{R}
_{+}^{NQ\times NQ}$ and $\mathbf{H}(k)\in
\mathbb{R}
_{+}^{Q\times Q}$ defined as%
\begin{equation}
\mathbf{H}%
\triangleq%
\limfunc{diag}\left( \left\{ \mathbf{H}(k)\right\} _{k=1}^{N}\right) ,\quad
\text{and}\quad \left[ \mathbf{H}(k)\right] _{qr}%
\triangleq%
\left\{
\begin{array}{ll}
\widetilde{G}_{rq}(k), & \text{if }r\neq q, \\
0, & \text{otherwise.}%
\end{array}%
\right.  \label{def:H_matrix_appendix_nobar}
\end{equation}

Observe that $\mathbf{H}(k)$ in (\ref{def:H_matrix_appendix_nobar})
coincides with the matrix defined in (\ref{def:H_matrix}). Since $\mathbf{I}-%
\overline{\mathbf{H}}$ and $\mathbf{I}-\mathbf{H}$ are $\mathbf{Z}$-matrices
(cf. Definition \ref{Def:K matrix}) and $\mathbf{I}-\overline{\mathbf{H}}%
\geq \mathbf{I}-\mathbf{H,}$ where \textquotedblleft $\geq $%
\textquotedblright\ is intended component-wise, a sufficient condition for (%
\ref{P-condition}) is the following \cite{CPStone92}
\begin{equation}
\mathbf{I}-\mathbf{H}\text{ is a }\mathbf{P}\text{-matrix}.
\label{P-condition_2}
\end{equation}

\ Given (\ref{P-condition_2}), $\mathbf{I}-\mathbf{H}$ is a $\mathbf{K}$%
-matrix (cf. Definition \ref{Def:K matrix}). It follows from Lemma \ref%
{Lemma-comparison-matrix} (using the correspondences: $\mathbf{A=I}$ and $%
\mathbf{B=H}$) that condition (\ref{P-condition_2}) is equivalent to $\rho
\left( \mathbf{H}\right) <1;$ which, exploring the block-diagonal structure
of $\mathbf{H}$ in (\ref{def:H_matrix_appendix_nobar}) leads to condition
(C1). This completes the proof.

Condition (\ref{SF_H_max}) in Corollary \ref{Corollary:SF_Uniqueness_H_max}
is obtained using the following result \cite[Corollary $8.1.19$]{Horn85}%
\begin{equation*}
\mathbf{0}\leq \mathbf{H}(k)\leq \mathbf{H}^{\max }\quad \Rightarrow \quad
\rho \left( \mathbf{H}(k)\right) \leq \rho \left( \mathbf{H}^{\max }\right)
\mathbf{.}
\end{equation*}

Conditions (\ref{SF_H_a})-(\ref{SF_H_b}) in Corollary \ref%
{Corollary:SF_Uniqueness_DD} can be obtained as follows. Using \cite[Theorem
$5.6.9$]{Horn85}%
\begin{equation}
\rho \left( \mathbf{H}(k)\right) =\rho \left( \mathbf{H}^{T}(k)\right) \leq
\left\Vert \mathbf{H}(k)\right\Vert ,
\end{equation}%
where $\left\Vert \mathbf{\cdot }\right\Vert $ is any matrix norm \cite%
{Horn85}, a sufficient condition for (\ref{SF}) is $\left\Vert \mathbf{H}%
(k)\right\Vert _{\infty }^{\mathbf{w}}<1,$ with $\left\Vert \mathbf{\cdot }%
\right\Vert _{\infty }^{\mathbf{w}}$ denoting the weighted block maximum
norm, defined as \cite{Horn85}
\begin{equation}
\left\Vert \mathbf{H}(k)\right\Vert _{\infty }^{\mathbf{w}%
}\triangleq\max_{q\in \Omega }\frac{1}{w_{q}}\sum_{r\neq q}w_{r}\left[
\mathbf{H}(k)\right] _{rq},
\end{equation}%
and $\mathbf{w}\triangleq\lbrack w_{1},\ldots ,w_{Q}]^{T}$ is any positive
vector.%
\hspace{\fill}\rule{1.5ex}{1.5ex}%
\vspace{-0.4cm}

\section{Proof of Proposition \protect\ref{Prop_Subcarrier-allocation} \label%
{Proof_Prop_Subcarrier-allocation}}

We first show that, in the case of high interference, an orthogonal NE
always exists. Then, we prove that if an orthogonal NE exists and condition (%
\ref{medium_interference}) is satisfied, at the equilibrium, the available
subcarriers must be distributed among the users according to (\ref%
{Subcarrier-allocation}).

To prove that an orthogonal NE always exists in the high interference
environment, it is sufficient to show that, at such a point, all the users
do not get any rate increase in sharing some subcarriers. The power
distribution of each user $q$ at any orthogonal NE must satisfy the
single-user waterfilling solution over the subcarriers that are in $\mathcal{%
I}_{q}$, i.e.,
\begin{equation}
\begin{array}{c}
p_{q}^{\star }(k)=\left\{
\begin{array}{l}
\mu _{q}-\dfrac{\Gamma _{q}}{%
\mathsf{snr}%
_{q}\left\vert \bar{H}_{qq}(k)\right\vert ^{2}},\qquad \ k\in \mathcal{I}%
_{q}, \\
0,\hspace{3cm}\hspace{0.5cm}\ \text{otherwise.}%
\end{array}%
\right. \quad q\in \Omega .%
\end{array}
\label{FDMA_NE}
\end{equation}

It is straightforward to check that a power distribution as in (\ref{FDMA_NE}%
) always exists, provided that $Q\leq N.$ For example, in the simple case of
$Q=N,$ there are $Q!$ different partitions $\mathcal{I}_{1},$ $\mathcal{I}%
_{2},\ldots ,\mathcal{I}_{Q}$ of the set $\{1,\ldots ,N\}$, corresponding to
all the permutations where every user uses only one carrier; which
guarantees (\ref{FDMA_NE}) to be satisfied.

Observe that a strategy profile as in (\ref{FDMA_NE}) is not necessarily a
NE. However, for any given $\{%
\mathsf{snr}%
_{q}\}_{q\in \Omega },$ there exists a set $\{%
\mathsf{inr}%
_{rq}^{\star }\}_{r\neq q\in \Omega },$ with each $%
\mathsf{inr}%
_{rq}^{\star }>>1,$ such that, for all $%
\mathsf{inr}%
_{rq}\geq
\mathsf{inr}%
_{rq}^{\star },$
\begin{equation*}
\dfrac{\Gamma _{q}}{%
\mathsf{snr}%
_{q}\left\vert \bar{H}_{qq}(k_{q})\right\vert ^{2}}+P_{q}<\Gamma _{q}\frac{1+%
\mathsf{inr}%
_{rq}\left\vert \bar{H}_{rq}(k_{r})\right\vert ^{2}p_{r}(k_{r})}{%
\mathsf{snr}%
_{q}\left\vert \bar{H}_{qq}(k_{r})\right\vert ^{2}},\quad \forall k_{r}\in
\mathcal{I}_{r},\text{ }k_{q}\in \mathcal{I}_{q},\text{ }\forall r\neq q\in
\Omega ,
\end{equation*}%
which is sufficient for $q$-th user to allocate no power over any subcarrier
in $\mathcal{I}_{r}$. Thus, every user $q$, given the power distribution of
the others, does not have any incentive to change its own power allocation.
Hence, the strategy profile given by (\ref{FDMA_NE}) is a NE.

We assume now that an orthogonal NE exists and that condition (\ref%
{medium_interference}) is satisfied. At such a NE the KKT conditions must
hold for all users. Thus, for each $q$ and $k,$ it must be
\begin{equation}
\begin{array}{l}
\dfrac{\log e}{\Gamma _{q}N}\dfrac{|H_{qq}(k)|^{2}}{1+\sum_{t=1}^{Q}\Gamma
_{q}^{-\delta _{qt}}p_{t}(k)|H_{tq}(k)|^{2}}\leq \mu _{q},\smallskip \\
p_{q}(k)\geq 0,\ (1/N)\dsum_{k=1}^{N}p_{q}(k)=1,\ \mu _{q}>0,%
\end{array}%
\qquad \forall q\in \Omega ,\forall k\in \{1,\ldots ,N\},  \label{KKTConds}
\end{equation}%
where we used the fact that at the optimum the power constraint must be
satisfied with equality and $H_{rq}(k)=%
\mathsf{inr}%
_{rq}^{1/2}\bar{H}_{rq}(k),$ $\forall r\neq q$ and $H_{qq}(k)=%
\mathsf{snr}%
_{q}^{1/2}\bar{H}_{qq}(k),$ $\forall q\in \Omega .$ Note that in the first
expression equality is met if $p_{q}(k)>0$. Now, we characterize the set $%
\mathcal{I}_{q}.$ Assume that $k_{q}\in \mathcal{I}_{q}$. Using (\ref%
{KKTConds}) we can write%
\begin{equation}
\dfrac{|H_{qq}(k_{q})|^{2}}{1+\Gamma _{q}^{-1}p_{q}(k_{q})|H_{qq}(k_{q})|^{2}%
}=\mu _{q},\qquad \dfrac{|H_{rr}(k_{q})|^{2}}{%
1+p_{q}(k_{q})|H_{qr}(k_{q})|^{2}}\leq \mu _{r},\qquad \mu _{q}>0,\smallskip
\ \mu _{r}>0,\ r\neq q.  \label{kq}
\end{equation}%
It follows that 
\begin{equation}
\frac{|H_{rr}(k_{q})|^{2}}{|H_{qq}(k_{q})|^{2}}\frac{1+\Gamma
_{q}^{-1}p_{q}(k_{q})|H_{qq}(k_{q})|^{2}}{1+p_{q}(k_{q})|H_{qr}(k_{q})|^{2}}%
\leq \frac{\mu _{r}}{\mu _{q}}\leq \frac{|H_{rr}(k_{r})|^{2}}{%
|H_{qq}(k_{r})|^{2}}\frac{1+p_{r}(k_{q})|H_{rq}(k_{r})|^{2}}{1+\Gamma
_{r}^{-1}p_{r}(k_{r})|H_{rr}(k_{r})|^{2}}.  \label{INEQ_1}
\end{equation}

Using (\ref{medium_interference}) we have%
\begin{equation}
\dfrac{|H_{rr}(k_{q})|^{2}}{|H_{qq}(k_{q})|^{2}}\dfrac{1+\Gamma
_{q}^{-1}p_{q}(k_{q})|H_{qq}(k_{q})|^{2}}{1+p_{q}(k_{q})|H_{qr}(k_{q})|^{2}}%
\geq \dfrac{|H_{rr}(k_{q})|^{2}}{|H_{qq}(k_{q})|^{2}},\text{ and }\dfrac{%
|H_{rr}(k_{r})|^{2}}{|H_{qq}(k_{r})|^{2}}\dfrac{%
1+p_{r}(k_{r})|H_{rq}(k_{r})|^{2}}{1+\Gamma
_{r}^{-1}p_{r}(k_{r})|H_{rr}(k_{r})|^{2}}\leq \dfrac{|{H}_{rr}(k_{r})|^{2}} {%
|H_{qq}(k_{r})|^{2}}.  \label{INEQ_2}
\end{equation}%
Introducing (\ref{INEQ_2}) in (\ref{INEQ_1}) we obtain the desired relation.%
\hspace{\fill}\rule{1.5ex}{1.5ex}%
\vspace{-0.4cm}

\section{\label{Proof ConvexityRR}Proof of Proposition \protect\ref%
{ConvexRateRegion}}

Under the conditions $%
\mathsf{inr}%
_{rq}<<1$ and $%
\mathsf{snr}%
_{q}>>1,$ $\forall q,r\neq q\in \Omega ,$ each rate $R_{q}(\mathbf{p})$ can
be approximated with a negligible error by\footnote{%
There always exist a set of $%
\mathsf{inr}%
_{rq}$ and $%
\mathsf{snr}%
_{q}$ such that all the links transmit over the whole bandwidth.}%
\begin{equation}
R_{q}\left( \mathbf{p}\right) \approx \dfrac{1}{N}\dsum_{k=1}^{N}\log \left(
\dfrac{\Gamma _{q}^{-1}%
\mathsf{snr}%
_{q}\left\vert \bar{H}_{qq}(k)\right\vert ^{2}p_{q}(k)}{1+\sum_{\,r\neq
q,r=1}^{Q}%
\mathsf{inr}%
_{rq}\left\vert \bar{H}_{rq}(k)\right\vert ^{2}p_{r}(k)}\right) =\dfrac{1}{N}%
\dsum_{k=1}^{N}\log \left( \dfrac{G_{qq}(k)p_{q}(k)}{1+\sum_{\,r\neq
q,r=1}^{Q}G_{rq}(k)p_{r}(k)}\right)  \label{R_low_int}
\end{equation}%
where $G_{qq}(k)%
\triangleq%
\Gamma _{q}^{-1}%
\mathsf{snr}%
_{q}\left\vert \bar{H}_{qq}(k)\right\vert ^{2}$, $G_{rq}(k)%
\triangleq%
\mathsf{inr}%
_{rq}\left\vert \bar{H}_{rq}(k)\right\vert ^{2}$ and $\mathbf{p}\in \mathcal{%
P}%
\triangleq%
\left\{ \mathbf{x}\in \mathcal{%
\mathbb{R}
}_{++}^{QN}:\mathbf{1}^{T}\mathbf{x}_{q}\leq 1,\ x_{q}(k)\leq \right. $ $%
\left. p_{q}^{\max }(k),\text{ }\forall k\in \{1,\ldots ,N\},\text{ }\forall
q\in \Omega \right\} .$ To prove the convexity of the rate region defined in
(\ref{RR}), we need to show that\vspace{-0.2cm}
\begin{equation}
\exists \text{ }\mathbf{p}\in \mathcal{P}\text{ }|\text{ }\alpha \bar{%
\mathbf{r}}+\beta \tilde{\mathbf{r}}\leq \mathbf{R}(\mathbf{p}),
\label{convex_comb}
\end{equation}%
whenever\vspace{-0.2cm}
\begin{equation}
\begin{array}{c}
\bar{\mathbf{r}}\leq \mathbf{R}(\bar{\mathbf{p}}),\text{ for some }\bar{%
\mathbf{p}}\in \mathcal{P}\text{;} \\
\tilde{\mathbf{r}}\leq \mathbf{R}(\tilde{\mathbf{p}}),\text{ for some }%
\tilde{\mathbf{p}}\in \mathcal{P}\text{;}%
\end{array}
\label{rates_in_RR}
\end{equation}%
and $\alpha \mathbf{,}$ $\beta \geq 0$, $\alpha +\beta =1$. Starting from (%
\ref{convex_comb}) and using (\ref{rates_in_RR}) we have, for each $q,$%
\begin{equation}
\alpha \bar{r}_{q}+\beta \tilde{r}_{q}\leq \dfrac{1}{N}\dsum_{k=1}^{N}\left(
\alpha \log \left( \dfrac{G_{qq}(k)e^{\bar{x}_{q}(k)}}{1+\sum_{\,r\neq
q}G_{rq}(k)e^{\bar{x}_{r}(k)}}\right) +\beta \log \left( \dfrac{G_{qq}(k)e^{%
\tilde{x}_{q}(k)}}{1+\sum_{\,r\neq q}G_{rq}(k)e^{\tilde{x}_{r}(k)}}\right)
\right)  \label{convex_comb_rates}
\end{equation}%
where $\bar{x}_{q}(k)=\log _{e}\bar{p}_{q}(k)$ and $\tilde{x}_{q}(k)=\log
_{e}\tilde{p}_{q}(k),$ $\forall q\in \Omega .$ Since the function $\log
\left( \left( 1+\sum_{\,r\neq q}G_{rq}e^{x_{r}}\right)
G_{qq}^{-1}e^{-x_{q}}\right) ,$ with positive $\left\{ G_{rq}\right\} ,$ is
convex in $\mathbf{x}\in \mathcal{%
\mathbb{R}
}^{Q}$ ( the function $\log \left( 1+\sum_{\,q=1}^{n}e^{x_{q}}\right) $ is
convex in $\mathcal{%
\mathbb{R}
}^{n}$ \cite{Boyd})$,$ expression (\ref{convex_comb_rates}) can be
upper-bounded by 
\begin{equation}
\alpha \bar{r}_{q}+\beta \tilde{r}_{q}\leq \dfrac{1}{N}\dsum_{k=1}^{N}\log
\left( \dfrac{G_{qq}(k)\bar{p}_{q}^{\alpha }(k)\tilde{p}_{q}^{\beta }(k)}{%
1+\sum_{\,r\neq q}G_{rq}(k)\bar{p}_{r}^{\alpha }(k)\tilde{p}_{r}^{\beta }(k)}%
\right)
\triangleq%
R_{q}(\mathbf{p}),
\end{equation}%
where $\mathbf{p}$ is given by $p_{q}(k)=\bar{p}_{q}^{\alpha }(k)\tilde{p}%
_{q}^{\beta }(k).$ Observe that $\mathbf{p}$ is feasible as $\mathbf{p}\leq
\alpha \mathbf{\bar{p}}+\beta \mathbf{\tilde{p}}\in \mathcal{P}$, where we
have used the geometric-arithmetic inequality \cite{Rockafellar} and the
fact that $\mathcal{P}$ is convex.%
\hspace{\fill}\rule{1.5ex}{1.5ex}%
\vspace{-0.3cm}

\section{Proof of Proposition \protect\ref{Prop_NE-PO}\label{Proof
Prop_NE-PO}}

After showing that the (globally) optimal solution to the scalarized MOP (%
\ref{Scalarized_problem}) corresponds, for any given $%
\boldsymbol{\lambda}%
>\mathbf{0},$ to one of the Pareto optimal points of the MOP (\ref{MOP}), we
prove that such a solution is a NE of the game $\widetilde{{%
\mathscr{G}%
}}\left(
\boldsymbol{\lambda}%
\right) $ defined in (\ref{G_tilde}). We then show that, under conditions of
Proposition \ref{ConvexRateRegion}, the game $\widetilde{{%
\mathscr{G}%
}}\left(
\boldsymbol{\lambda}%
\right) $ admits a unique NE, for any given $%
\boldsymbol{\lambda}%
>\mathbf{0}.$

Given the MOP (\ref{MOP}), consider the scalarized MOP defined in (\ref%
{Scalarized_problem}). For any given $%
\boldsymbol{\lambda}%
>\mathbf{0},$ the (globally) optimal solution to (\ref{Scalarized_problem})
is a point on the trade-off surface of MOP\ (\ref{MOP}). This follows
directly from the definition of Pareto optimality.
Using this result, we now derive the relationship between the scalarized MOP
(\ref{Scalarized_problem}) (and thus the MOP (\ref{MOP})) and the game $%
\widetilde{{%
\mathscr{G}%
}}\left(
\boldsymbol{\lambda}%
\right) $ defined in (\ref{G_tilde}). Let $\mathbf{p}^{\star }$ be the
solution to (\ref{Scalarized_problem}) for a fixed $%
\boldsymbol{\lambda}%
>\mathbf{0}$. Then
\begin{equation}
\sum\limits_{q=1}^{Q}\lambda _{q}R_{q}(\mathbf{p}^{\star })\geqslant
\sum\limits_{q=1}^{Q}\lambda _{q}R_{q}(\mathbf{p}),\quad \forall \mathbf{p}%
\neq \mathbf{p}^{\star },\text{ }\mathbf{p}_{q},\text{ }\mathbf{p}%
_{q}^{\star }\in
{\mathscr{P}}%
_{q},\forall q\in \Omega .  \label{weighted_sum}
\end{equation}%
Hence, setting $\mathbf{p}=(\mathbf{p}_{q},\mathbf{p}_{-q}^{\star }),$ from
the above inequality it follows
\begin{equation}
{R}_{q}(\mathbf{p}_{q}^{\star },\mathbf{p}_{-q}^{\star })+\frac{1}{\lambda
_{q}}\sum\limits_{r\neq q}\lambda _{r}R_{r}(\mathbf{p}_{q}^{\star },\mathbf{p%
}_{-q}^{\star })\geq {R}_{q}(\mathbf{p}_{q},\mathbf{p}_{-q}^{\star })+\frac{1%
}{\lambda _{q}}\sum\limits_{r\neq q}\lambda _{r}R_{r}(\mathbf{p}_{q},\mathbf{%
p}_{-q}^{\star }),\quad \forall q\in \Omega ,  \label{Proposition1}
\end{equation}%
or, equivalently,%
\begin{equation}
\widetilde{{\Phi }}_{q}(\mathbf{p}_{q}^{\star },\mathbf{p}_{-q}^{\star
})\geq \widetilde{{\Phi }}_{q}(\mathbf{p}_{q},\mathbf{p}_{-q}^{\star
}),\quad \quad \forall q\in \Omega .  \label{NE_Gtilde}
\end{equation}%
with $\widetilde{{\Phi }}_{q}(\mathbf{p}_{q},\mathbf{p}_{-q})$ is given by (%
\ref{Rae_g_tilde}). But (\ref{NE_Gtilde}) is indeed the definition of NE for
the game $\widetilde{{%
\mathscr{G}%
}}\left(
\boldsymbol{\lambda}%
\right) $ given by (\ref{G_tilde}). Hence, it follows that, for any given $%
\boldsymbol{\lambda}%
>\mathbf{0},$ the set of Nash equilibria of $\widetilde{{%
\mathscr{G}%
}}\left(
\boldsymbol{\lambda}%
\right) $ contains a Pareto optimal point of (\ref{MOP}). However, the
converse is, in general, not true. Observe that the payoff functions $%
\widetilde{{\Phi }}_{q}(\mathbf{p}_{q},\mathbf{p}_{-q})$ are not
quasiconcave in $\mathbf{p}_{q}$ and, hence, the classical results of game
theory providing sufficient conditions for the existence of a NE (as, e.g.,
Theorem \ref{Rosen's Theorem} in Appendix \ref{proof of th
Existence_uniqueness_NE}) cannot be used for the game $\widetilde{{%
\mathscr{G}%
}}\left(
\boldsymbol{\lambda}%
\right) $. Still, the game $\widetilde{{%
\mathscr{G}%
}}\left(
\boldsymbol{\lambda}%
\right) $ admits at least one NE, since a solution to the scalarized MOP (%
\ref{Scalarized_problem}) always exists, for any given $%
\boldsymbol{\lambda}%
$ (the domain $%
{\mathscr{P}}%
_{1}\times \ldots \times
{\mathscr{P}}%
_{Q}$ of (\ref{Scalarized_problem}) is compact and the objective function is
continuous in the interior of the domain).

We show now, that under conditions of Proposition \ref{ConvexRateRegion},
the correspondence between the set of Nash equilibria and the Pareto optimal
points of (\ref{MOP}) becomes one-to-one, and, moreover, any Pareto optimal
point is, for a proper $%
\boldsymbol{\lambda}%
,$ the unique NE of $\widetilde{{%
\mathscr{G}%
}}\left(
\boldsymbol{\lambda}%
\right) $. Assume that conditions of Proposition \ref{ConvexRateRegion} hold
true. Then, the rate region (\ref{RR}) becomes convex, and thus all the
points on its boundary can be achieved solving, for any given $%
\boldsymbol{\lambda}%
>\mathbf{0},$ the scalarized MOP in (\ref{Scalarized_problem}) \cite%
{Aubin-book} and thus, as shown above, using the game $\widetilde{{%
\mathscr{G}%
}}\left(
\boldsymbol{\lambda}%
\right) .$ To complete the proof, we need to show that, in such a case, $%
\widetilde{{%
\mathscr{G}%
}}\left(
\boldsymbol{\lambda}%
\right) $ admits a unique NE, for any given $%
\boldsymbol{\lambda}%
>\mathbf{0}.$

Since the game $\widetilde{{%
\mathscr{G}%
}}\left(
\boldsymbol{\lambda}%
\right) $ satisfies the conditions of Theorem \ref{Rosen's Theorem} (see
Appendix \ref{proof of th Existence_uniqueness_NE}), it is sufficient that
the (sufficient) condition for the uniqueness of the NE (in (\ref{KKT_NE2}))
holds true. Replacing in (\ref{KKT_NE2}) the payoff functions $R_{q}(\mathbf{%
p})$ of ${%
\mathscr{G}%
}$ with the payoff functions $\widetilde{{\Phi }}_{q}\left( \mathbf{p}%
\right) $ of $\widetilde{{%
\mathscr{G}%
}}\left(
\boldsymbol{\lambda}%
\right) $ and summing over $q,$ we obtain the following sufficient condition
for the uniqueness of the NE of the game $\widetilde{{%
\mathscr{G}%
}}\left(
\boldsymbol{\lambda}%
\right) $, $\forall \mathbf{p}^{(1)}\neq \mathbf{p}^{(0)},$ $\mathbf{p}%
_{q}^{(1)},\mathbf{p}_{q}^{(0)}\in
{\mathscr{P}}%
_{q},\forall q\in \Omega ,$\vspace{-0.3cm}
\begin{equation}
\sum\limits_{q=1}^{Q}\left[ \left( \mathbf{p}_{q}^{(1)}-\mathbf{p}%
_{q}^{(0)}\right) ^{T}\nabla _{\mathbf{p}_{q}}\widetilde{{\Phi }}_{q}\left(
\mathbf{p}^{\left( 0\right) }\right) +\left( \mathbf{p}_{q}^{(0)}-\mathbf{p}%
_{q}^{(1)}\right) ^{T}\nabla _{\mathbf{p}_{q}}\widetilde{{\Phi }}_{q}\left(
\mathbf{p}^{\left( 1\right) }\right) \right] >0.  \label{DSC_condition}
\end{equation}%
Under Proposition \ref{ConvexRateRegion}, $\widetilde{{\Phi }}_{q}\left(
\mathbf{p}_{q},\mathbf{p}_{-q}\right) $ is given by (\ref{Rae_g_tilde}),
where each $R_{q}\left( \mathbf{p}_{q},\mathbf{p}_{-q}\right) $ can be
approximated with a negligible error by (\ref{R_low_int}). Hence, each $%
\widetilde{{\Phi }}_{q}\left( \mathbf{p}_{q},\mathbf{p}_{-q}\right) $
becomes a strict concave function with respect to $\mathbf{p}_{q},$ for any
given $\mathbf{p}_{-q}$ and $%
\boldsymbol{\lambda}%
>\mathbf{0}$, and thus it satisfies the following inequality \cite{Boyd}%
\begin{equation}
\left( \mathbf{p}_{q}^{(1)}-\mathbf{p}_{q}^{(0)}\right) ^{T}\nabla _{\mathbf{%
p}_{q}}\widetilde{{\Phi }}_{q}\left( \mathbf{p}_{q}^{\left( 0\right) },%
\mathbf{p}_{-q}^{\left( 0\right) }\right) >\widetilde{{\Phi }}_{q}(\mathbf{p}%
_{q}^{\left( 1\right) },\mathbf{p}_{-q}^{\left( 0\right) })-\widetilde{{\Phi
}}_{q}(\mathbf{p}_{q}^{\left( 0\right) },\mathbf{p}_{-q}^{\left( 0\right) }).
\label{strict_concavity_cond}
\end{equation}%
Using (\ref{R_low_int}) and (\ref{strict_concavity_cond}) we find that
condition (\ref{DSC_condition}) is satisfied. Hence the game $\widetilde{{%
\mathscr{G}%
}}\left(
\boldsymbol{\lambda}%
\right) $ admits, for any $%
\boldsymbol{\lambda}%
>\mathbf{0},$ a unique NE that corresponds to one of the Pareto optimal
points of the MOP (\ref{MOP}).%
\hspace{\fill}\rule{1.5ex}{1.5ex}%
\vspace{-0.3cm}

\section{Proof of Proposition \protect\ref{min_max_Proposition}}

\label{Proof_min_max_Proposition}

The upper bound of $R_{q}(\mathbf{p}_{q}^{\star },\mathbf{p}_{-q}^{\star })$
in (\ref{max_min_ineq}) comes directly from (\ref{Rae_g_tilde}). Thus we
prove only the lower bound of $R_{q}(\mathbf{p}_{q}^{\star },\mathbf{p}%
_{-q}^{\star }).$ To this end, we use {\cite[Corollary 37.6.2]{Rockafellar}}
and {\cite[Lemma 36.2]{Rockafellar}}.

Starting from the definition of NE, at any NE $\mathbf{p}^{\star}$ we have
that
\begin{equation}
R_{q}(\mathbf{p}_{q}^{\star },\mathbf{p}_{-q}^{\star })=\max\limits_{\mathbf{%
p}_{q}\in
{\mathscr{P}}%
_{q}}\ R_{q}(\mathbf{p}_{q},\mathbf{p}_{-q}^{\star })\geq \min\limits_{%
\mathbf{p}_{-q}\in
{\mathscr{P}}%
_{-q}}\max\limits_{\mathbf{p}_{q}\in
{\mathscr{P}}%
_{q}}R_{q}(\mathbf{p}_{q},\mathbf{p}_{-q}).  \label{min_max}
\end{equation}

Inequality (\ref{max_min_ineq}) comes directly from (\ref{min_max})
interchanging the order of the supremum and the infimum operators. To this
end, it is sufficient that, for each $q$, the functions $R_{q}(\mathbf{p}%
_{q},\mathbf{p}_{-q})$ in (\ref{Rate}) and the sets $%
{\mathscr{P}}%
_{q},$ defined in (\ref{admissible strategy set}), and $%
{\mathscr{P}}%
_{-q}%
\triangleq%
{\mathscr{P}}%
_{1}\times \ldots \times
{\mathscr{P}}%
_{q-1}\times
{\mathscr{P}}%
_{q+1}\times \ldots \times
{\mathscr{P}}%
_{Q},$ satisfy conditions required by {\cite[Corollary 37.6.2]{Rockafellar}}
and {\cite[Lemma 36.2]{Rockafellar}}. The set $%
{\mathscr{P}}%
_{1}\times \ldots \times
{\mathscr{P}}%
_{Q}$ is a closed bounded non-empty set and thus conditions in {\cite[%
Corollary 37.6.2]{Rockafellar}} are satisfied. Each function $R_{q}(\mathbf{p%
}_{q},\mathbf{p}_{-q})$ is strictly concave with respect to $\mathbf{p}_{q}$
for any given $\mathbf{p}_{-q}\in
{\mathscr{P}}%
_{-q}$ and convex with respect to $\mathbf{p}_{-q}$ for any given $\mathbf{p}%
_{q}\in
{\mathscr{P}}%
_{q}.$ Hence each $R_{q}(\mathbf{p}_{q},\mathbf{p}_{-q})$ is a
concave-convex function on $%
{\mathscr{P}}%
_{q}\times
{\mathscr{P}}%
_{-q}.$\hspace{\fill}\rule{1.5ex}{1.5ex}

\vspace{-0.2cm} 

\def\baselinestretch{0.3}
\normalsize

\end{document}